\newcommand{\R}{\mathbb{R}}
\newcommand{\E}{\mathbb{E}}
\newcommand{\N}{\mathbb{N}}
\newcommand{\U}{\mathbb{U}}
\newcommand{\ours}{\textbf{HadaHeavy}\xspace}
\newcommand{\oracle}{\textbf{HadaOracle}\xspace}
\newcommand{\ttime}{\textit{time}\xspace}
\newcommand{\mem}{\textit{mem}\xspace}
\newcommand{\hrr}{\textbf{HRR}\xspace}
\newcommand{\freqOracle}{\textbf{FreqOracle}\xspace}
\newcommand{\hashtogram}{\textbf{Hashtogram}\xspace}
\newcommand{\ldp}{\textbf{LDP}\xspace}
\def\Var{\mathbb{V}\text{ar}}
\def\Cov{\mathbb{C}\text{ov}}
\def\dsone{\mathds{1}}
\def\mcalA{\mathcal{A}}
\def\mcalD{\mathcal{D}}
\def\mcalU{\mathcal{U}}
\def\mcalY{\mathcal{Y}}
\def\mcalP{\mathcal{P}}
\def\med{ \textbf{Median} }
\def\mrmH{\mathrm{H}}
\def\mbbS{\mathbb{S}}
\def\mbfX{\mathbf{X}}
\def\mbfS{\mathbf{S}}
\def\pmbc{\pmb{c}}
\def\pmbe{\pmb{e}}
\def\pmbx{\pmb{x}}
\def\pmby{\pmb{y}}
\def\pmbs{\pmb{s}}
\def\pmbt{\pmb{t}}
\def\pmbomega{\pmb{\omega}}
\def\gsize{n_{ \text{grp} }}
\def\eps{\varepsilon}
\def\ceps{ C_\eps }
\def\errOracle{ \textit{Err} }
\def\goodset{ \mathbb{G}\textit{d}\,\mathbb{S}\textit{}{et} }
\def\goodsetA{ \mathbb{G}\textit{d}\,\mathbb{S}\textit{et}_1 }
\def\goodsetB{ \mathbb{G}\textit{d}\,\mathbb{S}\textit{et}_2 }
\def\goodsetC{ \mathbb{G}\textit{d}\,\mathbb{S}\textit{et}_3 }
\def\algoOracle{\mathcal{A}_{ \textit{oracle} }}
\def\algoHrrClient{\mathcal{A}_{ \text{HRR-client} }}
\def\algoHrrServer{\mathcal{A}_{ \text{HRR-server} }}
\def\algoTreeHist{\textbf{TreeHist}}
\def\algoProtSHIST{\textbf{PROT-S-Hist}}
\def\algoBitstogram{\textbf{Bitstogram}}
\def\algoPrivateExpanderSketch{\textbf{PrivateExpanderSketch}}
\theoremstyle{plain}
\newtheorem{thm}{Theorem}[section]
\newtheorem{lemma}[thm]{Lemma}
\newtheorem{corollary}[thm]{Corollary}
\newtheorem{fact}[thm]{Fact}
\theoremstyle{definition}
\newtheorem{definition}[thm]{Definition}
\newtheorem*{fact*}{Fact}
\newtheorem{remark}{Remark}[section]
\newcommand{\tony}[1]{}
\newcommand{\hao}[1]{}
\newcommand{\highlight}[1]{}
\definecolor{Mulberry}{rgb}{0.77,0.29,0.55}
\definecolor{CadmiumOrange}{rgb}{0.93,0.53, 0.18}
\definecolor{ForestGreen}{rgb}{0.13, 0.55, 0.13}
\definecolor{WildStrawberry}{rgb}{0.5, 0.7, 0.2}
\renewcommand{\tony}[1]{{\color{Mulberry}{\textbf{T says:}} \emph{#1}}} 
\renewcommand{\hao}[1]{{\color{ForestGreen}{\textbf{H says:}} \emph{#1}}} 
\renewcommand{\highlight}[1]{{\color{Mulberry}{\textbf{}} #1}}
\begin{document}

\twocolumn[

\aistatstitle{Asymptotically Optimal Locally Private Heavy Hitters via Parameterized Sketches}

\aistatsauthor{ Hao Wu \And Anthony Wirth }

\aistatsaddress{School of Computing \& Information Systems\\ The University of Melbourne \And  School of Computing \& Information Systems\\ The University of Melbourne } ]

\begin{abstract}
    \vspace{-2mm}
    We study the frequency estimation problem under the local differential privacy model.
    Frequency estimation is a fundamental computational question, and differential privacy has become the de-facto standard, with the local version (LDP) affording even greater protection.
    On large input domains, sketching methods and hierarchical search methods are commonly and successfully, in practice, applied for reducing the size of the domain, and for identifying frequent elements.
    It is therefore of interest whether the current theoretical analysis of such algorithms is tight, or whether we can obtain algorithms in a similar vein that achieve optimal error guarantee. 
    
    We introduce two algorithms for LDP frequency estimation.
    One solves the fundamental frequency oracle problem; the other solves the well-known heavy hitters identification problem. 
    As a function of failure probability,~$\beta$, the former achieves optimal worst-case estimation error for every~$\beta$; the latter is optimal when~$\beta$ is at least inverse polynomial in~$n$, the number of users.
    In each algorithm, server running time and memory usage are~$\tilde{O}(n)$ and~$\tilde{O}(\sqrt{n})$, respectively,  while user running time and memory usage are both~$\tilde{O}(1)$.
    Our frequency-oracle algorithm achieves lower estimation error than \citeauthor{BNST17} (NeurIPS \citeyear{BNST17}).
    On the other hand, our heavy hitters identification method improves the worst-case error of \algoTreeHist{} (ibid) by a factor of $\Omega(\sqrt{\log n})$; it avoids invoking error-correcting codes, known to be theoretically powerful, but yet to be implemented efficiently.
\end{abstract}

\section{Introduction}

Frequency estimation is a fundamental computation task, widely applied in data mining and machine learning, e.g., learning users’ preferences~\citep{EPK14}, uncovering commonly used phrases~\citep{D.P.Apple17}, and finding popular URLs~\citep{FPE16}.
We expect entities that collect such data to respect their users' privacy, and there are increasing stringent regulations~\citep{voigt2017eu}.
How can we infer and estimate frequency, and thus improve users' experience, without sacrificing personal privacy?

In answering such questions, local differential privacy (\ldp) becomes a popular data collection model for providing user level privacy protection~\citep{EPK14, FPE16, D.P.Apple17, TKBWW17, DKY17}.
In this model, there is a server and a set~$\mcalU{}$ of $n$ users, each holding an element from some domain $\mcalD{}$ of size $d$. 
No user $u \in \mcalU{}$ wants to share their data $v^{ (u) } \in \mcalD{}$ directly with the server. 
To protect sensitive personal information, they run a local randomizer $\mcalA{}^{ (u) }$ to perturb their data.
The server collects only the perturbed data. 
Formally, the algorithm $A^{ (u) }$ is called $\eps{}$-\emph{local differentially private} ($\eps{}$-\ldp) if its output distribution varies little with the input, as defined thus.

\begin{definition}[$\eps{}$-\ldp~\citep{DR14}] \label{def: Differential Privacy}
    Let $\mcalA: \mcalD \rightarrow \mcalY$ be a randomized algorithm mapping an element in $\mcalD{}$ to $\mcalY{}$. 
    We say $\mcalA$ is $\eps{}$-local differentially private if for all $v, v' \in \mcalD$ and all (measurable) $Y \subseteq \mcalY$,
    $$
        \Pr[ \mcalA (v) \in Y ] \le e^\eps \cdot \Pr [ \mcalA (v') \in Y ]\,.
    $$
\end{definition}  
Aligned with prior art~\citep{BS15, BNST17, BNS19, cormode2021frequency}, in this work, we study two closely related, but distinct, functionalities of frequency estimation under the \ldp model: the frequency oracle and the succinct histogram. 
The relevant parameters are the error threshold,~$\lambda$, and the failure probability,~$\beta$:

\begin{definition}[Frequency Oracle] 
     A frequency oracle, denoted as \textbf{FO}, is an algorithm that provides for every $v \in \mcalD{}$, an estimated $\hat f_\mcalU{} [ v ]$ of the frequency of~$v$, denoted as 
     $
        f_\mcalU{} [ v ] \doteq | \{ u \in \mcalU{} : v^{ (u) } = v \} |
     $, 
     such that $\Pr[ | \hat f_\mcalU{} [ v ] - f_\mcalU{} [ v ] | \ge \lambda ] \le \beta$. 
\end{definition}

\begin{definition}[Succinct Histogram]
   A succinct histogram, denoted as \textbf{S-Hist}, is a set of (element, estimate) pairs $\subseteq \mcalD{} \times \R$, of size $O( n / \lambda)$, such that with probability at least $1 - \beta$: i) $\forall v \in \mcalD{}$, if $f_\mcalU{} [v] \ge \lambda$, $(v, \hat f_\mcalU{} [v] )$ belongs to the set; ii) and if $(v, \hat f_\mcalU{} [ v] )$ is in the set, then $| f_\mcalU{} [v] - \hat f_\mcalU{} [v] | \le \lambda$.
\end{definition}

Each element in the Succinct Histogram set is called a \emph{heavy hitter}.
For a fixed $\eps \le 1$, the goal of algorithm design
for both problems in \ldp model is to minimize the error threshold~$\lambda$, while also limiting server/user running time, memory usage, and communication cost.

A number of frequency oracle algorithms~\citep{Warner65, EPK14, BS15, BNST17, WangBLJ17} have been proposed (see~\cite{cormode2021frequency} for a recent survey).
These algorithms achieve error~$O ( (1 / \eps) \cdot \sqrt{ n \ln ( {1} / {\beta} ) } )$.
They have running time, or memory usage that scale linearly with~$d$, the size of the data domain, and work well when it is small. 
The heavy hitters can be discovered by querying the frequencies of all elements in the domain~$\mcalD{}$.
Via union bound, this achieves error~$O ( (1 / \eps) \cdot \sqrt{ n \ln ( d / {\beta} ) } )$.
It can be shown that these error guarantees are optimal~\citep{BNS19}.
However, consider the scale of modern applications, e.g.,
finding popular URLs with length up to~20 characters\footnote{Valid URL characters include digits ($0$-$9$), letters(A-Z, a-z), and a few special characters ("-" , "." , "\_" , "\texttildelow").}~\citep{FPE16}, which results in a domain of size larger than $10^{36}$.

{\it Sketching methods} for reducing the size of the data domain, and {\it hierarchical searching methods} for avoiding inspecting the frequency of each element, are well known.
The former applies hash functions to map elements from the original domain to a smaller one; the latter views elements as strings defined over some alphabet, and identifies the heavy hitters by one or a few characters each time. 
Due to their simplicity, they are widely applied in designing frequency estimation algorithms, and heuristics are proposed to improve their performance~\citep{BNST17, D.P.Apple17, FPE16, BNST20, WangXYHSS018, WangBLJ17,cormode2021frequency}.
These implementations perform well in practice.

The best known error guarantees of frequency estimation algorithms based on the {\it sketching} and {\it hierarchical searching methods} are provided by the seminal work~\citep{BNST20}.
The frequency oracles, \freqOracle and \hashtogram~\citep{BNST20}, guarantee only an error of $O ( (1 / \eps) \cdot \sqrt{ n \ln ( {n} / {\beta} ) } )$.
The succinct histogram algorithm, \algoTreeHist{}~\citep{BNST20}, guarantees an error of~$O ( (1 / \eps ) \cdot  \sqrt{ n \cdot ( \ln d ) \cdot \ln (n / \beta)  } )$.
These algorithms exhibit low time complexity and memory usage: with server running time $\tilde O(n)$ and memory usage $\tilde O(\sqrt{n})$.
But the error guarantees are sub-optimal.
\begin{quote}
\textbf{Research Question:}
Are the theoretical error guarantees of the algorithms based on sketching and hierarchy methods best possible, or can we obtain algorithms of this type that achieve optimal error guarantee?
\end{quote}

There is another line of research for succinct histogram that relies on error-correcting codes. 
\citeauthor{BS15}~\citep{BS15} were the pioneers, with \algoProtSHIST{}, which  \algoBitstogram{}~\citep{BNST20} subsequently improved upon.
This culminates in the work of \algoPrivateExpanderSketch{}~\citep{BNS19} that achieves the optimal error guarantee~$O ( (1 / \eps) \cdot \sqrt{ n \ln ( d / {\beta} ) } )$.
But due to the sophistication of error-correcting codes, none of these methods has been implemented or significantly deployed.
Indeed, the original paper~\citep{BNST20} that proposed both \algoTreeHist{} and \algoBitstogram{} only implemented \algoTreeHist{}. 
Therefore, there is a gap between the error guarantees of the theoretically best algorithm, and the ones deployed in practice.
Determining whether we can bridge the gap answers our Research Question.

\subsection{Our Contributions}

\vspace{-1mm}
Our work provides positive answers to the questions.
In particular, we: (1) design a frequency oracle, \oracle, based on {\it sketching method} with optimal error guarantee; (2) design a succinct histogram algorithm, \ours, based on {\it hierarchical search} that achieves optimal error guarantee under mild assumption of the failure probability. 

We introduce the martingale method into the analysis of the {\it sketching method}.
We prove that, when the proper sketch is chosen, it can be incorporated into a family of frequency oracles to reduce running time and memory, while maintaining the frequency oracles' error guarantee. 
This leads to \oracle with optimal error~$O ( (1 / \eps) \cdot \sqrt{ n \ln ( {1} / {\beta} ) } )$.
Based on the \oracle, we develop a {\it hierarchical search} algorithm, \ours, that explores a large number of elements at each search step, and achieves error~$O ( (1 / \eps ) \cdot  \sqrt{ n \cdot (\ln d ) \cdot ( 1 + ( \ln (1 / \beta)  \ / \ln n ) ) } )$. 
Consistent with the theory community's view of an algorithm that succeeds with high probability, if the failure probability,~$\beta$, is inverse polynomial, i.e., $\beta = 1 / n^{ O(1) }$, the error matches the lower bound~\citep{BNS19}.
All these algorithms have running time~$\tilde{O}(n)$ and memory usage~$\tilde{O} ( \sqrt{n} )$.
Table~\ref{tab:performance_comparison} summarizes the comparisons.

\begin{table*}[!ht]
\centering
\resizebox{\textwidth}{!}{%
\begin{tabular}{cccccl}
\hline
\multicolumn{2}{|c|}{Performance Metric} &
  \multicolumn{1}{c|}{{\color[HTML]{000000} \begin{tabular}[c]{@{}c@{}}Server \\ Time\end{tabular}}} &
  \multicolumn{1}{c|}{{\color[HTML]{000000} \begin{tabular}[c]{@{}c@{}}Server \\ Mem\end{tabular}}} &
  \multicolumn{1}{c|}{{\color[HTML]{000000} Worst-Case Error}} &
  \multicolumn{1}{l|}{Lower Bound} \\ \hline
\vspace{-3mm} &
   &
   &
   &
   &
   \\ \hline
\multicolumn{1}{|c|}{{\color[HTML]{00009B} }} &
  \multicolumn{1}{c|}{{\color[HTML]{00009B} $\oracle$}} &
  \multicolumn{1}{c|}{$\tilde O(n)$} &
  \multicolumn{1}{c|}{$\tilde O( \sqrt{n} )$} &
  \multicolumn{1}{c|}{$O \left( \frac{1}{ \eps } \sqrt{ n \cdot \ln \frac{1}{\beta} } \right)$} &
  \multicolumn{1}{c|}{} \\ \cline{2-5}
\multicolumn{1}{|c|}{{\color[HTML]{00009B} }} &
  \multicolumn{1}{c|}{{\color[HTML]{00009B} $\hrr$ \citep{NXYHSS16, CKS19}}} &
  \multicolumn{1}{c|}{$\tilde O(d)$} &
  \multicolumn{1}{c|}{$\tilde O( d )$} &
  \multicolumn{1}{c|}{$O \left( \frac{1}{ \eps } \sqrt{ n \cdot \ln \frac{1}{\beta} } \right)$} &
  \multicolumn{1}{c|}{} \\ \cline{2-5}
\multicolumn{1}{|c|}{{\color[HTML]{00009B} }} &
  \multicolumn{1}{c|}{{\color[HTML]{00009B} $\freqOracle$ \citep{BNST17}}} &
  \multicolumn{1}{c|}{$\tilde O(n)$} &
  \multicolumn{1}{c|}{$\tilde O( \sqrt{n} )$} &
  \multicolumn{1}{c|}{$O \left( \frac{1}{ \eps } \sqrt{ n \cdot \ln \frac{n}{\beta} } \right)$} &
  \multicolumn{1}{c|}{} \\ \cline{2-5}
\multicolumn{1}{|c|}{\multirow{-4}{*}{{\color[HTML]{00009B} \rotatebox[origin=c]{90}{$\qquad \textbf{FO}$}}}} &
  \multicolumn{1}{c|}{{\color[HTML]{00009B} $\hashtogram$ \citep{BNST17}}} &
  \multicolumn{1}{c|}{$\tilde O(n)$} &
  \multicolumn{1}{c|}{$\tilde O( \sqrt{n} )$} &
  \multicolumn{1}{c|}{$O \left( \frac{1}{ \eps } \sqrt{ n \cdot \ln \frac{n}{\beta} } \right)$} &
  \multicolumn{1}{c|}{\multirow{-4}{*}{$O \left( \frac{1}{ \eps } \sqrt{ n \cdot \ln \frac{1}{\beta} } \right)$}} \\ \hline
\vspace{-3mm} &
   &
   &
   &
   &
   \\ \hline
\multicolumn{1}{|c|}{{\color[HTML]{036400} }} &
  \multicolumn{1}{c|}{{\color[HTML]{036400} $\ours$}} &
  \multicolumn{1}{c|}{$\tilde O(n)$} &
  \multicolumn{1}{c|}{$\tilde O( \sqrt{n} )$} &
  \multicolumn{1}{c|}{$O \left( \frac{1}{ \eps } \sqrt{ n \cdot (\ln d) \cdot \Big( 1 + \frac{\ln ( 1 / \beta ) }{ \ln n } } \Big) \right)$} &
  \multicolumn{1}{l|}{} \\ \cline{2-5}
\multicolumn{1}{|c|}{{\color[HTML]{036400} }} &
  \multicolumn{1}{c|}{{\color[HTML]{036400} $\algoTreeHist{}$ \citep{BNST17}}} &
  \multicolumn{1}{c|}{$\tilde O(n)$} &
  \multicolumn{1}{c|}{$\tilde O( \sqrt{n } )$} &
  \multicolumn{1}{c|}{$O \left( \frac{1}{ \eps } \sqrt{ n \cdot ( \ln d ) \cdot \ln \frac{n}{\beta} }  \right)$} &
  \multicolumn{1}{l|}{} \\ \cline{2-5}
\multicolumn{1}{|c|}{{\color[HTML]{036400} }} &
  \multicolumn{1}{c|}{{\color[HTML]{036400} $\textbf{PrivateExpanderSketch}$ \citep{BNS19}}} &
  \multicolumn{1}{c|}{$\tilde O(n)$} &
  \multicolumn{1}{c|}{$\tilde O( \sqrt{n} )$} &
  \multicolumn{1}{c|}{$O \left( \frac{1}{ \eps } \sqrt{ n \cdot \ln \frac{d}{\beta} } \right)$} &
  \multicolumn{1}{l|}{} \\ \cline{2-5}
\multicolumn{1}{|c|}{\multirow{-4}{*}{{\color[HTML]{036400} \rotatebox[origin=c]{90}{$\qquad \textbf{S-Hist}$}}}} &
  \multicolumn{1}{c|}{{\color[HTML]{036400} $\algoBitstogram{}$ \citep{BNST17}}} &
  \multicolumn{1}{c|}{$\tilde O(n)$} &
  \multicolumn{1}{c|}{$\tilde O( \sqrt{n} )$} &
  \multicolumn{1}{c|}{$O \left( \frac{1}{ \eps } \sqrt{ n \cdot ( \ln \frac{d}{\beta} ) \cdot \ln \frac{1}{\beta} }  \right)$} &
  \multicolumn{1}{l|}{\multirow{-4}{*}{$O \left( \frac{1}{ \eps } \sqrt{ n \cdot \ln \frac{d}{\beta} } \right)$}} \\ \hline
\end{tabular}
}
\caption{
    Comparison of our frequency oracle ($\oracle$) and succinct histogram ($\ours$) algorithms with the state-of-the-art, where `Mem' stands for `Memory'.
    For all algorithms, each user has~$\tilde O(1)$~memory, takes~$\tilde O(1)$ running time, requires~$\tilde O(1)$ public randomness, and reports $O(1)$ bits to the server. 
}
\label{tab:performance_comparison}
\vspace{-3mm}
\end{table*}

\section{Preliminaries} \label{sec: preliminaries}

\vspace{-1mm}
\subsection{Hadamard Randomized Response}

Our algorithms invoke the frequency oracle, named $\hrr$~\citep{NXYHSS16, CKS19}, as a subroutine.

\begin{fact}[Algorithm \hrr~\citep{NXYHSS16, CKS19}] \label{fact: hrr}
    Let $\mcalU{}$ be a set users each holding an element from some finite domain $\mcalD{}$.
    There exists an $\eps$-locally differentially private frequency oracle, \hrr, such that the following holds.
    Fix some query element~$v \in \mcalD{}$ for \hrr. 
    With probability at least~$1 - \beta'$, \hrr returns a frequency estimate $\hat f_\mcalU{} [ v ]$ satisfying
    $$
    {\small
            \left|  \hat f_\mcalU{} [ v ] - f_\mcalU{} [ v ] \right| \in O \left( ( {1} / {\eps} ) \cdot \sqrt{ | \mcalU{} | \cdot \ln ( {1} / {\beta'} )  } \right)\,.
    }
    $$
    Each user in $\mcalU{}$ requires~$\tilde O(1)$~memory, takes~$\tilde O(1)$ running time and reports only~$1$ bit to the server. 
    The server processes the reports in~$\tilde O( |\mcalU{} | + |\mcalD{}| )$ time and~$O(|\mcalD{}|)$ memory, and answers a query in~$\tilde O(1)$ time. 
    The~$\tilde O$ notation hides logarithmic factors in $| \mcalU{} |$, $| \mcalD{} |$ and $1 / \beta'$. 
\end{fact}

The Appendix includes a proof of this fact.

\subsection{Lower Bounds}

\cite{BNS19} 
provide a lower bound for the succinct histogram problem.

\begin{fact}[\cite{BNS19}] \label{fact:lower bound succinct histogram}
    Let~$\eps{} \in O(1)$. 
    Every~$\eps{}$-\ldp algorithm for estimating the frequencies of all %
    elements from $\mcalD{}$,  must have, with probability at least~$1 - \beta$,
    \vspace{-2mm}
    $$
    {\small
            \max_{v \in \mcalD{} } \left|  \hat f_\mcalU{} [ v ] - f_\mcalU{} [ v ] \right| \in \Omega \left( ( {1} / {\eps} ) \cdot \sqrt{ | \mcalU{} | \cdot \ln ( { |\mcalD{} |  } / {\beta} )  } \right).
    }
    $$
\end{fact}

We can obtain a lower bound for the frequency oracles, via a union bound argument, with~${\beta'} = \beta / |\mcalD{}|$.

\begin{corollary}\label{thm:lower bound frequency oracle}
    Let~$\eps{} \in O(1)$. 
    Every $\eps{}$-\ldp frequency oracle algorithm achieving estimation error~$\lambda$ with probability at least~$1 - \beta'$ must have
    \vspace{-2mm}
    $$
        \lambda \in \Omega \left( ( {1} / {\eps} ) \cdot \sqrt{ | \mcalU{} | \cdot \ln ( { 1 } / {\beta'} )  } \right).
    \vspace{-2mm}
    $$
\end{corollary}

The Appendix includes a proof of this corollary.

\section{Frequency Oracle} \label{sec: frequency oracle}

In this section, to reduce running time and memory usage, we show a framework for incorporating sketching methods into \ldp frequency oracles that satisfy relevant conditions.
When combined with \hrr, this gives arise to a frequency oracle with a state-of-the-art theoretical guarantee. 

Suppose $\algoOracle{}$ is an $\eps{}$-\ldp frequency oracle with:
\vspace{-1mm}
\begin{itemize}
    \item \vspace{-2mm} Server running time:~$\tilde{O} \big( \Phi_{\ttime} ( |\mcalU{}|, d ) \big)$;
    \item \vspace{-2mm} Server memory usage:~$\tilde{O} \big( \Phi_{\mem} ( |\mcalU{}|, d ) \big)$;
    \item \vspace{-2mm} Utility guarantee as follows: for every~$\beta' \in (0, 1)$ and each~$v \in \mcalD{}$, with probability at least $1 - \beta'$, $\algoOracle{}$ returns an estimate $\hat f_\mcalU{} [ v ]$ satisfying
    $
        \left|  \hat f_\mcalU{} [ v ] - f_\mcalU{} [ v ] \right| \in O \left( ( {1} / {\eps} ) \cdot \sqrt{ | \mcalU{} | \cdot \ln ( {1} / {\beta'} )  } \right).
    $
     
\end{itemize}

\vspace{-3mm}
For example, when $\algoOracle{}$ is \hrr, then $\Phi_{\ttime} ( |\mcalU{}|, d )$ $= |\mcalU{}| + d$ and $\Phi_{\mem} ( |\mcalU{}|, d ) = d$.
Below we state the key result of this section.
\begin{thm}[Sketching Framework] \label{thm: sketching framework}
    For every~$\beta'$ $\in (0, 1)$, $\algoOracle{}$ can be converted into a new $\eps{}$-\ldp frequency oracle, which has server running time $\tilde{O} ( \Phi_{\ttime} ( |\mcalU{}|, \sqrt{ |\mcalU{}| } ) )$ and memory usage $\tilde{O} ( \Phi_{\mem} ( |\mcalU{}|, \sqrt{ |\mcalU{}| } ) )$.
    Fix some element~$v \in \mcalD{}$ to be given as a query to the new algorithm.
    With probability at least $1 - \beta'$, it returns an estimate $\hat f_\mcalU{} [ v ]$ satisfying
    \vspace{-1mm}
    $$
        \left|  \hat f_\mcalU{} [ v ] - f_\mcalU{} [ v ] \right| \in O \left( ( {1} / {\eps} ) \cdot \sqrt{ | \mcalU{} | \cdot \ln ( {1} / {\beta'} )  } \right).
    \vspace{-2mm}
    $$
\end{thm}

In particular, when $\algoOracle{}$ is \hrr, we obtain a new $\eps{}$-\ldp frequency oracle, which we call \oracle, with the following properties. 

\begin{corollary}[Algorithm \oracle] \label{thm: our oracle}
    Fix an element~$v \in \mcalD{}$ to be given as a query to \oracle. 
    With probability at least~$1 - \beta'$, \oracle returns a frequency estimate $\hat f_\mcalU{} [ v ]$ satisfying
    $$
    {\small
            \left|  \hat f_\mcalU{} [ v ] - f_\mcalU{} [ v ] \right| \in O \left( ( {1} / {\eps} ) \cdot \sqrt{ | \mcalU{} | \cdot \ln ( {1} / {\beta'} )  } \right).
    }
    $$
    Each user in $\mcalU{}$ requires~$\tilde O(1)$~memory, takes~$\tilde O(1)$ running time and reports only~$1$ bit to the server. 
    The server processes the reports in~$\tilde O( |\mcalU{} | )$ time and~$O( \sqrt{ | \mcalU{} | } )$ memory, and answers a query in~$\tilde O(1)$ time. 
    The~$\tilde O$ notation hides logarithmic factors in $| \mcalU{} |$, $| \mcalD{} |$ and $1 / \beta'$.
\end{corollary}

\subsection{A General Framework} \label{subsec: a general framework}

To reduce the size of the data domain, we now show how to incorporate \emph{sketching}  into \ldp frequency oracles.
Choosing parameters carefully, this leads to significant decrease of the server running time and memory usage without degrading estimation error.
The \emph{sketch} we apply is a variant of the Count-Median sketch~\citep{cormode2020small}, with the framework outlined in Algorithm~\ref{algo: hada-oracle-framework}.

\begin{algorithm}[!ht]
    \caption{Sketching Framework}
    \label{algo: hada-oracle-framework}
    {\it Construction}
    \begin{algorithmic}[1]
        \REQUIRE A set of users $\mcalU{}$; $\eps$-\ldp frequency oracle $\algoOracle{}$; element domain $\mcalD{}$;
        \STATE $k \leftarrow C_K \cdot \ln ( 4 / \beta')$, $m \leftarrow 8e^2 \cdot \sqrt{C_K} \cdot \eps \cdot \sqrt{| \mcalU{} |}$;
        \STATE Partition $\mcalU{}$ into $k$ subsets: $\mcalU{}_1, \ldots, \mcalU{}_k$. 
        \STATE Initialize $k$ pairwise independent hash functions $h_1, \ldots, h_k : \mcalD{} \rightarrow [m]$.
        \FOR{$i \in [k]$}
            \STATE The server broadcasts $h_i$ to all users in $\mcalU{}_i$.
            \STATE Each user $u \in \mcalU{}_i$ replaces their element, $v^{ (u) } \in \mcalD{}$, with a new one $h_i ( v^{ (u) } ) \in [m]$.
            \STATE The server runs an independent copy of~$\algoOracle{}$ on $\mcalU{}_i$, denoted as~$\algoOracle{}^{ (i) }$, for new elements $\{ h_i ( v^{(u) } ) : u \in \mcalU{}_i \}$. 
        \ENDFOR
    \end{algorithmic}
    
    \vspace{1mm}
    {\it Query}
    \begin{algorithmic}[1]
        \REQUIRE Element $v \in \mcalD{}$;  
        \FOR{$i \in [k]$}
            \STATE Query~$\algoOracle{}^{ (i) }$ for the frequency of~$h_i(v)$ over $\{ h_i ( v^{(u) } ) : u \in \mcalU{}_i \}$, denote the returned estimate as $\hat f_{\mcalU{}_i, h_i} [ h_i(v) ]$.
        \ENDFOR
        \STATE $\hat f_\mcalU{}[ v ] \leftarrow \textbf{Median} \left( k \cdot \hat f_{\mcalU{}_i, h_i} [ h_i( v ) ] : i \in [k]  \right)$\,.
        \RETURN $\hat f_\mcalU{}[ v ]$
    \end{algorithmic}
\end{algorithm}

{\bf Domain Reduction.}
By mapping the elements from domain~$\mcalD{}$ to~$[m]$, we want to reduce the domain size from~$d$ to some smaller~$m \in \N^+$; we discuss how to set~$m$ later.
The mapping could be performed via a pairwise independent hash function $h: \mcalD{} \rightarrow [m]$, such that: (1) for each~$v \in \mcalD{}$, it is mapped to $[m]$ uniformly at random; (2) for each pair of distinct~$v, v' \in \mcalD{}$, the probability that they are mapped to the same value is~$1 / m$.
The function~$h$ has succinct description of~$O(\log d)$ bits~\citep{MU17}.
Each user~$u \in \mcalU{}$ is then informed of~$h$ and replaces their data~$v^{ (u) } \in \mcalD{}$ with the new element~$h( v^{(u)} ) \in [m]$.

We then invoke \ldp frequency-oracle 
~$\algoOracle{}$ for estimating frequencies in the new dataset,~$\{ h( v^{(u)} ) : u \in \mcalU{} \}$.
For each~$v \in \mcalD{}$, to obtain an estimate of~$f_\mcalU{} [v]$,
we return the frequency estimate of~$h(v)$ in the new dataset, denoted by~$\hat f_{\mcalU{}, h} [ h(v) ]$,  provided by~$\algoOracle{}$.

{\bf Repetition.} 
For $v \in \mcalD{}$, by the triangle inequality,  estimation error $|\hat f_{\mcalU{}, h} [ h(v) ] - f_\mcalU{} [ v ]|$ is at most:
$$
    | \hat f_{\mcalU{}, h} [ h(v) ] - f_{\mcalU{}, h} [ h(v) ]| + | f_{\mcalU{}, h} [ h(v) ] - f_\mcalU{} [ v ]|\,,
$$
where~$f_{\mcalU{}, h} [ h(v) ] \doteq |\{ v \in \mcalU{} : h ( v^{ (u )} ) = h (v) \}|$ is the frequency of~$h(v)$ in the new dataset. 
The first term inherits from the estimation error of~$\algoOracle{}$;
the second term arises from hash collisions of~$h$. 
The assumption of~$\algoOracle{}$ is that the first term is, with probability~$1 - \beta'$, bounded by $O( (1 / \eps) \cdot \sqrt{ | \mcalU{} | \cdot \ln ( {1} / {\beta'} ) } )$. 

For the second term, we could set $m \in \tilde O( \sqrt{| \mcalU{} |}  / \beta')$. 
Via Markov's inequality, with probability $1 - \beta'$, it is is $\tilde O( \sqrt{| \mcalU{} |})$ . 
However, when $\beta'$ is small, e.g., $1 / |\mcalU{} |^c$, for some constant~$c$, this would be unacceptable.

Alternatively, to bound the second term, we could set $m \in \tilde O( \sqrt{| \mcalU{} |} )$, bounding with constant probability the second term by~$\tilde O( \sqrt{| \mcalU{} |})$.
We independently select~$k$
pairwise independent hash functions~$h_1, \ldots, h_k$.
To run the standard Count-Median Sketch, we would apply each $h_i, i \in [k]$ to each of the users, and for each~$i \in [k]$ invoke $\algoOracle{}$ independently to estimate the elements' frequencies over~$\{ h_i ( v^{ (u) } ) : u \in \mcalU{} \}$.
For $v \in \mcalD{}$, the median over~$i \in [k]$ of $\hat f_{\mcalU{}, h_i } [ h_i(v) ]$ is returned as its frequency estimate.

But this scheme requires each user to participate in~$k$ frequency oracles, which would degrade privacy according to the Composition Theorem~\citep[Theorem~3.14]{DR14}.
This motivates partitioning~$\mcalU{}$ into~$k$ subsets, denoted as $\mcalU{}_1, \ldots, \mcalU{}_k$.
For each $i \in [k]$, the users in subset $\mcalU{}_i$ map their elements with hash function~$h_i$, and an independent copy of $\algoOracle{}$ is applied to $\mcalU{}_i$, to estimate the frequencies of $\{ h_i ( v^{ (u) } ) : u \in \mcalU{}_i \}$.

Here we study two partitioning schemes.
\vspace{-2mm}
\begin{itemize}
    \item {\it Independent partitioning.} Here, each user is put into one of the subsets $\{ \mcalU{}_i : i \in [k] \}$ uniformly and independently at random.
    
    \item {\it Permutation partitioning.} Here we randomly permute~$\mcalU{}$: the first~$|\mcalU{}| / k$ users in the permutation become~$\mcalU{}_1$, the next~$|\mcalU{}| / k$ become~$\mcalU{}_2$, etc.
    \vspace{-2mm}
\end{itemize}

Compared to {\it independent partitioning}, {permutation partitioning} creates subsets of equal size. 
Since each user participates in only one copy of~$\algoOracle{}$, Algorithm~\ref{algo: hada-oracle-framework} is $\eps{}$-\ldp. 
It remains to prove the utility guarantee of Algorithm~\ref{algo: hada-oracle-framework}.

\subsection{Utility Analysis}

{\it Two Kinds of Frequencies.} 
First, consider the frequencies of elements before hashing in each subset.
For each~$i \in [k]$, and~$v \in \mcalD{}$, define $f_{\mcalU{}_i} [ v ] \doteq | \{ u \in \mcalU{}_i : v^{ (u) } = v \} |$ to be the frequency of $v$ in the set $\{ v^{ (u) } : u \in \mcalU{}_i \}$.
It is a random variable, whose randomness inherits from the partitioning. 
But for each partitioning scheme, $\E \big[ f_{\mcalU{}_i} [ v ] \big] = f_{\mcalU{} } [ v ] / k$.

Second, consider the frequencies of the hashed elements in each subset. 
For each~$i \in [k]$ and $w \in [m]$, let $f_{ \mcalU{}_i , h_i} [ w ] \doteq | \{ u \in \mcalU{}_i : h_i( v^{ (u) } ) = w \} |$ be the number of users in $\mcalU{}_i$ whose item are hashed to~$w$.
It is also a random variable, whose randomness arises from the partitioning, and from the hash function~$h_i$. 
It holds that $\E [ f_{ \mcalU{}_i , h_i} [ w ] ] = | \mcalU{}_i | / m$.

{\it Three Kinds of Errors.} For each~$i \in [k]$ and~$v \in \mcalD{}$, let $\hat f_{\mcalU{}_i, h_i} [ h_i(v) ]$ be the estimate of $f_{\mcalU{}_i, h_i} [ h_i (v) ]$ by~$\algoOracle{}$. 
We are interested in the deviation of~$k \cdot \hat f_{\mcalU{}_i, h_i} [ h_i(v) ]$ from $f_\mcalU{} [ v]$, which can be decomposed into three parts:

\begin{enumerate}
    \item $\lambda_1 (i, v) \doteq k f_\mcalU{}_i [ v ] -  f_\mcalU{} [ v ]$\,.
    \item $\lambda_2 (i, v) \doteq k f_{\mcalU{}_i, h_i} [ h_i(v) ] - k f_\mcalU{}_i [ v ] $\,.
    \item $\lambda_3 (i, v) \doteq k \hat f_{\mcalU{}_i, h_i} [ h_i (v) ] - k f_{\mcalU{}_i, h_i} [ h_i(v) ]$\,. 
\end{enumerate}

Define $\errOracle{} (i, v) \doteq k \hat f_{\mcalU{}_i, h_i} [ h_i (v) ] - f_{\mcalU{} } [ v ]$.
Its absolute value is bounded by a triangle inequality
{
    \begin{equation} \label{ineq: error decomposition}
        | \errOracle{} (i, v) | 
        \le 
        |\lambda_1(i, v)| + |\lambda_2(i, v)| + |\lambda_3(i, v)|\,.
    \end{equation}
}
{\it Four Kinds of Good Sets.} We are interested in the following four kinds of \emph{good sets}, that play important roles in bounding the estimation error of Algorithm~\ref{algo: hada-oracle-framework}.
First, define 
$$
    \goodset{}_0 \doteq \{ i \in [k] : k |\mcalU{}_i | \in \Theta( |\mcalU{} |  ) \}\,.
$$
\noindent
Then, for each $v \in \mcalD{}$, define
\vspace{-2mm}
{\small
    $$
        \goodsetA{} (v)
        \doteq \Big\{ i \in [k] : | \lambda_1(i, v) | \in O \Big( \sqrt{ |\mcalU{} | \ln \frac{1}{\beta'} } \Big) \Big\}\,.
    \vspace{-2mm}
    $$
}
Finally, for $j = 2, 3$, define 
\vspace{-2mm}
{\small
    $$
        \goodset{}_j (v) \doteq \Big\{ i \in [k] : | \lambda_j (i, v) | \in O \Big( \frac{1}{ \eps{} } \sqrt{ k  |\mcalU{}_i | \ln \frac{1}{\beta'} } \Big) \Big\}\,.
    \vspace{-2mm}
    $$
}

The following result is the key to the utility guarantee.

\begin{thm} \label{thm: size of the good sets}
    With probability at least~$1 - \beta' / 4$, it holds that $|\goodset{}_0| > (1 - 1 / 8) k$.
    And for each $v \in \mcalD{}$ and each $j \in [3]$, with probability at least $1 - \beta' / 4$, it holds  that $|\goodset{}_j (v) | > (1 - 1 / 8) k$. 
\end{thm}

Theorem~\ref{thm: size of the good sets} holds for both independent partitioning and permutation partitioning. 
The proof is technical, so we defer to the end of this subsection.
For now, we prove the utility guarantee of Algorithm~\ref{algo: hada-oracle-framework}. 

\begin{corollary} \label{corollary: median error guarantee}
    For each $v \in \mcalD{}$, with probability at least~$1 - \beta'$, the~$\hat f_\mcalU{} [v]$ returned by Algorithm~\ref{algo: hada-oracle-framework} satisfies 
    $$
        \left|  \hat f_\mcalU{} [ v ] - f_\mcalU{} [ v ] \right| \in O \left( (1 / \eps{} ) \cdot \sqrt{ | \mcalU{} | \cdot \ln (1 / \beta') } \right).
    $$
\end{corollary}
\begin{proof}[Proof of Corollary~\ref{corollary: median error guarantee}]
    By Theorem~\ref{thm: size of the good sets}, and a union bound, with probability at least~$1 - \beta'$, we have 
    \begin{equation}
        \Big| \big( \cap_{ j \in [3] } \goodset{}_j (v) \big) \cap \goodset{}_0 \Big| > k / 2\,.
        \label{eqn:median}
    \end{equation}
    For each $i \in \big( \cap_{ j \in [3] } \goodset{}_j (v) \big) \cap \goodset{}_0$, since $k |\mcalU{}_i| \in \Theta( |\mcalU{}| )$,  for $j = 2$ or $3$, it holds that 
    {\small
        $$ 
            | \lambda_j (i, v) | \in O \Big( \frac{1}{ \eps{} } \sqrt{ k  |\mcalU{}_i | \ln \frac{1}{\beta'} } \Big) \subset O \Big( \frac{1}{ \eps{} } \sqrt{ |\mcalU{} | \ln \frac{1}{\beta'} } \Big).
        $$
    }
    Therefore, via Inequality~(\ref{ineq: error decomposition}), it holds that 
    $$ \small
        | \errOracle{}(i, v) | \in O \Big( (1 / \eps{} ) \cdot \sqrt{ |\mcalU{} | \ln (1 / \beta') } \Big)\,.
    $$
    We finish by combining this error bound with $\hat f_\mcalU{}[ v ] = \med{}_{i \in [k] }\, k \cdot \hat f_{\mcalU{}_i, h_i} [ h( v ) ]$ and inequality~\eqref{eqn:median}. 
\end{proof}

\begin{proof}[Proof Outline for Theorem~\ref{thm: size of the good sets}]
    This is a sketch of a full proof that appears in the~Appendix. 
    As they are easier, we first bound the sizes of~$\goodsetB{} (v)$ and~$\goodsetC{} (v)$.

    {\it Bounding the Size of} $\goodsetB{}(v)$.
    Observe that for each $i \in [k]$, the (scaled) errors $|\lambda_2 (i, v)| / k = |  f_{\mcalU{}_i, h_i} [ h(v) ] - f_\mcalU{}_i [ v ] |$ result from hash collisions. 
    Since~$h_i$ is pair-wise independent, the expected size of collision is at most~$|\mcalU{}_i| / m$, i.e., 
    $
        \E \left[ \left| \lambda_2 (i, v) \right| / k \right] \le {| \mcalU{}_i | } / { m }. 
    $
    Recall that Algorithm~\ref{algo: hada-oracle-framework} initializes $k = C_K \cdot \ln ( 4 / \beta')$ and $m = 8e^2 \cdot \sqrt{C_K} \cdot \eps \cdot \sqrt{| \mcalU{} |}$ for some constant $C_K$.
    Via Markov's inequality, and that $|\mcalU{}| \ge |\mcalU{}_j|$, we have 
    $$
        \Pr \left[ \left| \lambda_2 (i, v) \right| \ge (1 / \eps{} ) \sqrt{ k |\mcalU{}_i | \ln ( 4 / \beta') } \right] \le 1 / (8e^2)\,. 
    $$
    Therefore, $\Pr[ i \neq \goodsetB{} (v) ]$ is upper bounded by $1 / (8e^2)$.
    As the $h_1,\ldots , h_k$ are chosen independently, the indicators of not being in $\goodsetB{} (v)$ are independent over~$i \in [k]$. 
    By a Chernoff bound, we can prove that, if $k = C_K \cdot \ln (4 / \beta')$ for some large enough constant~$C_K$, then %
    $$
        \Pr[ |\{ i \in [k] : i \neq \goodsetB{} (v)\} | \ge k / 8 ] \le \beta' / 4\,.
    $$   

    {\it Bounding the Size of} $\goodset{}_3 (v)$.
    Via the assumption of~$\algoOracle{}$, for $i \in [k]$, with probability at most $1 / ( {8e^2} )$,
    {\small
    $$
        | \hat f_{\mcalU{}_i, h_i} [ h_i (v) ] - f_{\mcalU{}_i, h_i} [ h_i (v) ] | \notin O \left( \frac{1}{\eps} \sqrt{ | \mcalU{}_i |  \cdot \ln (8e^2) } \right)\,.
    $$
    }
    Scaling both sides by a factor of~$k$, we get $|\lambda_3 (i, v) | \notin O ( ( {1} / {\eps} )  \sqrt{  k^2 | \mcalU{}_j | } )$.
    Replacing one factor~$k$ with $C_K \cdot \ln ({4} / {\beta'} )$, we have 
    $
        |\lambda_3 (i, v) | \notin O ( ( {1} / {\eps} ) \sqrt{ k  | \mcalU{}_j | \cdot \ln ( {1} / {\beta'} ) } ).
    $
    Since the indicators of being in $\goodsetC{} (v)$ are independent, there exists some constant~$C_K$, s.t., 
    $$
        \Pr[ |\{ i \in [k] : i \neq \goodsetC{} (v)\} | \ge k / 8 ] \le \beta' / 4\,.
    $$   

    {\it Bounding the Size of} $\goodset{}_0$. 
    For permutation partition, it holds that $|\mcalU{}_i| = |\mcalU{}| / k$, $\forall i \in [k]$.
    Therefore, $|\goodset{}_0 (v)| = k$.
    For independent partitioning, analyzing $\goodset{}_0$ is not trivial, as the $|\mcalU{}_i|$ are not independent. 
    Therefore, we consider the deviations of the subset sizes as a whole: define 
    $\Delta_0 \doteq \sum_{i \in [k] } \big\vert | \mcalU{}_i | - | \mcalU{} | / k \big\vert$,
    which measures the distance between vector $( |\mcalU{}_1 |, \ldots, |\mcalU{}_k |) \in \R^{ k }$ and its expectation.
    Via the McDiarmid inequality~\citep{MU17}, we prove that, there exists some constant $C_0$, s.t., for every choice of positive integer~$k$, with probability at least~$1 - \beta' / 4$: $\Delta_0 \le C_0 \sqrt{ | \mcalU{} | \ln ( {4} / {\beta'} ) }$. 
    
    It follows that 
    $\small{
            \sum_{i \in [k] } \big\vert k | \mcalU{}_i | - | \mcalU{} | \big\vert \le k C_0 \sqrt{ | \mcalU{} | \ln ( {4} / {\beta'} ) }.
        }
    $
    By a Markov inequality-like argument, the number of $i \in [k]$, such that $\small \big\vert k | \mcalU{}_i | - | \mcalU{} | \big\vert \ge 8 C_0 \sqrt{ | \mcalU{} | \ln ( 4 / \beta' ) }$ is bounded by~$(1 / 8)k$, which finishes the proof. 
    
    {\it Bounding the Size of} $\goodset{}_1 (v)$. 
    Similarly, we consider the deviations as a whole, and define $
            \Delta_1 \doteq \sum_{i \in [k] } \Vert  f_\mcalU{}_i - f_\mcalU{} / k \Vert_2$,
    where $\left\Vert f_\mcalU{}_i - { f_\mcalU{} } / k \right\Vert_2 \doteq \sqrt{ \sum_{ v' \in \mcalD{} } ( f_\mcalU{}_i [ v' ] -   f_\mcalU{} [ v' ] / k )^2 }$.
    Note that $\forall i \in [k]$, $|\lambda_1(i, v)| \le k \left\Vert  f_\mcalU{}_i - { f_\mcalU{} } / k \right\Vert_2$.
    
    Via the martingale concentration inequalities (the McDiarmid inequality and the Azuma–Hoeffding inequality~\citep{MU17, FL06}), we prove for both independent partitioning and permutation, that there exists some constant~$C_1$, s.t., for every choice of positive integer~$k$, with probability at least~$1 - \beta' / 4$: $\Delta_1 \le C_1 \sqrt{ | \mcalU{} | \ln ( 4 / \beta' ) }$. 
    
    Hence,  
    $k \sum_{i \in [k] } \Vert  f_\mcalU{}_i - f_\mcalU{} / k \Vert_2 \le k C_1 \sqrt{ | \mcalU{} | \ln ( 4 / \beta' ) }$. 
    By a counting argument, the number of~$i \in [k]$, such that $k \left\Vert  f_\mcalU{}_i - { f_\mcalU{} } / k \right\Vert_2 \ge 8 C_1 \sqrt{ | \mcalU{} | \ln ( 4 / \beta' ) }$ is bounded by~$(1 / 8)k$, which finishes the proof.
\end{proof}

\subsection{Comparison With Previous Approaches}

The seminal work of~\citet{BNST17} was the first to provide rigorous analysis for $\eps{}$-\ldp frequency oracles with sketching methods.
We differ from their approach thus: 
(1) They apply Count-Sketch instead of Count-Median Sketch.
(2) Similar to our bounding~$|\goodset{}_0|$, they invoke a technique called {\it Poisson Approximation}~\citep{MU17}, which approximates the distribution of~$|\mcalU{}_i|, i \in [k]$ by a set of~$k$ independent Poisson random variables. 
This results in a setting of~$k \in \Theta ( \ln (|\mcalU{}| / \beta' ) )$, and subsequently a sub-optimal utility guarantee of $O \big( ( {1} / {\eps} ) \cdot \sqrt{ | \mcalU{} | \cdot \ln ( { | \mcalU{} | } / {\beta'} ) } \big)$.
(3) Finally, though invoking \hrr as a sub-routine, their work does not exploit the fast Hadamard transform.
Hence their oracle answers a frequency query in $\tilde O( \sqrt{ |\mcalU{}| })$ time, instead of~$\tilde O(1)$.

The recent experimental study by
\cite{cormode2021frequency} 
provides inspiring insights.
They propose to view existing algorithms as different combinations of {\it sketching methods} (for domain reduction) and frequency oracle algorithms (constructed over the reduced domain).  
Their work empirically evaluates performance of the sketches based on different values of~$k$ and~$m$, without corresponding theoretical analysis.

Lastly, we discuss briefly the Count-Median Sketch applied by our algorithm. 
Compared to the standard Count-Median Sketch, which applies each of the~$k$ hash functions to the data of all users, our version partitions the users into~$k$ subsets, and applies each hash function only to a particular subset.
The primary reason for doing so in our work is to protect the privacy of the users.
But this technique can also be applied to applications where there is no requirement for privacy protection.
It reduces the time for processing a user's report from~$O(k)$ to~$O(1)$.
Based on our technique for analysing the size of~$\goodset{}_1$, this introduces an additional error of~$O( \sqrt{|\mcalU{}| \ln (1 / \beta') } )$ to an element's frequency estimate. 
In cases where the allowed error is~$\Omega( \sqrt{|\mcalU{}| \ln (1 / \beta') } )$, this is practical. 

\section{Succinct Histogram} \label{sec: succinct histogram}

We show how to construct a succinct histogram efficiently, based on the \oracle discussed in previous section. 
Our new algorithm, \ours improves the \algoTreeHist{} algorithm by~\cite{BNST17}. 

\begin{thm} \label{thm: hadaheavy}
    Let $\mcalU{}$ be a set of $n$ users each holding an element from some finite domain $\mcalD{}$ of size $d$, $\eps \in (0, 1)$ be the privacy parameter and $\beta \in (0, 1)$ be the specified failure probability. \ours is an $\eps$-\ldp algorithm, based on hierarchical search method, that returns an \textbf{S-Hist} set of (element, estimate) pairs of size~$\tilde{O}( \sqrt{n} )$, %
    where 
    $$
        \lambda \in O ( (1 / \eps ) \cdot  \sqrt{ n \cdot (\ln d ) \cdot ( 1 + ( \ln (1 / \beta)  \ / \ln n ) ) } )\,.
    $$
    Each user in $\mcalU{}$ requires~$\tilde O(1)$~memory, takes~$\tilde O(1)$ running time and reports only~$1$ bit to the server. 
    The server processes the reports in~$\tilde O( n )$ time and~$\tilde{O}( \sqrt{ n } )$ memory. 
    The~$\tilde O$ notation hides logarithmic factors in~$n$, $d$ and~$1 / \beta$.
\end{thm}

\subsection{\ours} \label{subsec: our succinct histogram}

\ours represents element in~$\mcalD{}$ with an alphabet of size~$\sqrt{n}$.

\textbf{Base-$\sqrt{n}$ representation.} Let $\Lambda \doteq \{0, 1, \ldots, \sqrt{n} - 1 \}$ be an alphabet of size $\sqrt{n}$ (for simplicity, we assume that $\sqrt{n}$ is an integer), and $L \doteq 2 \cdot (\log d) / \log n$. 
Each element in $\mcalD{}$ can be encoded as a unique string in $\Lambda^L = \{0, 1, \ldots, \sqrt{n} - 1 \}^L$.  

\textbf{Prefix.} For each~$v \in \mcalD{}, \tau \in [L]$, let $v[1 : \tau]$ be the first~$\tau$ characters in~$v$'s base-$\sqrt{n}$ representation, which is called a \emph{prefix} of $v$. 
Let $\Lambda^0 \doteq \{ \bot \}$ be the set consisting of the empty string. 
For each $\tau \in [L]$, $\Lambda^\tau$ the set of all possible strings of length $\tau$. 
Further, for each string~$\pmbs{} \in \Lambda^\tau$, define the frequency of~$\pmbs{}$ to be~$f_\mcalU{} [ \pmbs{} ] \doteq | \{ u \in \mcalU{} : v^{ (u) }[1 : \tau] = \pmbs{} \} |$.

\textbf{Child Set.} For each $0 \le \tau < L$, and each string $\pmbs{} \in \Lambda^\tau$, the child set of~$\pmbs{}$, denoted as~$\pmbs{} \times \Lambda$, is defined as $\pmbs{} \times \Lambda  \doteq \{ \pmbs{} \circ \pmbt{} : \pmbt{} \in \Lambda \} \subset \Lambda^{\tau + 1}$, where $\pmbs{} \circ \pmbt{}$ is the concatenation of the strings $\pmbs{}$ and $\pmbt{}$. 

The key motivation for the {\it hierarchical searching method} is that, if an element~$v \in \mcalD{}$ is frequent, so is each of its prefixes. 

\textbf{Overview of \ours}. 
The goal of the algorithm is to identify a set of elements in~$\mcalD{}$, called heavy hitters, whose frequencies are no less than some threshold,~$\lambda$ (to be determined later). Clearly, for each~$\tau \in [L]$,
$$
    f_\mcalU{} [ v[1:\tau] ] \ge f_\mcalU{} [v] \ge \lambda\,.
$$
Assuming that we know the exact values of frequencies of the strings.
We can search for heavy hitters as follows. 
First, we initialize a sequence of empty sets $\mcalP{}_1, \ldots, \mcalP{}_L$, which we call {\it search sets}.
Then, we examine all strings in $\Lambda$.
If $\pmbs{} \in \Lambda$ has frequency~$f_\mcalU{} [ \pmbs{}] \ge \lambda$, then we put it into $\mcalP{}_1$.
Next, for each string~$\pmbs{} \in \mcalP{}_1$, we check each string $\pmbs'$ from its child set~$\pmbs{} \times \Lambda$.
If $\pmbs{}'$ has frequency~$f_\mcalU{} [ \pmbs{}' ] \ge \lambda$, then we put it into $\mcalP{}_2$.
In general, for $\tau \ge 2$, we can construct $\mcalP{}_\tau$ after $\mcalP{}_{\tau - 1}$ is constructed.
Finally~$\mcalP{}_L$ should contain all heavy hitters in $\mcalD{}$.

{\bf Frequency Oracles.}
As the exact values of the frequencies of the strings are not available, we want to learn their estimates. 
For each~$\tau \in [L]$, we construct a frequency oracle (\oracle) to estimate the frequencies of the strings in~$\Lambda^\tau$. 
We want to avoid each user participating in every frequency oracle: a user reporting~$L$ times to the server would degenerate the privacy guarantee of the algorithm. 
Therefore, we partition the set of users $\mcalU{}$ into~$L$ subsets~$\U_1, \U_2, \ldots, \U_L$. 
As before, this is performed by either {\emph{independent partitioning}} or {\emph{permutation partitioning}}, introduced in Section~\ref{subsec: a general framework}.
As $L \in O(\log d)$, by Corollary~\ref{thm: our oracle}, the server uses in total~$\tilde O( n )$ processing time and~$\tilde O(\sqrt{ n } )$ processing memory to construct these oracles.  

For each $\tau \in [L]$ and each $\pmbs{} \in \Lambda^\tau$, let~$f_{\U_\tau} [ \pmbs{} ] \doteq | \{ u \in \U_\tau : v^{ (u) }[1 : \tau] = \pmbs{} \} |$ be the frequency of~$\pmbs{}$ in $\U_\tau$, and $\hat f_{\U_\tau} [ \pmbs{} ]$ be its estimate by \oracle. 
As $\E [ L \cdot f_{\U_\tau} [\pmbs{}] ] = f_\mcalU{} [ \pmbs{} ]$, we use~$\hat f_\mcalU{} [ \pmbs{} ] \doteq L \cdot \hat f_{\U_\tau} [ \pmbs{} ]$ as an estimate of $f_\mcalU{} [ \pmbs{} ]$: its estimation error is established by the following theorem, proven in the Appendix.

\begin{thm} \label{thm: hadaheavy estiamtion error of single element}
    For each $\tau \in [L]$, fix some query string~$\pmbs{} \in \Lambda^\tau$ for the frequency estimate. 
    It holds that, with probability~$1 - \beta'$,
    $$
        | \hat f_\mcalU{} [ \pmbs{} ] - f_\mcalU{} [ \pmbs{} ] | \in O( ( 1 / \eps) \sqrt{ n \cdot (\log d) \cdot (\ln (1 / \beta') ) / \ln n } )\,.
    $$
    
\end{thm}
There are two sources of error. First, for each $\tau \in [L]$, the frequency distribution in $\U_\tau$ deviates from its expectation. We bound the~$\ell_2$ distance between the distribution in $\U_\tau$ and its expectation by martingale methods. The second kind of error inherits from \oracle.

{\bf Modified Search Strategy.} Only having access to estimates of the frequencies of the prefixes, we need to modify the criterion for adding elements to $\mcalP{}_\tau$. 

Let $\lambda' \doteq ( C_{\lambda} / \eps) \sqrt{ n \cdot (\log d) \cdot (\ln (n / \beta) ) / \ln n }$, where~$C_\lambda$ is some constant we will determine later.
Define $\mcalP{}_0 \doteq \{ \bot \}$. 
The $\mcalP{}_\tau$ are iteratively constructed according to the following criterion:
$$
    \mcalP{}_\tau \leftarrow \{ \pmbs{} \in \mcalP{}_{\tau - 1} \times \Lambda : \hat f_\mcalU{} [ \pmbs{} ] \ge 2 \lambda' \},\, \forall \tau \in [L]\,,
$$
where 
$
    \mcalP{}_{\tau - 1} \times \Lambda \doteq \{ \pmbs{} = \pmbs{}_1 \circ \pmbs{}_2 : \pmbs{}_1 \in \mcalP{}_{\tau - 1}, \pmbs{}_2 \in \Lambda \}.
$
Finally, after $\mcalP{}_L$ is constructed, its elements are returned as heavy hitters.

\subsection{Analysis}
Since each user participates in only one frequency oracle, the algorithm is $\eps{}$-\ldp.   
It remains to analyze the utility guarantee, running time and memory usage of the {\it modified search strategy}.

\begin{thm} \label{thm: elements in P_tau} 
    Let $\lambda \doteq 3 \cdot \lambda'$.
    With probability at least~$1 - \beta$, it is guaranteed that for each $\tau \in [L]$, and each $ \pmb{s} \in \Lambda^\tau$:\, (1) if $f_\mcalU{} [ \pmb{s} ] \ge \lambda$, then $\pmbs{} \in \mcalP{}_\tau$;\, (2) and for each $\pmbs{} \in \mcalP{}_\tau$, the frequency estimate~$\hat f_\mcalU{} [ \pmb{s} ]$ satisfies $|\hat f_\mcalU{} [ \pmb{s} ] - f_\mcalU{} [ \pmb{s} ] | \le \lambda'$.
    Constructing the $\mcalP{}_\tau$ for all $\tau \in [L]$ has~$\tilde{O}(n)$ running time and~$\tilde{O}(\sqrt{n})$ memory usage.
\end{thm}
We sketch the proof here, details are in the Appendix.

\begin{proof}[Proof Outline for Theorem~\ref{thm: elements in P_tau}]
    We focus on the estimation errors of prefixes from a fixed set. 
    
    \begin{definition}[Candidate Set] \label{def: candidate set}
    Define $\Gamma_0 \doteq \{ \bot \}$ to be the set of the empty string, and for $\tau \in [L]$, $\Gamma_\tau \doteq \{ \pmbs{} \in \Lambda^\tau : f_\mcalU{} [ \pmbs{} ] \ge \lambda' \}$, the set of prefixes of length~$\tau$ whose frequency is at least~$\lambda'$. 
    For $\tau < L$, the child set of~$\Gamma_\tau$ is defined as $ \Gamma{}_\tau \times \Lambda  \doteq \{ \pmbs{} = \pmbs{}_1 \circ \pmbs{}_2 : \pmbs{}_1 \in \Gamma{}_\tau, \pmbs{}_2 \in \Lambda \}$, where $\pmbs{}_1 \circ \pmbs{}_2$ is the concatenation of $\pmbs{}_1$ and $\pmbs{}_2$. 
    The \emph{candidate} set is defined as $\Gamma \doteq \cup_{0 \le \tau < L } \left( \Gamma{}_\tau \times \Lambda \right)$. 
    \end{definition}
    
    Note that for each $\tau \in [L]$, we have $|\Gamma_\tau | \le n / \lambda' \le \sqrt{n}$. 
    Hence, $|\Gamma| = \sum_{0 \le \tau < L } \left| \Gamma{}_\tau \times \Lambda \right| \le L\sqrt{n} \cdot \sqrt{n} \in \tilde O (n)$.  %
    By applying Theorem~\ref{thm: hadaheavy estiamtion error of single element} with $\beta' = \beta / (nL)$ and  the union bound over all~$\pmbs{} \in \Gamma$, we have:
    
    \vspace{-1mm}
    \begin{corollary} \label{cor: error of candidate set}
        There exists some constant $C_\lambda$, such that with probability at least $1 - \beta$, it holds that
        {\small
            $
                \max_{ \pmbs{} \in \Gamma } | \hat f_\mcalU{} [ \pmbs{} ] - f_\mcalU{} [ \pmbs{} ] | 
                \le \lambda'
            $
        }, where 
        \vspace{-2mm}
        $$
            \lambda' = ( C_{\lambda} / \eps) \sqrt{ n \cdot (\log d) \cdot (\ln (n / \beta) ) / \ln n }\,.
        $$
        
    \end{corollary}
    \vspace{-3mm}
    Conditioned all strings in~$\Gamma$ having estimation error~$\lambda'$, we can prove by induction that the {\it modified search strategy} only inspects the frequencies of the strings from~$\Gamma$, in order to construct the $\mcalP{}_\tau, \tau \in [L]$.
    Therefore, the strings added to $\mcalP{}_\tau, \tau \in [L]$ have estimation errors bounded by~$\lambda'$.
    Moreover, for each $\tau \in [L]$, and each $ \pmb{s} \in \Lambda^\tau$, if $f_\mcalU{} [ \pmb{s} ] \ge \lambda = 3 \lambda'$, then $\pmbs{} \in \Gamma$, and we can prove that $\pmbs{}$ will be added to~$\mcalP{}_\tau$.
    The details are included in the of the complete proof in Appendix.
    
    To analyze the running time and memory usage, observe that the {\it modified search strategy} invokes~$L$ frequency oracles.
    By Theorem~\ref{thm: our oracle}, they have total construction time~$\tilde{O}(n)$ and memory usage~$\tilde{O}( \sqrt{n} )$.
    Since each frequency query takes~$\tilde O(1)$ time, and all strings queried belong to~$\Gamma$, the total query time is bounded by $|\Gamma | \in \tilde O(n)$, which finishes the proof.
\end{proof}

\subsection{Comparison With Previous Approaches}

Among the previous algorithms that identify heavy hitters based on hierarchical search,
\algoTreeHist{}~\citep{BNST17} provides the best known error guarantee of~$O ( (1 / \eps ) \cdot  \sqrt{ n \cdot ( \ln d ) \cdot \ln (n / \beta)  } )$.
Our algorithm reduces this error to~$O ( (1 / \eps ) \cdot  \sqrt{ n \cdot (\ln d ) \cdot ( 1 + ( \ln (1 / \beta)  \ / \ln n ) ) } )$.
There are two major differences between our algorithm and \algoTreeHist{}. 
First, the algorithm \algoTreeHist{} considers base-$2$ representation of elements in $\mcalD{}$, instead of base-$\sqrt{n}$ representation. 
Each element in~$\mcalD{}$ is encoded as a binary string of length~$\log d$.
This requires the algorithm to partition the user set $\mcalU{}$ into $\log d$ subsets, which results in smaller subset sizes and larger estimation error than our algorithm. 
Second, the frequency oracle used by \algoTreeHist{} does not exploit the fast Hadamard transform. 
Its frequency oracle answers a frequency query in $\tilde O( \sqrt{n} )$ time. 
In comparison, \oracle answers a query in~$\tilde{O} (1)$ time.

Recent works~\citep{WLJ21, cormode2021frequency} observe that, instead of identifying prefixes of the heavy hitters with increasing lengths, one character at a time, we can identify such prefixes by several characters at each step.
This reduces the number of steps required to reach the full length strings. 
Indeed, using a large alphabet to represent the elements in~$\mcalD{}$ (e.g., an alphabet of size~$\sqrt{n}$, as we proposed) achieves the same effect.
This strategy is effective empirically~\citep{WLJ21, cormode2021frequency}, but there are no theoretical guarantees. 
We believe our work improves understanding of these high-quality experimental results.

\section{Related Work} \label{sec: related work}

\vspace{-1mm}
{\bf Frequency Oracle.}
We briefly document the development of frequency oracles in recent years.
\cite{BS15}
described a frequency oracle that achieves error
$O( (1 / \eps) \cdot \sqrt{ n \log ( 1 / {\beta} ) } )$.
However, it needs $\tilde O(n^2)$ random bits to describe a random matrix,  and answers a query in $O(n)$ time. 
In \citeyear{BNST17}, a similar version with simplified analysis, called \textbf{ExplicitHist}, was studied by 
~\cite{BNST17}. 
\textbf{ExplicitHist} achieves the same estimation error, but requires only $\tilde O(1)$ random bits to describe the random matrix. 
It answers a query in $O(n)$ time. 
Other optimizations have been proposed.
First documented in \citep{NXYHSS16}, and widely used in \ldp literature \citep{BNST17, D.P.Apple17, CKS19}, the \hrr algorithm uses the Hadamard matrix to replace the random matrix without increasing the estimation error.
The matrix does not need to be generated explicitly and each of its entries can be computed in $\tilde O(1)$ time when needed. 
The \hrr answers a query in $O( \min \{n, d \} )$ time, or with pre-processing time~$\tilde O(d)$, answers each query in $\tilde O(1)$ time. 
In \citeyear{AcharyaSZ19}, \citeauthor{AcharyaSZ19} proposed a variant of \hrr which does not rely on public randomness.
The protocol shares similar performance guarantees to the original, but can be modified to support frequency estimation in low privacy regime ($\varepsilon > 1$)~\citep{GhaziG0PV21}.
Finally, \cite{BNST17} also applied Count-Sketch~\citep{CCF02} to reduce the domain size of the elements. Their frequency oracles, \freqOracle and \hashtogram, have server running time $\tilde O(n)$ and memory usage $\tilde O( \sqrt{n} )$. 
But these algorithms have sub-optimal estimation error of $O( (1 / \eps) \cdot \sqrt{ n \log ( {n} / {\beta} ) } )$.

{\bf Succinct Histogram.}
For the succinct histogram problem, \citet{BS15} proposed the first polynomial-time algorithm that has worst-case error $O( (\log^{1.5} ( {1} / {\beta} ) ) \cdot (1 / \eps ) \cdot  \sqrt{ n \log d})$. 
However, it has server running time $\tilde O( n^{2.5} )$ and user time $\tilde O(n^{1.5} )$, which is not practical. 
\cite{BNST17}
proposed two improved algorithms, \algoTreeHist{} and \algoBitstogram{}, which
involve different techniques. \algoTreeHist{} searches for the heavy hitters via a prefix tree;
 \algoBitstogram{} hashes elements into a smaller domain and identifies a noisy version of the heavy hitters.
The recovery of the true heavy hitters relies on error-correcting codes. 
The former algorithm has error $O ( (1 / \eps ) \cdot  \sqrt{ n \cdot ( \log d ) \cdot \log (n / \beta)  } )$, while the latter has error $O ( (1 / \eps ) \cdot  \sqrt{ n \cdot ( \log (d / \beta) ) \cdot \log (1 / \beta)  } )$; each achieves almost-optimal error, but \algoTreeHist{} is inferior to \algoBitstogram{} by a factor of $\sqrt{ \log n }$. Importantly, each algorithm has server time~$\tilde O(n)$ and user time~$\tilde O(1)$.

Due to the sophistication of error-correcting codes,
of the two algorithms presented in~\citep{BNST17}, only \algoTreeHist{} was implemented and experimented.
\cite{BNS19}
further refined \algoBitstogram{} based on the list-recoverable code, which involves identifying spectral clusters in a derived graph. 
Their new algorithm, \algoPrivateExpanderSketch{}~\citep{BNS19}, achieved an optimal error of $O( (1 / \eps ) \cdot  \sqrt{ n \cdot \log (d / \beta) } )$.
Again, this style of algorithm has yet to be implemented.

We observe finally that {\it Sketching methods} and {\it hierarchical searching methods} are not only applied to the \ldp model, but also to other models of DP for frequency estimation, for example the shuffle model~\citep{LWY21, BalleBGN19, GhaziG0PV21}.

\subsubsection*{Acknowledgements}
This research is supported by an Australian Government Research Training Program (RTP) Scholarship.

\bibliographystyle{IEEEtranN}
\bibliography{ref}

\onecolumn
\aistatstitle{Supplementary Material} 

\begin{samepage}
    The supplementary material is organized as follows:
    \begin{enumerate}
        \item In Section~\ref{sec: concentration inequalities}, we list the concentration inequalities applied in our proofs. 
        \item In Section~\ref{appendix: proof for sec: preliminaries}, we provide the detailed proofs for Section~\ref{sec: preliminaries}.
        \item In Section~\ref{appendix: proof for sec: frequency oracle}, we provide the detailed proofs for Section~\ref{sec: frequency oracle}.
        \item In Section~\ref{appendix: proof for sec: succinct histogram}, we provide the detailed proofs for Section~\ref{sec: succinct histogram}. 
    \end{enumerate}

    \section{Concentration Inequalities} \label{sec: concentration inequalities}
    
    \begin{fact}[Chernoff bound \citep{MU17}] \label{fact: prototype chernoff bound} Let $X_1, \ldots, X_n$ be independent 0-1 random variables. Let $X = \sum_{i \in [n]} X_i$ and $\mu = \E[X]$. Then for every $\delta > 0$,
    $$
        \Pr[ X \ge (1 + \delta) \mu ] \le  \left( \frac{ e^\delta}{ (1 + \delta)^{1 + \delta} } \right)^\mu\,.
    $$
    Similarly, for every $\delta \in (0, 1)$
    $$
        \Pr[ X \le (1 - \delta) \mu ] \le  \left( \frac{ e^{-\delta} }{ (1 - \delta)^{1 - \delta} } \right)^\mu\,.
    $$
    \end{fact}

    \begin{fact}[Bernstein’s Inequality \citep{AMS09}] \label{fact: bernstein} Let $X_1, \ldots, X_n$ be independent real-valued random variables such that $|X_i| \le c$ with probability one. Let $S_n = \sum_{i \in [n]} X_i$ and $\Var{} [ S_n ]  = \sum_{ i \in [n] } \Var [X_i^2]$. Then for all $\beta \in (0, 1)$, 
    $$
        \left| S_n - \E[ S_n ] \right| \le \sqrt{ 2 \Var{} [ S_n ] \ln \frac{2}{\beta} } + \frac{2 c \ln \frac{2}{\beta}}{ 3}\,,
    $$
    with probability at least $1 - \beta$. 
    \end{fact}
    
    \begin{fact}[Hoeffding's Inequality \citep{DL01}] \label{fact: hoeffding} Let $X_1, \ldots, X_n$ be independent real-valued random variables such that that $|X_i| \in [a_i, b_i], \forall i \in [n]$ with probability one. Let $S_n = \sum_{i \in [n]} X_i$, then for every~$\eta \ge 0$:
    \begin{align*}
        \Pr[ S_n - \E[ S_n] \ge \eta ] \le \exp \left( - \frac{2 \eta^2 }{ \sum_{i \in [n] } (b_i - a_i)^2 } \right)\,, \text{ and} \\
        \Pr[ \E [ S_n ] - S_n \ge \eta ] \le \exp \left( - \frac{2 \eta^2 }{ \sum_{i \in [n] } (b_i - a_i)^2 } \right)\,.
    \end{align*}
    \end{fact}
    
    \begin{definition}[Martingale \citep{MR95, MU17}] \label{def: martingale}
        A sequence of random variables $Y_0, \ldots, Y_n$ is a martingale with respect to the sequence $X_0, \ldots, X_n$ if, for all $i \ge 0$, the following conditions hold: i) $Y_i$ is a function of $X_0, \ldots, X_i$; ii) $\E[ |Y_i| ] < \infty$; and iii) $\E[ Y_{i + 1} \mid X_0, \ldots, X_i] = Y_i.$
    \end{definition}

    \begin{fact}[Azuma's Inequality \citep{MU17}] \label{fact:Azuma} Let $Y_0, \ldots, Y_n$ be a martingale such that 
    $$
        A_i \le Y_i - Y_{i - 1}  \le A_i + c_i\,, 
    $$
    for some constants~$\{c_i\}$ and for some random variables~$\{A_i\}$ that may be functions of~$Y_0, Y_1, \ldots Y_{i - 1}$. Then for all $t \ge 0$ and every $\eta > 0$, 
    $$
        \Pr[ |Y_t - Y_0 | \ge \eta ] \le 2 \exp \left( - \frac{ 2 \eta^2 }{   \sum_{i \in [t] } c_i^2 }\right)\,.
    $$
    \end{fact}
    
    \begin{fact}[\citep{FL06}] \label{fact: martingale-bernstein}
        Let the sequence of random variables $Y_0, \ldots, Y_n$ be a martingale with respect to the sequence of random variables $X_0, \ldots, X_n$ such that 
        \begin{enumerate}
            \item $\Var{}[ Y_i \mid X_0, \ldots, X_{i - 1} ] \le \sigma_i^2, \forall i \in [n]$\,; \text{ and}
            \item $| Y_i - Y_{i - 1} | \le c, \forall i \in [n]$\,.
        \end{enumerate}
        Then, we have 
        $$
            \Pr[ Y_n - Y_0 \ge \eta ] \le \exp \left( - \frac{\eta^2}{ 2 \left( \sum_{i \in [n] } \sigma_i^2 + c \eta / 3 \right) } \right)\,.
        $$
    \end{fact}

    \begin{definition}[Lipschitz Condition] \label{def:Lipschitz-Condition} A function $\Delta: \R^n \rightarrow \R$ satisfies the Lipschitz condition with bound $c \in \R$ if, for every $i \in [n]$ and for every sequence of values $x_1, \ldots, x_n \in \R$ and $y_i \in \R$, 
    $$
        | \Delta( x_1, x_2, \ldots, x_{i - 1}, x_i, x_{i + 1}, \ldots, x_n ) - \Delta( x_1, x_2, \ldots, x_{i - 1}, y_i, x_{i + 1}, \ldots, x_n ) | \le c.
    $$
    \end{definition}
    
    \begin{fact}[McDiarmid’s Inequality \citep{MU17, TIO17}] \label{fact:McDiarmid-Inequality}
    Let $\Delta: \R^n \rightarrow \R$ be a function that satisfies the Lipschitz condition with bound~$c \in \R$. Let~$X_1, \ldots, X_n$ be independent random variables such that $\Delta(X_1, \ldots, X_n)$ is in the domain of~$\Delta$. Then for all $\eta \ge 0$,
    $$
        \Pr \left[ \Delta(X_1, \ldots, X_n) - \E \left[ \Delta(X_1, \ldots, X_n) \right] \ge \eta \right] \le \exp \left( - \frac{ 2 \eta^2 } { n c^2 } \right)\,.   
    $$
    \end{fact}
    
    \begin{definition}[\citep{TIO17}] \label{def: symmetric function}
        Let $\mbbS{}_n$ be the symmetric group of $[n]$ (i.e., the set of all possible permutations of $[n]$). A function $\Delta : \mbbS{}_n \rightarrow \R$, is called $(n_1, n_2)$-symmetric with respect to permutations if, for each permutation $\pmbx{} \in \mbbS{}_n$, $\Delta( \pmbx{} )$ does not change its value under the change of order of the first~$n_1$ and/or last~$n_2 = n - n_1$ coordinates of $\pmbx{}$. For brevity, we call these functions $(n_1, n_2)$-symmetric functions.
    \end{definition}
    
    \begin{fact}[McDiarmid’s Inequality with respect to permutations~\citep{TIO17}] \label{fact: McDiarmid Inequality Permutation}
    Let $\Delta: \mbbS{}_n \rightarrow \R$ be an $(n_1, n_2)$-symmetric function for which there exists a constant $c > 0$ such that $| \Delta( \pmbx{} ) - \Delta( \pmbx{}_{i, j} ) | \le c$ for all $\pmbx{} \in \mbbS{}_n, i \in \{1, \ldots, n_1 \}, j \in \{ n_1 + 1, \ldots, n \}$, where the permutation~$\pmbx{}_{i,j}$ is obtained from~$\pmbx{}$ by transposition of its~$i^{\text{th}}$ and~$j^{\text{th}}$ coordinates. Let~$\mbfX{}$ be a vector of random permutation chosen uniformly from a symmetric permutation group of the set~$[n]$. Then for every $\eta > 0$, 
    $$
        \Pr[ \Delta( \mbfX{} ) - \E [ \Delta( \mbfX{} ) ] \ge \eta ] \le \exp \left( - \frac{2 \eta^2 }{ n_1 c^2} \left( \frac{n - 1 / 2 }{n - n_1} \right) \left( 1 - \frac{1}{2 \max \{n_1, n_2 \} } \right) \right)\,. 
    $$
    \end{fact}

\end{samepage}

\newpage
\section{Proofs For Section~\ref{sec: preliminaries}} \label{appendix: proof for sec: preliminaries}

This section is organized as follows: 
\begin{enumerate}
    \item In Section~\ref{appendix: proof of fact: hrr}, we provide the detailed proof for Fact~\ref{fact: hrr}. 
    \item In Section~\ref{appendix: proof of thm:lower bound frequency oracle}, we provide the detailed proof for Corollary~\ref{thm:lower bound frequency oracle}. 
\end{enumerate}

\subsection{Hadamard Randomized Response (\hrr) } \label{sec: hrr}
\label{appendix: proof of fact: hrr}

{\bf Fact~\ref{fact: hrr}}~(Algorithm \hrr~\citep{NXYHSS16, CKS19}). 
{\it 
    Let $\mcalU{}$ be a set users each holding an element from some finite domain $\mcalD{}$.
    There exists an $\eps$-locally differentially private frequency oracle, \hrr, such that the following holds.
    Fix some query element~$v \in \mcalD{}$ for \hrr. 
    With probability at least~$1 - \beta'$, \hrr returns a frequency estimate $\hat f_\mcalU{} [ v ]$ satisfying
    $$
    {\small
            \left|  \hat f_\mcalU{} [ v ] - f_\mcalU{} [ v ] \right| \in O \left( ( {1} / {\eps} ) \cdot \sqrt{ | \mcalU{} | \cdot \ln ( {1} / {\beta'} )  } \right)\,.
    }
    $$
    Each user in $\mcalU{}$ requires~$\tilde O(1)$~memory, takes~$\tilde O(1)$ running time and reports only~$1$ bit to the server. 
    The server processes the reports in~$\tilde O( |\mcalU{} | + |\mcalD{}| )$ time and~$O(|\mcalD{}|)$ memory, and answers a query in~$\tilde O(1)$ time. 
    The~$\tilde O$ notation hides logarithmic factors in $| \mcalU{} |$, $| \mcalD{} |$ and $1 / \beta'$. 
}

\subsubsection{The Hadamard Matrix}

The main vehicle for the \hrr algorithm is the Hadamard matrix.
In this section, we provide its definition, and prove some of its important properties. 

\begin{definition}[Hadamard Matrix] \label{def: Hadamard matrix}
    The Hadamard matrix is defined recursively for a parameter,~$m$, that is a power of two: $\small \mrmH_1 = [1]$ and
    $\small
        \mrmH_m = 
            \begin{bmatrix}
                \mrmH_{ m / 2}     &   \mrmH_{ m / 2} \\
                \mrmH_{ m / 2}     &   -\mrmH_{ m / 2}
            \end{bmatrix}
    $.
    For example,
    {\small
            $
                \mrmH_2 = 
                \begin{bmatrix}
                    1 &1 \\
                    1 &-1
                \end{bmatrix}, 
                \text{ and }
                \mrmH_4 = 
                \begin{bmatrix}
                    1 &1  &1     &1 \\
                    1 &-1 &1     &-1 \\
                    1 &1  &-1    &-1 \\
                    1 &-1 &-1    &1 
                \end{bmatrix}.   
            $
    }
    \end{definition}

Observe that a Hadamard matrix is symmetric. 
We list here some other important properties of Hadamard matrix.

\begin{fact}\label{fact:hada-orthogonal-columns}
    The columns of the Hadamard matrix are mutually orthogonal.
\end{fact}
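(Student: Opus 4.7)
The plan is to prove Fact 1 by induction on $m$, a power of two, exploiting the recursive block structure of $\mrmH_m$. The base case $m = 1$ is vacuous since $\mrmH_1 = [1]$ has only one column. For the inductive step, I would assume the columns of $\mrmH_{m/2}$ are mutually orthogonal and then check orthogonality for every pair of distinct columns of $\mrmH_m$.

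Writing $\pmbc_k$ for the $k$-th column of $\mrmH_{m/2}$, I would split the argument into three cases based on the block decomposition. If both columns of $\mrmH_m$ lie in the left block (indices $1, \ldots, m/2$), each has the form $(\pmbc_i; \pmbc_i)^\top$, so their inner product equals $2 \, \pmbc_i^\top \pmbc_j$, which vanishes by the inductive hypothesis when $i \ne j$. If both columns lie in the right block, each has the form $(\pmbc_{i - m/2}; -\pmbc_{i - m/2})^\top$, and the inner product equals $2\, \pmbc_{i - m/2}^\top \pmbc_{j - m/2}$, again zero by induction. The remaining case has one column from the left block and one from the right: the inner product then decomposes as $\pmbc_i^\top \pmbc_{j - m/2} + \pmbc_i^\top(-\pmbc_{j - m/2}) = 0$, which holds regardless of the relation between $i$ and $j - m/2$.

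The only mild subtlety, and the one step where I would be a little careful, is this last cross-block case: the two columns in question may in fact reduce to the same column of $\mrmH_{m/2}$ (when $i = j - m/2$), so the inductive hypothesis does not directly apply. Orthogonality there is forced instead by the sign flip between the top and bottom halves, which produces cancellation automatically. Beyond this observation, the argument is a routine block-matrix computation and no step poses a genuine obstacle.
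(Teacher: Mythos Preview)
Your proposal is correct and follows essentially the same approach as the paper's proof: induction on $m$ using the recursive block structure of $\mrmH_m$, with a case analysis on which blocks the two columns come from. The paper organizes the case split slightly differently (two cases via a unified $c(i), s(i)$ notation rather than your three explicit cases), but the content is the same, including the observation that the cross-block case with matching reduced indices is handled by the sign cancellation rather than the inductive hypothesis.
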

\begin{proof}[\textbf{Proof of Fact \ref{fact:hada-orthogonal-columns} }]
    We prove this by induction on the size of $m$. For $\mrmH_1$, this is trivially true. Suppose this holds for $\mrmH_{ m / 2 }$. For every $i \in [m / 2]$, let $\pmbx{}_i \in \R^{m  / 2}$ be the $i^\text{th}$ column of $\mrmH_{m / 2}$. By the induction hypothesis, for all $i, j \in [m / 2]$, if $i \neq j$, then
    $
        \left< \pmbx{}_{i }, \pmbx{}_{j } \right> = 0.
    $
    
    Consider~$\mrmH_m$. For each $i \in [m]$, define $\pmby{}_i \in \R^m$ to be the $i^{\text{th}}$ column of $\mrmH_m$. Further, define
    $$
        c(i) \doteq \begin{cases}
            i,  & \text{ if } i \le m / 2\,, \\
            i - m / 2 & \text{ if } i > m / 2\,.
        \end{cases} 
        \quad \text{and} \qquad
        s(i) \doteq \begin{cases}
            1,  & \text{ if } i \le (m / 2)\,, \\
            -1  & \text{ if } i > (m / 2)\,.
        \end{cases} 
    $$
    Note that for each $i \in [m]$, we have $c(i) \in [m / 2]$.
    By the definition of $\mrmH_m$, it holds that for all $i, j \in [m]$, 
    $$
        \pmby{}_i = \begin{bmatrix}
            \pmbx{}_{c(i)} \\
            s( i ) \cdot \pmbx{}_{c(i)}
        \end{bmatrix}
        \quad \text{ and }
        \pmby{}_j = \begin{bmatrix}
            \pmbx{}_{c(j)} \\
            s( j ) \cdot \pmbx{}_{c(j)}
        \end{bmatrix}.
    $$
    Hence, 
    $$
        \left< \pmby{}_i, \pmby{}_j \right>  = 
        \left< \pmbx{}_{c(i)}, \pmbx{}_{c(j)} \right>  + s( i ) s( j ) \left< \pmbx{}_{c(i)}, \pmbx{}_{c(j)} \right>.
    $$
    If $i \neq j$, then there are two possible cases: 1) $i \neq j + m / 2$ and $j \neq i + m / 2$, then by the induction hypothesis, $\left< \pmbx{}_{c(i)}, \pmbx{}_{c(j)} \right> = 0$; and 2) $i = j + m / 2$ or $j = i + m / 2$, then $s( i ) s( j ) = -1$. In both cases, $\left< \pmby{}_i, \pmby{}_j \right> = 0$ holds.
\end{proof}

\begin{fact} \label{fact:hada-entry-computation}
    For $0 \le i, j < m$, the $(i + 1,j + 1)$-th entry $\mrmH_m[i + 1, j + 1]$ of the Hadamard matrix can be computed in $O( \log m )$ time (both the rows and columns of $\mrmH_m$ are indexed from $1$ to $m$). In particular, let the vectors $(i)_{\log m}, (j)_{\log m} \in \{0, 1\}^{\log m}$ be the $\log m$-bit binary representation of $i$ and $j$, respectively. Then 
        $
            \mrmH_m[i + 1, j + 1] = (-1)^{ \left< (i)_{\log m}, (j)_{\log m} \right> },
        $
    where $\left< (i)_{\log m}, (j)_{\log m} \right>$ is the dot product between $(i)_{\log m}$ and $(j)_{\log m}$.
\end{fact}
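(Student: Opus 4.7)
The plan is to prove the formula by induction on $\log m$, and then deduce the time bound immediately from the formula.

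For the base case $m = 1$, both binary representations $(i)_0$ and $(j)_0$ are the empty bit-string, the dot product is the empty sum (i.e., zero), so $(-1)^0 = 1 = \mrmH_1[1,1]$. For the inductive step, assume the claim holds for $\mrmH_{m/2}$. Given $0 \le i, j < m$, I would write $i = a \cdot (m/2) + i'$ and $j = b \cdot (m/2) + j'$ where $a, b \in \{0,1\}$ are the leading bits and $0 \le i', j' < m/2$. Reading off the block structure in Definition~\ref{def: Hadamard matrix}, one sees that
$$
    \mrmH_m[i+1, j+1] \;=\; (-1)^{ab} \cdot \mrmH_{m/2}[i'+1, j'+1]\,,
$$
since only the bottom-right block (the case $a = b = 1$) contributes a sign flip. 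Applying the inductive hypothesis to the second factor and observing that $(i)_{\log m}$ is $a$ prepended to $(i')_{\log(m/2)}$ (and similarly for $j$), we get
$$
    \langle (i)_{\log m}, (j)_{\log m} \rangle \;=\; ab + \langle (i')_{\log(m/2)}, (j')_{\log(m/2)} \rangle\,,
$$
and the two factors of $(-1)^{(\cdot)}$ combine to give exactly $(-1)^{\langle (i)_{\log m}, (j)_{\log m} \rangle}$, closing the induction.

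For the time bound, given $i$ and $j$ one can extract their $\log m$-bit binary representations, compute the bitwise AND, and sum the resulting bits, all in $O(\log m)$ arithmetic operations; the parity of that sum determines the sign. The main obstacle — really the only thing to get right — is setting up the bit-decomposition cleanly so that the leading-bit product $ab$ in the recursion aligns with the extra summand in the dot product; once that bookkeeping is in place, the induction is immediate from the recursive definition.
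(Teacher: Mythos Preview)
Your proposal is correct and follows essentially the same inductive argument as the paper: both peel off the leading bit of $i$ and $j$, observe from the block structure that only the lower-right block contributes a sign flip $(-1)^{ab}$, and then combine with the inductive hypothesis on $\mrmH_{m/2}$. The paper verifies the base case for $\mrmH_1$ and $\mrmH_2$ by hand whereas you handle $m=1$ via the empty dot product, but the overall structure is identical.
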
  
\begin{proof}[\textbf{Proof of Fact \ref{fact:hada-entry-computation} }]
    We prove the Theorem by induction. The claim can be verified manually for~$\mrmH_1$ and~$\mrmH_2$. Suppose this holds for~$\mrmH_{m / 2}$; to prove it true for~$\mrmH_m$, recall that the recursive definition     
    $\small
        \mrmH_m = 
            \begin{bmatrix}
                \mrmH_{ m / 2}     &   \mrmH_{ m / 2} \\
                \mrmH_{ m / 2}     &   -\mrmH_{ m / 2}
            \end{bmatrix}
    $ 
    divides~$\mrmH_m$ into four sub-matrices. For $0 \le i, j < m$, let the vectors $(i)_{\log m}, (j)_{\log m} \in \{0, 1\}^{\log m}$ be the $\log m$-bit binary representations of~$i$ and~$j$, respectively. Let $b_i = (i)_{\log m}[1]$ be the highest bit and $\pmbs{}_i = (i)_{\log m}[2: \log m]$ be the last $(\log m) - 1$ bits of $(i)_{\log m}$, respectively. Similarly, we can define $b_j = (j)_{\log m}[1]$ and $\pmbs{}_j = (j)_{\log m}[2: \log m]$. Now, 
    $$\small    
        (i)_{\log m} = \begin{bmatrix}
            b_i,
            \pmbs{}_i
        \end{bmatrix},
        \qquad
        (j)_{\log m} = \begin{bmatrix}
            b_j,
            \pmbs{}_j
        \end{bmatrix}.
    $$
    Consider the $(i + 1, j + 1)$-th entry of $\mrmH_m$. Our goal is prove that 
    $$
        \mrmH_m [i + 1, j + 1] = (-1)^{ \left< (i)_{\log m}, (j)_{\log m}\right> } = (-1)^{ b_i b_j + \left< \pmbs{}_i , \pmbs{}_j \right> }. 
    $$
    Observe that $b_i b_j = 1$ if the $(i + 1, j + 1)$-th entry belongs to the lower right sub-matrix and $b_i b_j = 0$ otherwise. By definition of $\mrmH_m$, the sub-matrix this entry belongs to can be written as $(-1)^{b_i b_j } \mrmH_{m / 2}$. If we also view $\pmbs{}_i, \pmbs{}_j \in \{0, 1\}^{ ( \log m ) - 1 }$ as integers in $[0, m / 2)$, then $(\pmbs{}_i + 1, \pmbs{}_j + 1)$ is the pair of indexes of the entry inside the sub-matrix. By the induction hypothesis, the value of the $(\pmbs{}_i + 1, \pmbs{}_j + 1)$ entry of the matrix $(-1)^{b_i b_j } \mrmH_{m / 2}$ is given by 
    $(-1)^{ b_i b_j + \left< \pmbs{}_i , \pmbs{}_j \right> }$, which finishes the proof. 
\end{proof}

\begin{fact}[Fast Hadamard Transform] \label{fact:fast-hada-transform}
    For all $\pmbx{} \in \R^m$, there is a standard divide-and-conquer algorithm that computes the multiplication $\mrmH_m \,\pmbx{}$ (\emph{equivalently, $\mrmH_m^T \pmbx{}$, as~$\mrmH_m$ is symmetric}) in~$O(m \log m)$ time and~$O(m)$ memory. 
\end{fact}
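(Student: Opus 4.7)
The plan is to apply the recursive definition of the Hadamard matrix (Definition~\ref{def: Hadamard matrix}) directly to derive a divide-and-conquer procedure. Split $\pmbx \in \R^m$ into its top and bottom halves $\pmbx_1, \pmbx_2 \in \R^{m/2}$. By the block form of $\mrmH_m$,
$$
    \mrmH_m \pmbx
    = \begin{bmatrix} \mrmH_{m/2} & \mrmH_{m/2} \\ \mrmH_{m/2} & -\mrmH_{m/2} \end{bmatrix} \begin{bmatrix} \pmbx_1 \\ \pmbx_2 \end{bmatrix}
    = \begin{bmatrix} \mrmH_{m/2} \pmbx_1 + \mrmH_{m/2} \pmbx_2 \\ \mrmH_{m/2} \pmbx_1 - \mrmH_{m/2} \pmbx_2 \end{bmatrix}.
$$
This suggests the algorithm: recursively compute $\pmby_1 \doteq \mrmH_{m/2} \pmbx_1$ and $\pmby_2 \doteq \mrmH_{m/2} \pmbx_2$, then form the top half $\pmby_1 + \pmby_2$ and bottom half $\pmby_1 - \pmby_2$ using $O(m)$ additions and subtractions. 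The base case $m = 1$ is trivial since $\mrmH_1 = [1]$.

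Next I would analyze the running time. Let $T(m)$ be the time to compute $\mrmH_m \pmbx$. The two recursive calls contribute $2 T(m/2)$, while the combine step (entrywise sums and differences over $m/2$ coordinates each) is $O(m)$. Hence $T(m) = 2 T(m/2) + O(m)$, which unrolls to $T(m) = O(m \log m)$ by the Master Theorem (or by direct telescoping over the $\log m$ levels of recursion, each costing $O(m)$).

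For the memory bound, I would show that the recursion can be implemented in place. Overwrite the input array with the recursive outputs $\pmby_1$ and $\pmby_2$ in their original positions, then traverse the two halves in parallel, replacing each pair $(\pmby_1[i], \pmby_2[i])$ by $(\pmby_1[i] + \pmby_2[i], \pmby_1[i] - \pmby_2[i])$ using one temporary scalar. The working array is thus a single length-$m$ buffer; the recursion stack adds at most $O(\log m)$ activation records, each of constant size, for a total of $O(m)$ memory.

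The only mildly non-obvious point is the in-place combine step, which is needed to avoid an extra $O(m)$ of auxiliary storage per recursion level (naively $O(m \log m)$ in total). Once that trick is identified, the rest reduces to verifying the block-matrix identity and solving a textbook recurrence, so I do not anticipate any genuine obstacle.
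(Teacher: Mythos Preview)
Your proposal is correct and mirrors the paper's own proof almost exactly: the same block decomposition of $\mrmH_m$, the same recursive computation of $\pmby_1,\pmby_2$, and the same recurrence $T(m)=2T(m/2)+O(m)$ resolving to $O(m\log m)$. Your in-place combine argument for the $O(m)$ memory bound is actually more detailed than what the paper provides, which simply asserts the memory claim without elaboration.
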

\begin{proof}[\textbf{Proof of Fact \ref{fact:fast-hada-transform} }]
    Let $\pmbx{}_1\in \R^{m / 2}$ be the first $m / 2$ entries, and $\pmbx{}_2 \in \R^{m / 2}$ be the second $m / 2$ entries of $\pmbx{}$ respectively. Define $\pmby{}_1 = \mrmH_{ m / 2} \ \pmbx{}_1$ and $\pmby{}_2 = \mrmH_{ m / 2 } \ \pmbx{}_2$. Then
    $$
        \mrmH_m  \ \pmbx{} = 
            \begin{bmatrix}
                \mrmH_{ m / 2}     &   \mrmH_{ m / 2} \\
                \mrmH_{ m / 2}     &   -\mrmH_{ m / 2} \\
            \end{bmatrix} 
            \begin{bmatrix}
                \pmbx{}_1 \\
                \pmbx{}_2
            \end{bmatrix}
        =
            \begin{bmatrix}
                \mrmH_{ m / 2} \ \pmbx{}_1 + \mrmH_{ m / 2} \ \pmbx{}_2 \\
                \mrmH_{ m / 2} \ \pmbx{}_1 - \mrmH_{ m / 2} \ \pmbx{}_2
            \end{bmatrix}
        = 
            \begin{bmatrix}
                \pmby{}_1 + \pmby{}_2 \\
                \pmby{}_1 - \pmby{}_2
            \end{bmatrix}.
    $$
    Let $T(m)$ be the time to compute $\mrmH_m \ \pmbx{}$. Computing $\pmby{}_1$ and $\pmby{}_2$ takes time $2 \cdot T(m / 2)$. Computing $\pmby{}_1 + \pmby{}_2$ and $\pmby{}_1 - \pmby{}_2$ takes time $O(m)$. Therefore,
    $
        T(m) = 2 \cdot T( m / 2) + O(m). 
    $
    Solving the recursion gives $T(m) = O(m \log m)$. 
\end{proof}

\subsubsection{The Algorithm}

The algorithm relies on a Hadamard matrix $\mrmH_m$ with $m = 2^{ \lceil \log | \mcalD{} | \rceil }$ (hence $| \mcalD{} | \le m < 2 | \mcalD{} |$ ), and assigns each element $v \in \mcalD{}$ the $v^{\text{th}}$ column of $\mrmH_m$. 
For a user $u \in \mcalU{}$, it is said to be assigned to the $v^{\text{th}}$ column, if its element $v^{(u)} = v$. The problem of estimating the frequency of a element $v \in \mcalD{}$ reduces to estimating the number of users in~$\mcalU{}$ assigned to the $v^{\text{th}}$ column. 
When the value of~$m$ is clear from context, we omit subscript~$m$ and write Hadamard matrix~$\mrmH_m$ as~$\mrmH$.

\subsubsection{Client Side} 
The client-side algorithm is described in Algorithm~\ref{algo:hrr-Client}. 
Each user receives (from the server) a row index~$r$ of the Hadamard matrix, and privacy parameter~$\eps$. Its own element~$v$ is the column index. 
It returns the value of $\mrmH{}[r,v]$, but flipped with probability~$1 / (e^\eps + 1)$.  
Multiplying $\mrmH{}[r,v]$ by a Rademacher random variable~$b$ that equals~$1$ with probability $e^\eps / (e^\eps + 1)$ and~$-1$ with probability $1 / (e^\eps + 1)$ achieves the flip.
It returns the one-bit result to the server. 

\begin{algorithm}[H]
    \caption{\hrr-Client $\algoHrrClient$ }
    \label{algo:hrr-Client}
        \begin{algorithmic}[1]
            \REQUIRE Row index $r \in [m]$; privacy parameter $\eps$.
            \STATE Let $v \in \mcalD{}$ be the user's element.
            \STATE Sample $b \in \{-1, 1 \}$, which is $+1$ with probability $e^\eps / (e^\eps + 1)$.  
            \RETURN $\omega \leftarrow b \cdot \mrmH[r,v]$.
        \end{algorithmic}
\end{algorithm}

\begin{fact}[Running Time and Memory Usage] \label{fact: time of HRR client}
        Algorithm~\ref{algo:hrr-Client} has running time~$\tilde{O} ( 1 )$ and memory usage~$\tilde{O} ( 1 )$.
\end{fact}
\vspace{-4mm}
\begin{proof}[\textbf{Proof of Fact~\ref{fact: time of HRR client}}]
    According to Fact~\ref{fact:hada-entry-computation}, the entry~$\mrmH[r, v]$ can be computed in~$O(\log m) \subseteq O(\log | \mcalD{} |) \subseteq \tilde{O} (1)$ time and~$O(\log m) \subseteq O(\log | \mcalD{} |) \subseteq \tilde{O} (1)$ memory. 
\end{proof}

\begin{fact}[Privacy Guarantee] \label{fact: hrr-Client is private}
    Algorithm~\ref{algo:hrr-Client} is $\eps$-locally differentially private.
\end{fact}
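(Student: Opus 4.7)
The plan is to verify the definition of $\eps$-differential privacy (Definition~\ref{def: Differential Privacy}) directly on the randomizer $\algoHrrClient{}$ by computing the output distribution explicitly for each possible input element. Since the output $\omega = b \cdot \mrmH[r, v]$ is a product of two $\{-1, +1\}$-valued quantities, the output space is just $\mcalY{} = \{-1, +1\}$, so the privacy claim reduces to bounding the ratios $\Pr[\algoHrrClient{}(r, \eps) = \omega \mid v]$ and $\Pr[\algoHrrClient{}(r, \eps) = \omega \mid v']$ for each of two output values and each pair $v, v' \in \mcalD{}$.

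First I would fix the row index $r$ (which is public randomness independent of $v$) and observe that for any $v \in \mcalD{}$, the Hadamard entry $\mrmH[r, v] \in \{-1, +1\}$. Combined with the sampling rule $\Pr[b = +1] = e^\eps / (e^\eps + 1)$ and $\Pr[b = -1] = 1 / (e^\eps + 1)$, this yields
\[
\Pr[\omega = \mrmH[r, v] \mid v] = \frac{e^\eps}{e^\eps + 1},
\qquad
\Pr[\omega = -\mrmH[r, v] \mid v] = \frac{1}{e^\eps + 1}.
\]
Thus, regardless of $v$, the probability that $\algoHrrClient{}$ outputs any specific value in $\{-1, +1\}$ lies in $\{1/(e^\eps + 1),\, e^\eps/(e^\eps + 1)\}$.

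Next, for arbitrary $v, v' \in \mcalD{}$ and any $\omega \in \{-1, +1\}$, the worst case ratio is
\[
\frac{\Pr[\algoHrrClient{}(r, \eps) = \omega \mid v]}{\Pr[\algoHrrClient{}(r, \eps) = \omega \mid v']} \le \frac{e^\eps/(e^\eps + 1)}{1/(e^\eps + 1)} = e^\eps.
\]
This bound extends to arbitrary measurable $Y \subseteq \{-1, +1\}$ by summing over at most two atoms, which is immediate. Since the argument holds pointwise for every value of the server-supplied index $r$, it also holds after averaging over $r$, concluding that $\algoHrrClient{}$ satisfies Definition~\ref{def: Differential Privacy} with parameter $\eps$.

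There is no real obstacle: the only subtlety worth flagging is that $r$ is drawn by the server and passed as input, so privacy must be argued conditional on $r$ (which is exactly what the pointwise bound above delivers). No calculations beyond the two line-ratio above are required, and no union bound over $\mcalD{}$ or over users is needed, since local differential privacy is a per-user, per-report guarantee.
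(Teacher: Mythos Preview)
Your proposal is correct and follows essentially the same approach as the paper: both compute the two-point output distribution on $\{-1,+1\}$ and bound the likelihood ratio by $e^\eps/(e^\eps+1)$ over $1/(e^\eps+1)$. Your version is slightly more explicit about conditioning on the server-supplied index $r$ and about extending the atom-wise bound to arbitrary $Y$, but the argument is the same.
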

\vspace{-4mm}
\begin{proof}[\textbf{Proof of Fact~\ref{fact: hrr-Client is private}}]
    We need to prove that the output distribution of $\algoHrrClient{}$ deviates little with the value~$v \in \mcalD{}$, the user's element. 
    To explicitly state the dependence of $\algoHrrClient$ on~$v$, we write its output as $\algoHrrClient{}(r, \eps; v)$. 
    It suffices to prove that~$\forall v, v' \in \mcalD{}$, the output distributions of $\algoHrrClient{}(r, \eps; v)$ and $\algoHrrClient{}(r, \eps; v')$ are similar. 
    There are only two possible outputs, namely,~$\{ - 1, 1\}$. Let~$b$ and~$b'$ be the Rademacher random variables generated by $\algoHrrClient{}(r, \eps; v)$ and $\algoHrrClient{}(r, \eps; v')$ respectively. 
    Then 
    \begin{align*}
        &\Pr[ \algoHrrClient{}(r, \eps; v) = 1 ] = \Pr[ b \cdot \mrmH[ r, v ] = 1] \le e^\eps / (e^\eps + 1)\,, \\
        &\Pr[ \algoHrrClient{}(r, \eps; v') = 1 ] = \Pr[ b' \cdot \mrmH[ r, v' ] = 1] \ge 1 / (e^\eps + 1)\,.
    \end{align*}
    Hence, 
    $
        \Pr[ \algoHrrClient{}(r, \eps; v) = 1 ] \le e^\eps \cdot \Pr[ \algoHrrClient{}(r, \eps; v') = 1 ].
    $
    By Definition~\ref{def: Differential Privacy}, the algorithm $\algoHrrClient{}$ is $\eps$-differentially private. 
\end{proof}

\subsubsection{Server Side}
The server-side algorithm is described in Algorithm~\ref{algo:hrr-server}. 
Its input comprises the set of users,~$\mcalU{}$, their elements' domain,~$\mcalD{}$, and privacy parameter~$\eps$. The server maintains a vector $\pmbomega{} \in \R^m$, which is initialized with all zeros. For each user $u \in \mcalU{}$, the server samples an integer $r^{ (u) } \in [m]$ independently and uniformly at random. 
Then it invokes $\algoHrrClient(r^{ (u) }, \eps)$ by sending $r^{(u)}$ and $\eps$ to user $u$. 
On receiving user $u$'s response,~$\omega^{ (u)}$, the server increases the $( r^{ (u) } )^{\text{th}}$ entry of $\pmbomega$ by $(e^\eps + 1) / (e^\eps - 1) \cdot \omega^{ (u) }$. 
Finally, it returns a vector $\hat f_\mcalU{} = \mrmH^T \,\pmbomega{} \in \R^m$. 
Note that the dimension of $\hat f_\mcalU{}$ is $m$, which could be larger than $|\mcalD{}|$: we use only the first $|\mcalD{}|$ entries of $\hat f_\mcalU{}$. 

\begin{algorithm}[!ht]
    \caption{\hrr-Server $\algoHrrServer{}$}
    \label{algo:hrr-server}
    \begin{algorithmic}[1]
        \REQUIRE A set of users $\mcalU{}$; element domain $\mcalD{}$; privacy parameter $\eps$. 
        \STATE Set $m \leftarrow 2^{ \lceil \log | \mcalD{} | \rceil }$,  $\pmbomega \leftarrow \{0 \}^m$.  
        \FOR{$u \in \mcalU{}$ }
            \STATE $r^{ (u)} \leftarrow$ uniform random integer from $[m]$.
            \STATE $\omega^{ (u)} \leftarrow \algoHrrClient (r^{ (u)}, \eps)$.
            \STATE $\pmbomega [ r^{ (u) } ] \leftarrow \pmbomega [ r^{ (u) } ] + (e^\eps + 1) / (e^\eps - 1) \cdot \omega^{ (u) }$.
        \ENDFOR
        \RETURN $\hat f_\mcalU{} \leftarrow \mrmH^T \,\pmbomega$.
    \end{algorithmic}
\end{algorithm}

\begin{fact}[Running Time and Memory Usage] \label{fact:hrr server time and memory}
    Algorithm~\ref{algo:hrr-server} has running time~$\tilde O(| \mcalD{} | + | \mcalU{} |)$ and memory usage~$O(| \mcalD{} |)$. 
\end{fact}
\begin{proof}[{\bf Proof of Fact~\ref{fact:hrr server time and memory}}]
    The server needs memory of size $O(m) \subseteq O(| \mcalD{} |)$ to store the vector $\pmbomega$. 
    Processing responses from the users in $\mcalU{}$ takes time $O(| \mcalU{} |)$. By Fact~\ref{fact:fast-hada-transform}, $\mrmH{}^T \pmbomega{}$ can be computed in $O(m \log m) \subseteq \tilde O( | \mcalD{} |)$ time, with memory usage $O(m) \subseteq O(| \mcalD{} |)$. 
    Hence the overall running time is $\tilde O(| \mcalD{} | + | \mcalU{} |)$ and memory usage is $O(| \mcalD{} |)$. 
\end{proof}

\begin{remark} \label{remark: differnt time of hrr}
    Via the fast Hadamard transform (Fact~\ref{fact:fast-hada-transform}), the server computes $\hat f_\mcalU{} \leftarrow \mrmH^T \,\pmbomega$ (Algorithm~\ref{algo:hrr-server}, line 6) in $\tilde O( |\mcalD{} | )$ time. 
    Then for each $v \in \mcalD{}$, if its frequency is queried, the server can return $\hat f_\mcalU{} [v]$ in $O(1)$ time. 
    There is another version of \hrr that omits line~6. 
    It has server running time $\tilde O( |\mcalU{}| )$. However, when it answers a frequency query for some $v \in \mcalD{}$, it needs to compute $\hat f_\mcalU{} [v] = (\mrmH^T \,\pmbomega ) [v]$ on the fly, which requires $\tilde O( \min \{ m, | \mcalU{} | \} )$ time.
\end{remark}

\subsubsection{Utility Guarantee}

We have proven that the client-side algorithm is~$\eps{}$-locally differentially private, and analyzed the running time and memory usage of both client-side and server-side algorithms. 
In this section, we discuss their utility guarantees. 

\begin{fact}[Expectation] \label{fact: unbiased estimator of hrr server}
    Let $\hat f_\mcalU{}$ be the estimate vector returned by Algorithm~\ref{algo:hrr-server}. Then for all~$v \in \mcalD{}$, $\hat f_\mcalU{} [v ]$ is an unbiased estimator of~$f_\mcalU{} [v ]$. 
\end{fact}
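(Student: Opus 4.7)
The plan is to compute $\E[\hat f_\mcalU[v]]$ directly, by writing $\hat f_\mcalU[v]$ as a sum of per-user contributions and then using (i) independence of the Rademacher flip from the row index, and (ii) orthogonality of Hadamard columns to isolate the term $f_\mcalU[v]$.

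First I would unfold the matrix multiplication on line~6 of Algorithm~\ref{algo:hrr-server}. Using the symmetry of $\mrmH$, for any $v \in \mcalD$,
\[
    \hat f_\mcalU[v] \;=\; (\mrmH^T \pmbomega)[v] \;=\; \sum_{r \in [m]} \mrmH[r,v]\,\pmbomega[r] \;=\; \frac{e^\eps+1}{e^\eps-1}\sum_{u \in \mcalU} \omega^{(u)}\,\mrmH[r^{(u)}, v],
\]
since $\pmbomega[r]$ aggregates exactly the scaled responses of users whose sampled row is $r$. Substituting $\omega^{(u)} = b^{(u)} \cdot \mrmH[r^{(u)}, v^{(u)}]$ gives $\hat f_\mcalU[v] = \frac{e^\eps+1}{e^\eps-1}\sum_{u} b^{(u)}\,\mrmH[r^{(u)}, v^{(u)}]\,\mrmH[r^{(u)}, v]$.

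Next I would take expectations term by term. The Rademacher $b^{(u)}$ is independent of $r^{(u)}$ (and of every other user's randomness), and by construction $\E[b^{(u)}] = \frac{e^\eps}{e^\eps+1} - \frac{1}{e^\eps+1} = \frac{e^\eps-1}{e^\eps+1}$. Conditioning on $v^{(u)}$ (which is a fixed value, not a random variable) and averaging over the uniform $r^{(u)} \in [m]$,
\[
    \E\bigl[\mrmH[r^{(u)}, v^{(u)}]\,\mrmH[r^{(u)}, v]\bigr] \;=\; \frac{1}{m}\sum_{r \in [m]} \mrmH[r, v^{(u)}]\,\mrmH[r, v] \;=\; \frac{1}{m}\,\langle \mrmH_{\cdot,\,v^{(u)}},\, \mrmH_{\cdot,\,v}\rangle,
\]
the inner product of two columns of $\mrmH$. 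By Fact~\ref{fact:hada-orthogonal-columns} this inner product vanishes when $v^{(u)} \neq v$, and equals $m$ when $v^{(u)} = v$ (every entry is $\pm 1$, so a column has squared norm $m$). Hence $\E[\omega^{(u)}\,\mrmH[r^{(u)},v]] = \frac{e^\eps-1}{e^\eps+1}\cdot \dsone[v^{(u)} = v]$.

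Combining these two steps, the prefactor $\frac{e^\eps+1}{e^\eps-1}$ exactly cancels $\frac{e^\eps-1}{e^\eps+1}$, leaving
\[
    \E[\hat f_\mcalU[v]] \;=\; \sum_{u \in \mcalU} \dsone[v^{(u)} = v] \;=\; f_\mcalU[v],
\]
which is the claim. There is no real obstacle in this proof; the only point requiring care is keeping track of which randomness is being averaged where (the flip $b^{(u)}$ versus the row index $r^{(u)}$), and noticing that the scaling constant $\frac{e^\eps+1}{e^\eps-1}$ in Algorithm~\ref{algo:hrr-server} is precisely the debiasing factor demanded by the Rademacher step in Algorithm~\ref{algo:hrr-Client}.
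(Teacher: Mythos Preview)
Your proof is correct and follows essentially the same approach as the paper: both write $\hat f_\mcalU[v]$ as a per-user sum of the form $\ceps\sum_u b^{(u)}\,\mrmH[r^{(u)},v^{(u)}]\,\mrmH[r^{(u)},v]$, then use independence of $b^{(u)}$ from $r^{(u)}$ together with $\E[b^{(u)}]=1/\ceps$ and the orthogonality of Hadamard columns (Fact~\ref{fact:hada-orthogonal-columns}) to reduce each term to $\dsone[v^{(u)}=v]$. The only cosmetic difference is that the paper introduces the column-vector notation $\pmbc^{(u)}=\mrmH\,\pmbe_{v^{(u)}}$ and $\pmbc_v$, whereas you work directly with the matrix entries.
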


\begin{proof}[\textbf{Proof of Fact~\ref{fact: unbiased estimator of hrr server}}]
    For each $i \in [m]$, let $\pmbe{}_i$ be the $i^{\text{th}}$ standard basis vector. 
    For each user $u \in \mcalU{}$, let $\pmbc^{ (u)} \doteq \mrmH \ \pmbe{}_{ v^{ (u) } }$ be the column assigned to $u$, i.e., the $(v^{(u)})^\text{th}$ column of the Hadamard matrix $\mrmH$, and let $b^{(u)}$ be the Rademacher random variable generated when algorithm $\algoHrrClient{}$ is invoked for user $u$. 
    Let $\ceps{} \doteq (e^\eps + 1) / (e^\eps - 1)$. 
    By algorithm $\algoHrrClient$, the response from user $u$ can be expressed as $\omega^{ (u) } = b^{ (u) } \cdot \pmbc^{ (u) } [ r^{ (u) } ]$. 
    When the server receives the response~$\omega^{(u)}$, the update of~$\pmbomega$ can be rewritten as 
    $$
        \pmbomega{} \leftarrow \pmbomega{} +  \ceps{} \cdot  \omega^{ (u) } \cdot \pmbe{}_{ r^{ (u) } }\,. 
    $$
    Hence $\pmbomega{} = \sum_{u \in \mcalU{} } \ceps{} \cdot  \omega^{ (u) } \cdot \pmbe{}_{ r^{ (u) } } = \sum_{u \in \mcalU{} } \ceps{} \cdot b^{ (u) } \cdot \pmbc^{ (u) } [ r^{ (u) } ] \cdot \pmbe{}_{ r^{ (u) } }$, where $\pmbe{}_{ r^{ (u) } }$ is the $(r^{(u)})^\text{th}$ standard basis vector. 
    Let $\pmbc_{ v } \doteq \mrmH \ \pmbe{}_{ v }$ be the $v^{\text{th}}$ column of $\mrmH$. 
    Since $\hat f_\mcalU{} = \mrmH^T \ \pmbomega{}$, we have
    $$
        \hat f_\mcalU{} [ v ] = \left< \pmbc{}_{v },  \pmbomega{} \right> 
        = \sum_{u \in \mcalU{} } \ceps{} \cdot b^{ (u) } \cdot \pmbc^{ (u) } [ r^{ (u) } ] \cdot \left< \pmbc{}_{v }, \pmbe{}_{ r^{ (u) } } \right> 
        = \sum_{u \in \mcalU{} } \ceps{} \cdot b^{ (u) } \cdot \pmbc^{ (u) } [ r^{ (u) } ] \cdot \pmbc{}_{v } [ r^{ (u) } ]\,.
    $$
    By the independence of~$b^{ (u) }$ and~$r^{ (u) }$, and by linearity of expectation, we have 
    $$
        \E \left[ \hat f_\mcalU{} [ v ] \right] 
        = \sum_{u \in \mcalU{} } \ceps{} \cdot \E[ b^{ (u) } ] \cdot \E \left[ \pmbc^{ (u) } [ r^{ (u) } ] \cdot \pmbc{}_{v } [ r^{ (u) } ] \right] 
        = \sum_{u \in \mcalU{} }  \E \left[ \pmbc^{ (u) } [ r^{ (u) } ] \cdot \pmbc{}_{v } [ r^{ (u) } ] \right].
    $$
    The second equality follows from
    $\E[ b^{ (u) } ] = 1 \cdot e^\eps / ( e^\eps + 1 ) + (-1) \cdot 1 / ( e^\eps + 1 ) = 1 / \ceps{}\,$.
    
    As~$r^{ (u) }$ is sampled uniformly from~$[m]$, it holds that 
    $$
        \sum_{u \in \mcalU{} }  \E \left[ \pmbc^{ (u) } [ r^{ (u) } ] \cdot \pmbc{}_{v } [ r^{ (u) } ] \right] 
        = \sum_{u \in \mcalU{} } \frac{1}{m } \cdot  \sum_{j = 1}^m \left( \pmbc{}^{ (u) } [j] \cdot \pmbc{}_{v } [j] \right) 
        = \sum_{u \in \mcalU{} } \frac{1}{m }  \cdot  \left< \pmbc{}^{ (u) }, \pmbc{}_{v } \right> 
        = \sum_{u \in \mcalU{} } \dsone{} [ v = v^{ (u) } ]\,.
    $$
    
    The final equality follows from the orthogonality of columns of $\mrmH$, and that $\left< \pmbc{}_{v }, \pmbc{}_{v } \right> = m$. 
    We conclude that 
    $$
        \E \left[ \hat f_\mcalU{} [ v ] \right] = \sum_{u \in \mcalU{} } \dsone{} [ v = v^{ (u) } ] = f_\mcalU{} [ v ].
    $$ 
\end{proof}

\begin{fact}[Confidence Interval] \label{fact: confidence interval of hrr}
    For a fixed $v \in \mcalD{}$ and for all $\beta' \in (0, 1)$, with probability at least $1 - \beta'$, it holds that 
    \begin{align*}
        | \hat f_\mcalU{} [v ] - f_\mcalU{} [v ] | \in O\left( (1 / \eps) \cdot \sqrt{ | \mcalU{} | \cdot \ln ( {1} / {\beta'} ) } \right)\,.
    \end{align*}
    
\end{fact}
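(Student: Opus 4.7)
The plan is to apply a standard tail bound to the expression for $\hat f_\mcalU{}[v]$ derived in the proof of Fact~\ref{fact: unbiased estimator of hrr server}. Recall that, using the notation $\ceps{} = (e^\eps+1)/(e^\eps-1)$, we established
$$
    \hat f_\mcalU{}[v] = \sum_{u \in \mcalU{}} \ceps{} \cdot b^{(u)} \cdot \pmbc^{(u)}[r^{(u)}] \cdot \pmbc_{v}[r^{(u)}],
$$
where the per-user random triples $(b^{(u)}, r^{(u)})_{u \in \mcalU{}}$ are mutually independent, and that $\E[\hat f_\mcalU{}[v]] = f_\mcalU{}[v]$. Define
$$
    Z^{(u)} \doteq \ceps{} \cdot b^{(u)} \cdot \pmbc^{(u)}[r^{(u)}] \cdot \pmbc_{v}[r^{(u)}], \qquad S \doteq \sum_{u \in \mcalU{}} Z^{(u)} = \hat f_\mcalU{}[v].
$$
The first step is to observe that $\{Z^{(u)}\}_{u \in \mcalU{}}$ is a collection of \emph{independent} random variables, each taking values in the interval $[-\ceps{}, \ceps{}]$, since the entries of $\mrmH$ are $\pm 1$ and $b^{(u)} \in \{-1, +1\}$.

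Next, I would apply Hoeffding's inequality (Fact~\ref{fact: hoeffding}) to $S - \E[S] = \hat f_\mcalU{}[v] - f_\mcalU{}[v]$. With $b_u - a_u = 2\ceps{}$ for every $u$, Hoeffding's bound yields
$$
    \Pr\bigl[\,|\hat f_\mcalU{}[v] - f_\mcalU{}[v]| \ge \eta\,\bigr] \le 2 \exp\!\left(-\frac{2 \eta^2}{|\mcalU{}| \cdot 4\ceps{}^2}\right) = 2 \exp\!\left(-\frac{\eta^2}{2|\mcalU{}| \ceps{}^2}\right).
$$
Setting this bound to at most $\beta'$ and solving gives $\eta = \ceps{} \cdot \sqrt{2|\mcalU{}| \ln(2/\beta')}$.

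The final step is to translate the factor of $\ceps{}$ into $1/\eps$: for $\eps \in (0,1]$ one has $\ceps{} = (e^\eps+1)/(e^\eps-1) \le 3/\eps$ (by elementary calculus, since $e^\eps - 1 \ge \eps$ and $e^\eps + 1 \le 3$). Substituting this in yields the claimed bound $|\hat f_\mcalU{}[v] - f_\mcalU{}[v]| \in O\bigl((1/\eps)\sqrt{|\mcalU{}| \ln(1/\beta')}\bigr)$. There is no real obstacle here: the entire argument is a clean application of Hoeffding to a sum of $|\mcalU{}|$ independent and uniformly bounded summands, and the only point requiring a small amount of care is the independence, which follows because each user's randomness $(b^{(u)}, r^{(u)})$ is drawn independently of every other user's. (One could alternatively invoke Bernstein's inequality via Fact~\ref{fact: bernstein}, using the variance bound $\Var[Z^{(u)}] \le \ceps{}^2$, but Hoeffding already delivers the optimal asymptotic form.)
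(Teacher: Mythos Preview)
Your proposal is correct and follows essentially the same argument as the paper: both define $Z^{(u)} = \ceps{}\, b^{(u)}\, \pmbc^{(u)}[r^{(u)}]\, \pmbc_v[r^{(u)}]$, note these are independent and bounded in $[-\ceps{}, \ceps{}]$, and apply Hoeffding's inequality (Fact~\ref{fact: hoeffding}) to obtain $\eta = \ceps{}\sqrt{2|\mcalU{}|\ln(2/\beta')}$. Your added step bounding $\ceps{} \le 3/\eps$ for $\eps \in (0,1]$ is a small elaboration the paper leaves implicit.
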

    
\begin{proof}[\textbf{Proof of Fact~\ref{fact: confidence interval of hrr}}]
    For each $u \in \mcalU{}$, define $Z^{ (u) } \doteq \ceps{} \cdot b^{ (u) } \cdot \pmbc^{ (u) } [ r^{ (u) } ] \cdot \pmbc{}_{v } [ r^{ (u) } ]$. The $\{Z^{ (u) }\}$ are independent random variables in the range of $[-\ceps{}, \ceps{}]$. As $\hat f_\mcalU{} [v ] = \sum_{ u \in \mcalU{} } Z^{ (u) }$ and $\E \left[ \hat f_\mcalU{} [ v ] \right] = f_\mcalU{} [ v ]$, by Hoeffding's inequality (Fact~\ref{fact: hoeffding}), for all $\eta > 0$, 
    $$
        \Pr \left[ \left| \hat f_\mcalU{} [ v ] -  f_\mcalU{} [ v ]\right| \ge  \eta \right] \le 2 \exp \left( - \frac{ 2 \eta^2 }{ \sum_{u \in \mcalU{} } (\ceps{} - (-\ceps{}) )^2 } \right)\,.
    $$
    If we upper bound the failure probability with $\beta'$, we obtain that $\eta \le \ceps{} \cdot \sqrt{ 2 | \mcalU{} | \ln ( {2} / {\beta'} ) }$.
    Noting that $C_\eps{} \in O( 1 / \eps{} )$ for $\eps{} \in O(1)$ finishes the proof.
\end{proof}

\subsection{Proof of Corollary~\ref{thm:lower bound frequency oracle}}
\label{appendix: proof of thm:lower bound frequency oracle}

{\bf Corollary~\ref{thm:lower bound frequency oracle}.}
{\it 
    Let~$\eps{} \in O(1)$. 
    Every $\eps{}$-\ldp frequency oracle algorithm achieving estimation error~$\lambda$ with probability at least~$1 - \beta'$ must have
    $$
        \lambda \in \Omega \left( ( {1} / {\eps} ) \cdot \sqrt{ | \mcalU{} | \cdot \ln ( { 1 } / {\beta'} )  } \right).
    $$
}
\vspace{-5mm}
\begin{proof}[Proof of Corollary~\ref{thm:lower bound frequency oracle}]
    Let~${\beta'} = \beta / d$ be the failure probability of the frequency oracle algorithm, and
    suppose by contradiction that $\lambda \in  o( (1 / \eps) \cdot \sqrt{ | \mcalU{} | \ln ( 1 / {\beta'} ) } )$ for this algorithm.
    If so, we could query it for the frequency of all elements in the domain~$\mcalD{}$. 
    By a union bound, we could thus construct a succinct histogram with failure probability at most~$\beta$ and with error for every element in $o( (1 / \eps) \cdot \sqrt{ | \mcalU{} | \ln ( 1 / {\beta'} ) } )$.
    Since the latter bound is in, $o( (1 / \eps) \cdot \sqrt{ | \mcalU{} | \ln ( |\mcalD{}| / \beta ) } )$, this contradicts Fact~\ref{fact:lower bound succinct histogram}.
\end{proof}

\newpage
\section{Proofs For Section~\ref{sec: frequency oracle}} \label{appendix: proof for sec: frequency oracle}

This section is organized as follows: 
\begin{enumerate}
    \item In Section~\ref{appendix: proof of thm: sketching framework}, we provide the detailed proof for Theorem~\ref{thm: sketching framework}. 
    \item In Section~\ref{appendix: proof of thm: size of the good sets}, we provide the detailed proof for Theorem~\ref{thm: size of the good sets}. 
    \item In Section~\ref{appendix: sec proof of Delta 0}, we provide the proof for Lemma~\ref{lem: bounds of Delta 0}, which we rely on to prove Theorem~\ref{thm: size of the good sets}.
    \item In Section~\ref{appendix: sec proof of Delta 1}, we provide the proof for Lemma~\ref{lem: bounds of Delta 1}, which we rely on to prove Theorem~\ref{thm: size of the good sets}.
\end{enumerate}

\subsection{Theorem~\ref{thm: sketching framework}} \label{appendix: proof of thm: sketching framework}

The properties of Theorem~\ref{thm: sketching framework} have been established implicitly in Section~\ref{sec: frequency oracle}.
Here we show how to put the pieces together explicitly. 

{\bf Theorem~\ref{thm: sketching framework}}~(Sketching Framework).
{\it
    For every~$\beta' \in (0, 1)$, $\algoOracle{}$ can be converted into an~$\eps{}$-\ldp frequency oracle, with server running time~$\tilde{O} ( \Phi_{\ttime} ( |\mcalU{}|, \sqrt{ |\mcalU{}| } ) )$ and memory usage~$\tilde{O} ( \Phi_{\mem} ( |\mcalU{}|, \sqrt{ |\mcalU{}| } ) )$.
    Fix an element~$v \in \mcalD{}$ to be given as a query to the new algorithm.
    With probability at least $1 - \beta'$, it returns an estimate $\hat f_\mcalU{} [ v ]$ satisfying
    \vspace{-3mm}
    $$
        \left|  \hat f_\mcalU{} [ v ] - f_\mcalU{} [ v ] \right| \in O \left( ( {1} / {\eps} ) \cdot \sqrt{ | \mcalU{} | \cdot \ln ( {1} / {\beta'} )  } \right).
    \vspace{-2mm}
    $$
}

\begin{proof}[{\bf Proof of Theorem~\ref{thm: sketching framework}}]
    Since Algorithm~\ref{algo: hada-oracle-framework} (Sketching Framework) invokes hash functions to reduce the domain size from~$|\mcalD{}|$ to $m \in O( \sqrt{|\mcalU{}|} )$, it follows that it has server running time~$\tilde{O} ( \Phi_{\ttime} ( |\mcalU{}|, \sqrt{ |\mcalU{}| } ) )$ and memory usage~$\tilde{O} ( \Phi_{\mem} ( |\mcalU{}|, \sqrt{ |\mcalU{}| } ) )$.
    The privacy guarantee follows from that Algorithm~\ref{algo: hada-oracle-framework} partitions the set of users~$\mcalU{}$ into subsets, and invokes~$\algoOracle{}$ for each subset. 
    Therefore, each user participates in only one copy of~$\algoOracle{}$. 
    As~$\algoOracle{}$ is~$\eps{}$ differentially private, so is Algorithm~\ref{algo: hada-oracle-framework}. 
    Finally, the utility guarantee follows from Corollary~\ref{corollary: median error guarantee}. 
\end{proof}

This finishes the proof of Theorem~\ref{thm: sketching framework}. 
In the next section, we discuss Theorem~\ref{thm: size of the good sets}. 

\subsection{Theorem~\ref{thm: size of the good sets}} \label{appendix: proof of thm: size of the good sets}

{\bf Theorem~\ref{thm: size of the good sets}}
{\it
    With probability at least~$1 - \beta' / 4$, it holds that $|\goodset{}_0| > (1 - 1 / 8) k$.
    And for each $v \in \mcalD{}$ and each $j \in [3]$, with probability at least $1 - \beta' / 4$, it holds  that $|\goodset{}_j (v) | > (1 - 1 / 8) k$. 
}

\begin{proof}[{\bf Proof of Theorem~\ref{thm: size of the good sets}}]
    We need to prove the Theorem for $\goodset{}_0, \goodset{}_1 (v), \goodset{}_2 (v), \goodset{}_3 (v)$, separately.
    As they are easier, we first bound the sizes of~$\goodsetB{} (v)$ and~$\goodsetC{} (v)$.
    
    \vspace{4mm}
    {\bf Bounding the Size of} $\goodset{}_2 (v)$.
    
        Fix some $v \in \mcalD{}$. 
        Recall that $\lambda_2 (j, v) \doteq | k f_{\mcalU{}_i, h_i } [ h(v) ] - k f_\mcalU{}_i [ v ] |$. 
        Via the definition of $\goodset{}_2 (v)$, it can be rewritten as 
        $$
            \goodset{}_2 (v)
                = \left \{ i \in [k] : | k f_{\mcalU{}_i, h_i} [ h(v) ] - k f_\mcalU{}_i [ v ] | \in O \left( \frac{1}{\eps} \cdot \sqrt{ k \cdot |\mcalU{}_i | \ln \frac{1}{\beta'} } \right) \right\}\,.
        $$
        Observe that for each $i \in [k]$, the (scaled) errors $|\lambda_2 (i, v)| / k = |  f_{\mcalU{}_i, h_i} [ h(v) ] - f_\mcalU{}_i [ v ] |$ result from hash collisions. 
        Consider an fixed $i \in [k]$. 
        For each user $u \in \mcalU{}_i$, define the indicator random variable $X_u = \dsone{}[ h_i ( v^{ (u) } ) = h_i ( v ) ]$ for the event~$h_i ( v^{ (u) } ) = h_i ( v) $.
        If $v^{ (u) } = v$, it always holds that $X_u = 1$.
        Otherwise, as $h_i$ is a pairwise-independent hash function, $\Pr[ X_u = 1 ] = 1 / m$. 
        
        By definition, $f_{ \mcalU{}_i, h_i} [ h_i(v) ] = \sum_{u \in \mcalU{}_i } X_u$. Therefore, 
        $$
            \E \left[ \left| f_{ \mcalU{}_i, h_i} [ h_i (v) ] -  f_\mcalU{}_i [ v ] \right| \right] 
            = \E \left[ \left| \sum_{u \in \mcalU{}_i } X_u - f_\mcalU{}_i [ v ] \right| \right]
            = \E \left[ \sum_{u \in \mcalU{}_i, v^{ (u) } \neq v } X_u  \right] \,.
        $$
        By linearity of expectation, we have 
        $$
            \E \left[ \left| f_{ \mcalU{}_i, h_i} [ h_i (v) ] -  f_\mcalU{}_i [ v ] \right| \right] = \frac{ | u \in \mcalU{}_i, v^{ (u) } \neq v | }{ m } \le \frac{| \mcalU{}_i | }{ m }\,. 
        $$
        By Markov's inequality, 
        \begin{equation*}
            \Pr \left[ \left| f_{ \mcalU{}_i, h_i} [ h_i (v) ] -  f_\mcalU{}_i [ v ] \right| \ge \frac{1}{\eps} \sqrt{ \frac{1}{k} \cdot |\mcalU{}_i | \ln \frac{4}{\beta'} } \right] 
            \le \frac{ | \mcalU{}_i | / m }{  ( 1 / \eps )  \sqrt{ (1 / k) \cdot |\mcalU{}_i | \ln ( 4 / \beta') } } 
            = \frac{ \eps{} \sqrt{ k | \mcalU{}_i | } }{ m \sqrt{ \ln ( 4 / \beta') } }.
        \end{equation*}
        Recall that Algorithm~\ref{algo: hada-oracle-framework} initializes $k = C_K \cdot \ln ( 4 / \beta')$ and $m = 8e^2 \cdot \sqrt{C_K} \cdot \eps \cdot \sqrt{| \mcalU{} |}$ for some constant $C_K$.
        The upper bound on the probability simplifies to $\sqrt{ |\mcalU{}_i| } \, / \, ( 8 e^2 \sqrt{ |\mcalU{}| } )$.
        Using that $|\mcalU{}| \ge |\mcalU{}_i|$, this upper bound further simplifies to $1 / (8e^2)$.

        For each $i \in [k]$, define the indicator random variable 
        $$
            Y_i \doteq \dsone{} \left[ | k f_{\mcalU{}_i, h_i} [ h(v) ] - k f_\mcalU{}_i [ v ] | \ge \frac{1}{\eps} \sqrt{ k \cdot |\mcalU{}_i | \ln \frac{4}{\beta'} } \right]\,
        $$
        for the event~$| k f_{\mcalU{}_i, h_i} [ h(v) ] - k f_\mcalU{}_i [ v ] | \ge ( 1 / \eps ) \sqrt{ k \cdot |\mcalU{}_i | \ln ( 4 / \beta' ) }$.
        Let~$Y \doteq \sum_{i \in [k] } Y_i$.
        We have~$\E[ Y_i ] \le 1 / (8e^2)$, and~$\mu \doteq \E[ Y ] \le k / ( 8 e^2)$.
        As the $h_1, .., h_k$ are chosen independently, the $\{ Y_i \}$ are independent.
        Via Chernoff bound~(Fact~\ref{fact: prototype chernoff bound}), 
        \begin{align*}
            \Pr \left[ Y \ge \frac{k}{8} \right] 
            = \Pr \left[ Y \ge \left( 1 + \left( \frac{k}{ 8 \mu} - 1 \right) \right) \mu \right] 
            \le \left(  \frac{ \exp \left( k /  (8 \mu) - 1 \right)  }{  \left( k /  (8 \mu) \right)^{ k /  (8 \mu) }  } \right)^\mu
            = \exp \left( \frac{k}{8} - \mu - \frac{k}{8} \ln \frac{k}{8 \mu} \right).
        \end{align*}
        
        For~$\mu \le k / ( 8 e^2)$, the function~$-\mu - (k / 8) \ln (k / (8 \mu) ) = -\mu - (k / 8) \ln (k / 8) + (k / 8) \ln \mu$ is maximized when~$\mu = k / ( 8 e^2)$.
        Therefore, 
        $$
            \Pr \left[ Y \ge \frac{k}{8} \right] \le \exp \left( \frac{k}{8} - \frac{k}{8 e^2} - \frac{k}{8} \ln e^2 \right) = \exp \left( - \frac{k}{8} \left( 1 + \frac{1}{e^2} \right) \right).
        $$
        
        Recall that $k = C_K \cdot \ln ( 4 / \beta')$. If we set $C_K = 8$, then we get 
        $
             \Pr[ Y \ge k / 8 ] \le \beta' / 4. 
        $
        
    $\blacksquare$

    \vspace{4mm}
    {\bf Bounding the Size of} $\goodset{}_3 (v)$.
    
        Via the assumption of~$\algoOracle{}$, for $i \in [k]$, with probability at most $1 / ( {8e^2} )$,
        $$
            | \hat f_{\mcalU{}_i, h_i} [ h_i (v) ] - f_{\mcalU{}_i, h_i} [ h_i (v) ] | \notin O \left( \frac{1}{\eps} \sqrt{ | \mcalU{}_i |  \cdot \ln (8e^2) } \right)\,.
        $$
        Scaling both sides by a factor of~$k$, we get 
        $$
            |\lambda_3 (i, v) | 
            = | k \cdot \hat f_{\mcalU{}_i, h_i} [ h_i (v) ] - k \cdot f_{\mcalU{}_i, h_i} [ h_i (v) ] | 
            \notin O \left( \frac{1}{\eps} \sqrt{ k^2 | \mcalU{}_i |  \cdot \ln (8e^2) } \right).
        $$
        Replacing one factor~$k$ with $C_K \cdot \ln ({4} / {\beta'} )$, we have 
        $
            |\lambda_3 (i, v) | \notin O ( ( {1} / {\eps} ) \sqrt{ k  | \mcalU{}_j | \cdot \ln ( {1} / {\beta'} ) } ).
        $
        Since for each $i \in [k]$, the event happens independently, the probability that there are more than $k / 8$ choices of $i \in [k]$ for which this event happens is at most
        \begin{align*}
                \binom{k }{k / 8} \left( \frac{1}{ 8e^2 } \right)^{ k / 8 } 
                \le \left( \frac{ e k }{ k / 8  } \right)^{k / 8} \left(  \frac{1}{ 8e^2 } \right)^{ k / 8 }
                = \left( \frac{1}{e} \right)^{ k / 8 }, 
        \end{align*}
        where the first inequality follows from that $\binom{k}{k / 8} \le \frac{ k^{k / 8} }{ (k / 8)! }$ and that $\frac{ (k / 8)^{k / 8} }{ (k / 8)! } \le e^{k / 8}$.
        Recall that~$k = C_K \cdot \ln ( 4 / \beta')$. If we set $C_K = 8$, then we get 
        $
             \left( 1 / e \right)^{ k / 8 } \le \beta' / 4.
        $
        
    $\blacksquare$
    
    \vspace{4mm}
    To bound the sizes of~$\goodset{}_0$ and~$\goodsetA{} (v)$, we need the following lemmas.
    
    \begin{lemma} \label{lem: bounds of Delta 0}
        Let $\Delta_0 \doteq \sum_{i \in [k] } \big\vert | \mcalU{}_i | - | \mcalU{} | / k \big\vert$.
        $\exists$ $C_0 > 0$, s.t., with probability $1 - \beta' / 4$: $\Delta_0 \le C_0 \sqrt{ | \mcalU{} | \ln ( 4 / \beta' ) }$.
    \end{lemma}
    
    \begin{lemma} \label{lem: bounds of Delta 1}
        Let 
        $
            \Delta_1 \doteq \sum_{i \in [k] } \Vert  f_\mcalU{}_i - f_\mcalU{} / k \Vert_2
        $,
        where 
        $
            \left\Vert f_\mcalU{}_i - { f_\mcalU{} } / k \right\Vert_2 
            \doteq 
            \sqrt{ 
                \sum_{ v' \in \mcalD{} } ( f_\mcalU{}_i [ v' ] -   f_\mcalU{} [ v' ] / k )^2 
            }.
        $
        There exists some constant~$C_1 > 0$, s.t., with probability $1 - \beta' / 4$: $\Delta_1 \le C_1 \sqrt{ | \mcalU{} | \ln ( 4 / \beta' ) }$.
    \end{lemma}
    
    We need to prove the lemmas for both {\it independent partitioning} and {\it permutation partitioning}.
    The proofs are technical, so we defer them to the end of the proof.
    For now, we show how to put them together to bound the sizes of~$\goodset{}_0$ and~$\goodsetA{} (v)$.
    
    \vspace{4mm}
    {\bf Bounding the Size of} $\goodset{}_0$.
    
        By Lemma~\ref{lem: bounds of Delta 0}, with probability at least $1 - \beta' / 4$, it holds that $\Delta_0 \le C_0 \sqrt{ | \mcalU{} | \ln ( 4 / \beta' ) }$ for some constant $C_0$. 
        Therefore, 
        $$
            k \cdot \Delta_0 = \sum_{i \in [k] } \big\vert k \cdot | \mcalU{}_i | - | \mcalU{} | \big\vert \le k C_0 \sqrt{ | \mcalU{} | \ln \frac{4}{\beta'} }\,.
        $$
        By a counting argument, the number of $i \in [k]$, such that $\big\vert k \cdot | \mcalU{}_i | - | \mcalU{} | \big\vert \ge 8 C_0 \sqrt{ | \mcalU{} | \ln ( 4 / \beta' ) }$ is bounded by $(1 / 8)k$. 
        This implies that for at least $(1 - 1 / 8)k$ of the $i \in [k]$, we have 
        $$
            \big\vert k \cdot | \mcalU{}_i | - | \mcalU{} | \big\vert \le 8 C_0 \sqrt{  | \mcalU{} | \ln \frac{4}{\beta'} }\,.
        $$
        
        By the assumption that $|\mcalU{}| \ge \ln ( 4 / \beta' )$, we get $\big\vert k \cdot | \mcalU{}_i | - | \mcalU{} | \big\vert \in \Theta (|\mcalU{}|)$.
        
    $\blacksquare$

    {\bf Bounding the Size of} $\goodset{}_1(v)$.
    
        By Lemma~\ref{lem: bounds of Delta 1}, with probability at least $1 - \beta' / 4$, it holds that $\Delta_1 \le C_1 \sqrt{ | \mcalU{} | \ln (4 / \beta' ) }$ for some constant $C_1$. 
        Hence,  
        $$
            k \cdot \Delta_1 = k \sum_{i \in [k] } \Vert  f_\mcalU{}_i - f_\mcalU{} / k \Vert_2 \le k C_1 \sqrt{  | \mcalU{} | \ln \frac{4}{\beta'} }.
        $$ 
        By a counting argument, the number of~$i \in [k]$, such that $k \left\Vert  f_\mcalU{}_i - { f_\mcalU{} } / k \right\Vert_2 \ge 8 C_1 \sqrt{ | \mcalU{} | \ln ( 4 / \beta' ) }$ is bounded by~$(1 / 8)k$. 
        This implies that for at least $(1 - 1 / 8)k$ of the $i \in [k]$, we have 
        $$
            \big\vert \lambda_1 (i, v) \big\vert 
            = 
            \big\vert k f_\mcalU{}_i [ v ] -  f_\mcalU{} [ v ] \big\vert 
            \le 
            k \left\Vert  f_\mcalU{}_i - { f_\mcalU{} } / k \right\Vert_2 
            \le 
            8 C_1 \sqrt{ | \mcalU{} | \ln ( 4 / \beta' ) }\,.
        $$
        which finishes the proof.
        
    $\blacksquare$
    
\end{proof}

In the following two sections, we prove Lemma~\ref{lem: bounds of Delta 0} and Lemma~\ref{lem: bounds of Delta 1} respectively. 

\subsection{Bounding~$\Delta_0$} \label{appendix: sec proof of Delta 0}

{\bf Lemma~\ref{lem: bounds of Delta 0}.}
{\it 
    Let $\Delta_0 \doteq \sum_{i \in [k] } \big\vert | \mcalU{}_i | - | \mcalU{} | / k \big\vert$.
    $\exists$ $C_0 > 0$, s.t., with probability $1 - \beta' / 4$: $\Delta_0 \le C_0 \sqrt{ | \mcalU{} | \ln ( 4 / \beta' ) }$.
}

The lemma holds trivially for {\bf permutation partitioning}, as in such case it holds that~$|\mcalU{}_i| = |\mcalU{}| / k$ and~$\Delta_0 = 0$. 
We need to prove the lemma for \textbf{independent partitioning}. 

\subsubsection{Proof of Lemma~\ref{lem: bounds of Delta 0} for {\bf Independent Partitioning}}

Without loss of generality, assume that $\mcalU{} = \{ 1, 2, .., | \mcalU{} | \}$. 
For each $u \in \mcalU{} $, let $X_u \in [k]$ be the index of the subset that user~$u$ belongs to.
Let $\mbfX{} \doteq (X_1,  \ldots, X_{ | \mcalU{} | })$: by definition, for each $i \in [k]$, $\Pr[ X_u = i ] = 1 / k$. 
The set $\mcalU{}_i$ can be represented as
$$
    \mcalU{}_i \doteq \{ u \in \mcalU{}  : X_u = i \}\,.
$$
Now,~$\Delta_0$ can be rewritten as 
$$ 
    \Delta_0 = \sum_{i \in [k] } \sqrt{ \left( \sum_{ u \in \mcalU } \dsone{} \left[ X_u = i \right] - |\mcalU{}| / k \right)^2 }\,, 
$$
where $\dsone{} \left[ X_u = i \right]$ is the indicator random variable for the event~$X_u = i$.
Hence, $\Delta_0$ is a random variables that depends on~$\mbfX{}$. We write $\Delta_0$ explicitly as $\Delta_0(X_1, \ldots, X_{ | \mcalU{} | })$ or $\Delta_0(\mbfX{})$ when necessary. 
For a sequence of values $\pmbx{} = \{ x_1, \ldots, x_{ | \mcalU{} | } \} \in [k]^{ | \mcalU{} | }$, we use $\Delta_0 (x_1, \ldots, x_{ | \mcalU{} | })$ or $\Delta_0 (\pmbx{})$ to denote the value of $\Delta_0$, when $\mbfX{} = \pmbx{}$. 
Observe that 
$$
    \Delta_0 = \Delta_0 - \E[ \Delta_0 ] + \E[ \Delta_0 ].
$$
In order to upper bound $\Delta_0$, we can upper bound both $\Delta_0 - \E[ \Delta_0]$ and $\E[ \Delta_0]$ superlatively. 
In particular, we will prove that 1) with probability at least~$1 - \beta' / 4$, it holds that~$\Delta_0 - \E[ \Delta_0 ] \le \sqrt{ 2 | \mcalU{} | \ln (4 / \beta') }$; 2) $\E[ \Delta_0 ] \le \sqrt{ k  | \mcalU{} | }$.
Substituting $k = C_K \cdot \ln ( 4 / \beta')$, we get that, with probability at least~$1 - \beta' / 4$, 
$$
    \Delta_0 \le \sqrt{ 2 | \mcalU{} | \ln (4 / \beta') } + \sqrt{ k  | \mcalU{} | } = \sqrt{ 2 | \mcalU{} | \ln (4 / \beta') } + \sqrt{ C_K | \mcalU{} | \ln (4 / \beta') }.
$$
As we set~$C_K = 8$ in the proof of Theorem~\ref{thm: size of the good sets} in Section~\ref{appendix: proof of thm: size of the good sets}, the RHS simplifies to~$\Delta_0 \le 3 \sqrt{ 2 | \mcalU{} | \ln (4 / \beta') }$.

\textbf{Step 1: Bounding $\Delta_0 - \E[ \Delta_0]$.}
    
    We upper bound it by McDiarmid’s Inequality (Fact~\ref{fact:McDiarmid-Inequality}). 
    We will prove that $\Delta_0$ satisfies Lipschitz condition (Definition~\ref{def:Lipschitz-Condition}) with bound $2$, i.e., for all $u \in \mcalU{} $, and every sequence of values $\pmbx{} = \{ x_1, \ldots,x_u, \ldots, x_{ | \mcalU{} | } \} \in [k]^{| \mcalU{} |}$ and $x_u' \in [k]$, 
    \begin{equation} \label{ineq: delta0-lipschitz}
        | \Delta_0(x_1, \ldots, x_u, \ldots, x_{ | \mcalU{} | }) - \Delta_0(x_1, \ldots, x_u', \ldots, x_{ | \mcalU{} | }) | \le 2\,. 
    \end{equation}
    Then by McDiarmid’s Inequality~(Fact~\ref{fact:McDiarmid-Inequality}), 
    $$
        \Pr\left[ \Delta_0 - \E[ \Delta_0] \ge \sqrt{ 2 | \mcalU{} | \ln \frac{4}{\beta'} } \right] \le \exp \left( - \frac{  2 \left( \sqrt{ 2 | \mcalU{} | \ln ( 4 / \beta') } \right)^2  }{ | \mcalU{} | \cdot 4 } \right) \le \beta' / 4\,.
    $$
    
    \noindent {\bf Proof of Inequality~(\ref{ineq: delta0-lipschitz})}.
    Define a random vector in~$\R^k$ that depends on~$\mbfX{}$ as
    $$
        \pmbomega (\mbfX{} ) \doteq \left( \sum_{ u \in \mcalU } \dsone{} \left[ X_u = 1 \right] - \frac{|\mcalU{}|} {k}, \ \ldots, \ \sum_{ u \in \mcalU } \dsone{} \left[ X_u = k \right] - \frac{|\mcalU{}|} { k} \right).
    $$
    For a sequence of values $\pmbx{} = \{ x_1, \ldots,x_u, \ldots, x_{ | \mcalU{} | } \} \in [k]^{| \mcalU{} |}$ , let $\pmbomega (\pmbx{} )$ be the vector of $\pmbomega (\mbfX{} )$ when $\mbfX{} = \pmbx{}$.
    By its definition, $\Delta_0 (\pmbx{} )$ equals $\Vert \pmbomega ( \pmbx{} ) \Vert_1$, the~$\ell_1$ norm of~$\pmbomega ( \pmbx{} )$.
    
    Consider a fixed $u \in \mcalU{} $. 
    Let $\pmbx{}' = \{ x_1, \ldots, x_u', \ldots, x_{ | \mcalU{} | } \}$ be the sequence obtained by replacing $x_u$ with $x_u'$. 
    The inequality~(\ref{ineq: delta0-lipschitz}) clearly holds when $x_u = x_u'$. 
    Now, suppose that $x_u \neq x_u$. 
    Then $\pmbomega( \pmbx{} )$ and $\pmbomega( \pmbx{}' )$ differ in only two coordinates, each by~$1$.
    Specifically, $\pmbomega( \pmbx{} ) - \pmbomega( \pmbx{}' ) = \pmbe{}_{ x_u } - \pmbe{}_{ x_u' }$, where~$\pmbe{}_{ x_u }$ and~$\pmbe{}_{ x_u' }$ are the~$(x_u)^{ (th) }$ and the~$(x_u')^{ (th) }$ standard basis vectors in~$\R^{ k }$, respectively.
    By the triangle inequality, 
    $$
        \Big\vert \left\Vert \pmbomega( \pmbx{} ) \right\Vert_1 - \left\Vert \pmbomega( \pmbx{}' ) \right\Vert_1 \Big\vert \le 
        \Vert \pmbe{}_{ x_u } - \pmbe{}_{ x_u' } \Vert_1 
        = 2. 
    $$

    \vspace{1mm}
    \noindent \textbf{Step 2: Bounding $\E[ \Delta_0 ]$.} 
    
    By Jensen's inequality, it holds  that 
    \begin{align*}
        \E\left[ \Delta_0 \right]
            =  \sum_{i \in [k] } \E \left[ \sqrt{ \left( \sum_{ u \in \mcalU } \dsone{} \left[ X_u = i \right] - |\mcalU{}| / k \right)^2 } \right]
            \le \sum_{i \in [k] } \sqrt{ \E\left[ \left( \sum_{ u \in \mcalU } \dsone{} \left[ X_u = i \right] - |\mcalU{}| / k \right)^2  \right] }\,.
    \end{align*}

    Fix an $i \in [k]$. For each~$u \in \mcalU{}$, define the indicator random variable $Z^{ (u) } \doteq \dsone{} [ X_u = i ]$ for the event $u \in \mcalU{}_i$. 
    Then $\Pr[ Z^{ (u) } = 1 ] = 1 / k$ and $\Pr[ Z^{ (u) } = 0 ] = 1 - 1 / k$. 
    Further
    $
        | \mcalU{}_i | = \sum_{ u \in \mcalU{} } Z^{ (u) },
    $
    is a sum of $|\mcalU{}|$ independent random variables and has expectation~$|\mcalU{}| / k$. 
    Hence, 
    \begin{align*}
       \E\left[  \left( \sum_{ u \in \mcalU{} } Z^{ (u) } - |\mcalU{}| / k \right)^2 \right] 
            = \Var{} \left[  | \mcalU{}_i | \right] 
            =  \sum_{ u \in \mcalU{} } \Var{} \left[ Z^{ (u) } \right] 
            \le \frac{ | \mcalU{} | }{ k }\,.
    \end{align*}
    Therefore, 
    $$
         \E\left[ \Delta_0 \right] 
         \le \sum_{i \in [k] } \sqrt{  \E\left[ \left( \sum_{ u \in \mcalU } \dsone{} \left[ X_u = i \right] - |\mcalU{}| / k \right)^2  \right] } 
         \le \sum_{i \in [k] } \sqrt{ \frac{ | \mcalU{} | }{ k } } 
         = \sqrt{ k | \mcalU{} |  }\,.
    $$
    $\square{}$    

\vspace{8mm}

This finishes the proof of Lemma~\ref{lem: bounds of Delta 0}.
Next, we prove Lemma~\ref{lem: bounds of Delta 1}.

\subsection{Bounding~$\Delta_1$} \label{appendix: sec proof of Delta 1}

{\bf Lemma~\ref{lem: bounds of Delta 1}.}
{\it 
    Let 
    $
        \Delta_1 \doteq \sum_{i \in [k] } \Vert  f_\mcalU{}_i - f_\mcalU{} / k \Vert_2
    $,
    where 
    $
        \left\Vert f_\mcalU{}_i - { f_\mcalU{} } / k \right\Vert_2 
        \doteq 
        \sqrt{ 
            \sum_{ v' \in \mcalD{} } ( f_\mcalU{}_i [ v' ] -   f_\mcalU{} [ v' ] / k )^2 
        }.
    $
    There exists some constant~$C_1 > 0$, s.t., with probability $1 - \beta' / 4$: $\Delta_1 \le C_1 \sqrt{ | \mcalU{} | \ln ( 4 / \beta' ) }$.
}

We need to prove the lemma for both {\bf independent partitioning} and {\bf permutation partitioning}.

\subsubsection{Proof of Lemma~\ref{lem: bounds of Delta 1} for {\bf Independent Partitioning}}

Without loss of generality, assume that $\mcalU{} = \{ 1, 2, .., | \mcalU{} | \}$. 
For each $u \in \mcalU{} $, let $X_u \in [k]$ be the index of the subset that user~$u$ belongs to.
Let $\mbfX{} \doteq (X_1,  \ldots, X_{ | \mcalU{} | })$: by definition, for each $i \in [k]$, $\Pr[ X_u = i ] = 1 / k$. 
The set $\mcalU{}_i$ can be represented as
$$
    \mcalU{}_i \doteq \{ u \in \mcalU{}  : X_u = i \}\,.
$$
For each $v \in \mcalD{}$, we have 
$$
    f_\mcalU{}_i [ v ] = \sum_{ u \in \mcalU{}_i } \dsone{} \left[ v^{ ( u ) } = v \right], 
$$
where $\dsone{} \left[ v^{ ( u ) } = v \right]$ is the indicator random variable for the event~$v^{ ( u ) } = v$. 
Therefore, 
$$
    \left\Vert f_\mcalU{}_i - \frac{ f_\mcalU{} }{ k }  \right\Vert_2 
    = 
    \sqrt{ \sum_{ v \in \mcalD{} } \left( f_\mcalU{}_i [ v ] -  \frac{ f_\mcalU{} [ v ] }{ k} \right)^2 }
$$ 
is a random variables that depends on~$\mbfX{}$. 
We write $\left\Vert f_\mcalU{}_i - f_\mcalU{} / k   \right\Vert_2$ explicitly as $\left\Vert f_\mcalU{}_i - f_\mcalU{} / k   \right\Vert_2 (X_1, \ldots, X_{ | \mcalU{} | })$ or $\left\Vert f_\mcalU{}_i - f_\mcalU{} / k   \right\Vert_2 (\mbfX{})$ when necessary. 
For a sequence of values $\pmbx{} = \{ x_1, \ldots, x_{ | \mcalU{} | } \} \in [k]^{ | \mcalU{} | }$, we use $\left\Vert f_\mcalU{}_i - f_\mcalU{} / k   \right\Vert_2 (x_1, \ldots, x_{ | \mcalU{} | })$ or $\left\Vert f_\mcalU{}_i - f_\mcalU{} / k   \right\Vert_2 (\pmbx{})$ to denote the value of $\left\Vert f_\mcalU{}_i - f_\mcalU{} / k   \right\Vert_2 ( \mbfX{} )$, when $\mbfX{} = \pmbx{}$. 

Moreover, as
$
    \Delta_1 
    = \sum_{i \in [k] }\left\Vert f_\mcalU{}_i - f_\mcalU{} / k   \right\Vert_2, 
$
it is also a random variables that depends on~$\mbfX{}$. 
We write $\Delta_1$ explicitly as $\Delta_1(X_1, \ldots, X_{ | \mcalU{} | })$ or $\Delta_1(\mbfX{})$ when necessary. 
For a sequence of values $\pmbx{} = \{ x_1, \ldots, x_{ | \mcalU{} | } \} \in [k]^{ | \mcalU{} | }$, we use $\Delta_1 (x_1, \ldots, x_{ | \mcalU{} | })$ or $\Delta_1 (\pmbx{})$ to denote the value of $\Delta_1$, when $\mbfX{} = \pmbx{}$. 

Observe that 
$$
    \Delta_1 = \Delta_1 - \E[ \Delta_1 ] + \E[ \Delta_1 ].
$$
In order to upper bound $\Delta_1$, we can upper bound both $\Delta_1 - \E[ \Delta_1]$ and $\E[ \Delta_1]$ superlatively. 
In particular, we will prove that 1) with probability at least~$1 - \beta' / 4$, it holds that~$\Delta_1 - \E[ \Delta_1 ] \le \sqrt{ 2 | \mcalU{} | \ln (4 / \beta') }$; 2) $\E[ \Delta_1 ] \le \sqrt{ k  | \mcalU{} | }$.
Substituting $k = C_K \cdot \ln ( 4 / \beta')$, we get that, with probability at least~$1 - \beta' / 4$, 
$$
    \Delta_1 \le \sqrt{ 2 | \mcalU{} | \ln (4 / \beta') } + \sqrt{ k  | \mcalU{} | } = \sqrt{ 2 | \mcalU{} | \ln (4 / \beta') } + \sqrt{ C_K | \mcalU{} | \ln (4 / \beta') }.
$$
As we set~$C_K = 8$ in the proof of Theorem~\ref{thm: size of the good sets} in Section~\ref{appendix: proof of thm: size of the good sets}, the RHS simplifies to~$\Delta_1 \le 3 \sqrt{ 2 | \mcalU{} | \ln (4 / \beta') }$.

\noindent \textbf{Step 1: Bounding $\Delta_1  - \E[ \Delta_1 ]$.}

    We upper bound it by McDiarmid’s Inequality (Fact~\ref{fact:McDiarmid-Inequality}). 
    We will prove that $\Delta_1$ satisfies Lipschitz condition (Definition~\ref{def:Lipschitz-Condition}) with bound $2$, i.e., for all $u \in \mcalU{} $, and every sequence of values $\pmbx{} = \{ x_1, \ldots,x_u, \ldots, x_{ | \mcalU{} | } \} \in [k]^{| \mcalU{} |}$ and $x_u' \in [k]$, 
    \begin{equation} \label{ineq: delta1-lipschitz}
        | \Delta_1(x_1, \ldots, x_u, \ldots, x_{ | \mcalU{} | }) - \Delta_1(x_1, \ldots, x_u', \ldots, x_{ | \mcalU{} | }) | \le 2\,. 
    \end{equation}
    Then by McDiarmid’s Inequality~(Fact~\ref{fact:McDiarmid-Inequality}), 
    $$
        \Pr\left[ \Delta_1 - \E[ \Delta_1] \ge \sqrt{ 2 | \mcalU{} | \ln \frac{4}{\beta'} } \right] \le \exp \left( - \frac{  2 \left( \sqrt{ 2 | \mcalU{} | \ln ( 4 / \beta') } \right)^2  }{ | \mcalU{} | \cdot 4 } \right) \le \beta' / 4\,.
    $$

    \noindent {\bf Proof of Inequality~(\ref{ineq: delta1-lipschitz})}. 
    
    Consider a fixed $u \in \mcalU{} $. Let $\pmbx{}' = \{ x_1, \ldots, x_u', \ldots, x_{ | \mcalU{} | } \}$ be the sequence obtained by replacing $x_u$ with $x_u'$. 
    The inequality~(\ref{ineq: delta1-lipschitz}) clearly holds when $x_u = x_u'$. 
    It is left to consider the case when $x_u \neq x_u'$.
    To simplify the notation, denote $j = x_u$ and $\ell = x_u'$.
    Via the definition that $\Delta_1  = \sum_{i \in [k] } \left\Vert f_\mcalU{}_i - \frac{ f_\mcalU{} }{ k }  \right\Vert_2$,
    the $\Delta_1 ( \pmbx{} )$ and $\Delta_1 ( \pmbx{}' )$ differ only in two terms.
    Specifically, 
    $$
        \Delta_1 ( \pmbx{} ) - \Delta_1 ( \pmbx{}' ) 
        = 
        \left\Vert f_\mcalU{}_j - \frac{ f_\mcalU{} }{ k }  \right\Vert_2 ( \pmbx{} ) - \left\Vert f_\mcalU{}_j - \frac{ f_\mcalU{} }{ k }  \right\Vert_2 ( \pmbx{}' ) 
        + 
        \left\Vert f_\mcalU{}_\ell - \frac{ f_\mcalU{} }{ k }  \right\Vert_2 ( \pmbx{} ) - \left\Vert f_\mcalU{}_\ell - \frac{ f_\mcalU{} }{ k }  \right\Vert_2 ( \pmbx{}' ) 
    $$
    For each~$v \in \mcalD{}$, define $f_{\mcalU{}_j, \pmbx{} } [ v ] \doteq | \{ u \in \mcalU{}_j : v^{ (u) } = v \} |$ to be the frequency of $v$ in the set $\{ v^{ (u) } : u \in \mcalU{}_j \}$, when~$\mbfX{} = \pmbx{}$.
    Let $f_{\mcalU{}_j, \pmbx{} } \doteq \big( f_{\mcalU{}_j, \pmbx{} } [ v ] : v \in \mcalD{} \big)$ be the frequency vector when~$\mbfX{} = \pmbx{}$.
    Similarly, define~$f_{\mcalU{}_j, \pmbx{}' }$ to be the frequency vector when~$\mbfX{} = \pmbx{}'$.
    Further, let $f_{\mcalU{}_{ \ell }, \pmbx{} }$ and $f_{\mcalU{}_{ \ell }, \pmbx{}' }$ be the frequency vectors defined on~$\mcalU{}_\ell$, when $\mbfX{} = \pmbx{}$ and $\mbfX{} = \pmbx{}'$ respectively.
    Recall that~$f_\mcalU{} = \big( f_\mcalU{} [v] : v \in \mcalD{} \big)$ denotes the frequency vector defined on the entire user set~$\mcalU{}$.
    
    Let~$v^{ (u) }$ be the data of user~$u$. 
    When the value of~$\mbfX{}$ changes from~$\pmbx{}$ to~$\pmbx{}'$, the subset that user~$u$ belongs to switches from~$\mcalU{}_j$ to~$\mcalU{}_\ell$.
    The frequency of~$v^{ (u) }$ in~$\mcalU{}_j$ decreases by~$1$, and such frequency in~$\mcalU{}_\ell$ increases by~$1$.
    Therefore, 
    $$
        f_{\mcalU{}_j, \pmbx{} } - f_{\mcalU{}_j, \pmbx{} ' } =  \pmbe{}_{ v^{ (u) } }\,,
        \quad
        f_{\mcalU{}_\ell, \pmbx{} } - f_{\mcalU{}_\ell, \pmbx{} ' } =  - \pmbe{}_{ v^{ (u) } }\,.
    $$
    where $\pmbe{}_{ v^{ (u) } }$ is the $v^{ (u) }$-th standard basis vector in $\R^{ |\mcalD{} | }$. 
    As the norm~$\Vert \cdot \Vert_2$ satisfies the triangle inequality, we obtain
    $$
        \left\Vert  f_{\mcalU{}_j, \pmbx{} } - \frac{ f_\mcalU{} }{ k } \right\Vert_2 - \left\Vert f_{\mcalU{}_j, \pmbx{} ' } - \frac{ f_\mcalU{} }{ k } \right\Vert_2 \le \Vert \pmbe{}_{ v^{ (u) } } \Vert_2 = 1\,,
        \quad
        \left\Vert  f_{\mcalU{}_{ \ell }, \pmbx{} } - \frac{ f_\mcalU{} }{ k } \right\Vert_2 - \left\Vert f_{\mcalU{}_{ \ell }, \pmbx{} ' } - \frac{ f_\mcalU{} }{ k } \right\Vert_2 
        \le 
        \Vert - \pmbe{}_{ v^{ (u) } } \Vert_2 = 1\,. 
    $$
    Therefore, $| \Delta_1 ( \pmbx{} ) - \Delta_1 ( \pmbx{}' ) | \le 2$.
    
\vspace{1mm}
\noindent \textbf{Step 2: Bounding $\E[ \Delta_1 ]$.} 
    By linearity of expectation, 
    $$
        \E[ \Delta_1  ] = \sum_{i \in [k] } 
        \E\left[ \left\Vert f_\mcalU{}_i - \frac{ f_\mcalU{} }{ k }  \right\Vert_2 \right].
    $$
    For a fixed~$i \in [k]$, by Jensen's inequality, it holds  that 
    \begin{align*}
        \E\left[ \left\Vert f_\mcalU{}_i - \frac{ f_\mcalU{} }{ k }  \right\Vert_2 \right]
            =  \E\left[ \sqrt{ \sum_{ v \in \mcalD{} } \left( f_\mcalU{}_i [ v ] -  \frac{ f_\mcalU{} [ v ] }{ k} \right)^2 } \right]
            \le \sqrt{ \sum_{ v \in \mcalD{} } \E\left[  \left( f_\mcalU{}_i [ v ] -  \frac{ f_\mcalU{} [ v ] }{ k} \right)^2 \right] }\,.
    \end{align*}
    
    \noindent 
    For a fixed $v \in \mcalD{}$, define $\mcalU{}[v ] \doteq \{ u \in \mcalU{} : v^{ (u) } = v \}$ as the subset of users in~$\mcalU{}$ holding element $v$. 
    It holds that $| \mcalU{}[v ] | = f_\mcalU{} [ v ]$.
    For each $u \in \mcalU{} [v]$, define the indicator random variable~$Z^{ (u) } \doteq \dsone{} [ X_u = i ]$ for the event~$u \in \mcalU{}_i$. 
    Then $\Pr[ Z^{ (u) } = 1 ] = 1 / k$ and $\Pr[ Z^{ (u) } = 0 ] = 1 - 1 / k$. 
    Then 
    $
        f_\mcalU{}_i [ v ] = \sum_{ u \in \mcalU{} [ v ] } Z^{ (u) },
    $
    is a sum of $f_\mcalU{} [ v ]$ independent random variables with expectation $f_\mcalU{} [ v ] / k$. Hence, 
    \begin{align*}
        \E\left[  \left( f_\mcalU{}_i [ v ] -  \frac{ f_\mcalU{} [ v ] }{ k} \right)^2 \right]
            = \Var{} \left[  f_\mcalU{}_i [ v ] \right] 
            &=  \sum_{ u \in \mcalU{} [ v ] } \Var{} \left[ Z^{ (u) } \right] 
            \le \frac{ f_\mcalU{} [ v ] }{ k }\,.
    \end{align*}
    Therefore, 
    $$
        \E\left[ \left\Vert f_\mcalU{}_i - \frac{ f_\mcalU{} }{ k }  \right\Vert_2 \right] \le \sqrt{ \sum_{ v \in \mcalD{} } \E\left[  \left( f_\mcalU{}_i [ v ] -  \frac{ f_\mcalU{} [ v ] }{ k} \right)^2 \right] } \le \sqrt{ \sum_{ v \in \mcalD{} } \frac{ f_\mcalU{} [ v ] }{ k } } = \sqrt{ \frac{| \mcalU{} |}{k} }\,.
    $$
    Finally, summing over all~$i \in [k]$, we obtain
    $$
        \E[ \Delta_1  ] = \sum_{i \in [k] } 
        \E\left[ \left\Vert f_\mcalU{}_i - \frac{ f_\mcalU{} }{ k }  \right\Vert_2 \right] \le \sqrt{ | \mcalU{} | k }\,. 
    $$
    
    $\square{}$

\vspace{8mm}
\subsubsection{Proof of Lemma~\ref{lem: bounds of Delta 1} for {\bf Permutation Partitioning}}

We need to bound~$\Delta_1 = \big( \Delta_1 - \E[ \Delta_1 ] \big) + \E[ \Delta_1 ]$; we bound $\Delta_1 - \E[ \Delta_1 ]$ by $2 \sqrt{  | \mcalU{} | \ln \frac{2}{\beta'} }$ and $\E[ \Delta_1 ]$ by $\sqrt{ k | \mcalU{} | }$.

Without loss of generality, assume that $\mcalU{} = \{ 1, 2, .., | \mcalU{} | \}$. 
Let~$\mbfX{} = (X_1, X_2, \ldots, X_{ | \mcalU{} | })$ be a random permutation of~$\mcalU{}$, i.e., one chosen uniformly at random from the set of all possible permutation of~$\mcalU{}$. 
Note that~$X_1, \ldots, X_{ | \mcalU{} | }$ are dependent random variables. 

For each~$j \in [k]$, by the way we generate~$\mcalU{}_j$, it has size~$| \mcalU{} | / k$ and $\mcalU{}_j = \left\{ X_i :  (j - 1) \cdot | \mcalU{} | / k + 1 \le i \le j \cdot | \mcalU{} | / k \right\}$. For every $v \in \mcalD{}$, we can write 
$$
    f_\mcalU{}_j [ v ] = \sum_{ X_i \in \mcalU{}_j } \dsone{} \left[ v^{ ( X_i ) } = v \right]\,,
$$
where $\dsone{} \left[ v^{ ( X_i ) } = v \right]$ is the indicator random variable for the event $v^{ ( X_i ) } = v$. 
Therefore, 
$$
    \left\Vert f_\mcalU{}_j - \frac{ f_\mcalU{} }{ k }  \right\Vert_2 
        = \sqrt{ \sum_{ v \in \mcalD{} } \left( f_\mcalU{}_j [ v ] -  \frac{ f_\mcalU{} [ v ] }{ k} \right)^2 }
        = \sqrt{ \sum_{ v \in \mcalD{} } \left( \sum_{ X_i \in \mcalU{}_j } \dsone{} \left[ v^{ ( X_i ) } = v \right] -  \frac{ f_\mcalU{} [ v ] }{ k} \right)^2 }\,.
$$

Now, $\Delta_1 = \sum_{j \in [k] } \left\Vert f_\mcalU{}_j - \frac{ f_\mcalU{} }{ k }  \right\Vert_2$ is a function that depends on $\mbfX{}$;
we write $\Delta_1$ explicitly as~$\Delta_1(\mbfX{})$ or~$\Delta_1(X_1, \ldots, X_{ | \mcalU{} | })$ when necessary. 
For a sequence of values $\pmbx{} = \{ x_1, \ldots, x_{ | \mcalU{} | } \}$, we use $\Delta_1 (x_1, \ldots, x_{ | \mcalU{} | })$ or $\Delta_1 (\pmbx{})$ to denote the value of $\Delta_1$, when $\mbfX{} = \pmbx{}$. 

\textbf{Martingale Construction.} 

We will apply a martingale concentration inequality  (Fact~\ref{fact:Azuma}) for the proof.
First, to construct a martingale that satisfies Definition~\ref{def: martingale}, we introduce a dummy variable~$X_0 \equiv 0$. For each $0 \le i \le |\mcalU{} |$, let~$\mbfS{}_i$ be shorthand for $(X_0, \ldots, X_i)$, and define
$$
    Y_i \doteq \E[ \Delta_1 \mid \mbfS{}_i ]\,.
$$
Clearly $Y_i$ is a function of~$X_0, \ldots, X_i$ and $\E[ |Y_i| ] \le \infty$. Moreover, 
$$
    \begin{aligned}
        \E[ Y_{i + 1} \mid \mbfS{}_i ] 
        = \E \left[ \E \left[ \Delta_1 \,\middle\vert\, \mbfS{}_i, X_{i + 1} \right] \,\middle\vert\, \mbfS{}_i \right] 
        = \E \left[ \Delta_1 \,\middle\vert\, \mbfS{}_i \right] 
        = Y_i. 
    \end{aligned}
$$    
The sequence $Y_0, \ldots, Y_n$ satisfies all conditions specified in Definition~\ref{def: martingale} and is a martingale. 
By definition, $Y_{ | \mcalU{} | }  = \E [ \Delta_1 \mid \mbfS{}_{ | \mcalU{} |  } ] = \Delta_1$, as once the values of $X_1, \ldots, X_{ | \mcalU{} | }$ are determined, so is $\Delta_1$. 
And we have $Y_0 = \E \left[ \Delta_1 \,\middle\vert\, X_0 \right] = \E \left[ \Delta_1 \right]$, as $X_0 \equiv 0$.

Observe that 
$$
    \Delta_1 = \Delta_1 - \E[ \Delta_1] + \E[ \Delta_1] = Y_{ | \mcalU{} | }  - Y_0 + Y_0\,, 
$$
In order to upper bound $\Delta_1$, we can upper bound both $Y_{ | \mcalU{} | }  - Y_0$ and~$Y_0$. 

\vspace{1mm}
\noindent \textbf{Step 1: Bounding $\Delta_1 - \E[ \Delta_1]$.}

We upper bound it via Azuma's Inequality (Fact~\ref{fact:Azuma}). 
We prove that,
\begin{equation} \label{ineq: martingale diff 1}
    A_i \le Y_i - Y_{i - 1}  \le A_i + 2 \sqrt{2}\,, 
\end{equation}
for some random variables~$\{A_i\}$ that are functions of $X_0, \ldots, X_{i - 1}$. 
By Azuma's inequality, 
$$
    \Pr\left[ |Y_{ | \mcalU{} | }  - Y_0| \ge \left( 2 \sqrt{| \mcalU{} | \ln \frac{2}{\beta'} } \right) \right] \le 2 \exp \left( - \frac{  2 \left( 2 \sqrt{| \mcalU{} | \ln ( 2 / \beta' ) } \right)^2  }{ \sum_{i \in \mcalU{} } (2 \sqrt{2} )^2 } \right) \le \beta'\,.
$$

\textbf{Proof of Inequality~(\ref{ineq: martingale diff 1})}. We prove that the gap between the upper and lower bounds on $Y_i - Y_{i - 1}$ is at most $2 \sqrt{2}$. By the definitions of $Y_i$ and $Y_{i - 1}$,
$$
    Y_i - Y_{i - 1}  = \E[ \Delta_1 \mid \mbfS{}_i ] - \E[ \Delta_1 \mid \mbfS{}_{i - 1} ]\,.
$$
Let $\mcalU{} \setminus \mbfS{}_{i - 1}$ be the set of integers in~$\mcalU{}$ that are distinct from $X_1, \ldots, X_{i - 1}$. 
Define 
\begin{align*}
    A_i = \inf_{ x \in \mcalU{} \setminus \mbfS{}_{i - 1} } \E[ \Delta_1 \mid \mbfS{}_{i - 1}, X_i = x ] - \E[ \Delta_1 \mid \mbfS{}_{i - 1} ]\,,\, \text{and}\quad
    B_i = \sup_{ x' \in \mcalU{} \setminus \mbfS{}_{i - 1} } \E[ \Delta_1 \mid \mbfS{}_{i - 1}, X_i = x' ] - \E[ \Delta_1 \mid \mbfS{}_{i - 1} ]\,.
\end{align*}
Clearly $A_i \le Y_i - Y_{i - 1} \le B_i$. 
We  prove that $B_i - A_i \le 2 \sqrt{2}$. 
Let $x_0 \equiv 0$, and for each $j \ge 0$, $\pmbx{}_j = (x_0, x_1, \ldots, x_j )$ be the sequence that consists of a starting~$0$, and the first $j$ entries of a possible permutation of~$\mcalU{}$. 
For each~$i \ge 1$, let $\mcalU{} \setminus \pmbx{}_{i - 1}$ be the set of integers in $\mcalU{}$ that are distinct from $x_1, \ldots, x_{i - 1}$. 
Conditioned on $\mbfS{}_{i - 1} = \pmbx{}_{i - 1}$, 
\begin{align*} 
    B_i - A_i 
    &= \sup_{ x' \in \mcalU{} \setminus \pmbx{}_{i - 1} } \E[ \Delta_1 \mid \mbfS{}_{i - 1} = \pmbx{}_{i - 1}, X_i = x' ] - \inf_{ x \in \mcalU{} \setminus \pmbx{}_{i - 1} }\E[ \Delta_1 \mid \mbfS{}_{i - 1} = \pmbx{}_{i - 1}, X_i = x ] \\
    &= \sup_{ x', x \in \mcalU{} \setminus \pmbx{}_{i - 1} } \left( \E[ \Delta_1 \mid \mbfS{}_{i - 1} = \pmbx{}_{i - 1}, X_i = x' ] - \E[ \Delta_1 \mid \mbfS{}_{i - 1} = \pmbx{}_{i - 1},  X_i = x ] \right)\,.
\end{align*}
It suffices to bound this for every possible sequence of $\pmbx{}_{i - 1}$. 

Consider a fixed $i \in \mcalU{}$ and $\pmbx{}_{i - 1}$. Define
$$
    \gamma_{x', x} \doteq \E[ \Delta_1 \mid \mbfS{}_{i - 1} = \pmbx{}_{i - 1}, X_i = x' ] - \E[ \Delta_1 \mid \mbfS{}_{i - 1} = \pmbx{}_{i - 1}, X_i = x ] \,.
$$
Our goal is to prove for all $x', x \in \mcalU{} \setminus \pmbx{}_{i - 1}$, $\gamma_{x', x'} \le 2 \sqrt{2}$. 
It follows that $\sup_{ x', x \in \mcalU{} \setminus \pmbx{}_{i - 1} } \gamma_{x', x'} \le 2 \sqrt{2}$.
If $x' = x$, then $\gamma_{x', x'} = 0$. 
Suppose $x' \neq x$. 
As $\mbfX{}$ is a random permutation of $\mcalU{}$, conditioned on $\mbfS{}_{i - 1} = \pmbx{}_{i - 1}$ and $X_i = x'$, with equal probability, one of the elements $X_{i + 1}, \ldots, X_{ | \mcalU{} | }$ equals $x$. 
Hence, 
$$
    \E[ \Delta_1 \mid \mbfS{}_{i - 1} = \pmbx{}_{i - 1}, X_i = x' ] 
    = \frac{1}{ | \mcalU{} | - i} \sum_{ \ell = i + 1 }^{ | \mcalU{} | } \E[ \Delta_1 \mid \mbfS{}_{i - 1} = \pmbx{}_{i - 1}, X_i = x', X_\ell  = x ]\,. 
$$
Similarly, it holds that 
$$
    \E[ \Delta_1 \mid \mbfS{}_{i - 1} = \pmbx{}_{i - 1}, X_i = x ] 
    = \frac{1}{ | \mcalU{} | - i} \sum_{ \ell = i + 1 }^{ | \mcalU{} | } \E[ \Delta_1 \mid \mbfS{}_{i - 1} = \pmbx{}_{i - 1}, X_i = x, X_\ell  = x' ]\,. 
$$

\noindent By triangle inequality,
$$
     \gamma_{x', x} 
        = \frac{1}{ | \mcalU{} | - i} \sum_{ \ell = i + 1 }^{ | \mcalU{} | } 
        \left[ 
            \E[ \Delta_1 \mid \mbfS{}_{i - 1} = \pmbx{}_{i - 1}, X_i = x', X_\ell  = x ] 
            - \E[ \Delta_1 \mid \mbfS{}_{i - 1} = \pmbx{}_{i - 1}, X_i = x, X_\ell  = x' ] 
        \right]
$$

For all permutation sequences $\pmbx{} = (x_1, \ldots, x_{ | \mcalU{} | })$ and for all $i \neq \ell \in \mcalU{}$, define $\pmbx{}_{i, \ell}$ to be the sequence with the values of~$x_i$ and~$x_\ell$ being swapped. We claim it holds that 
\begin{equation} \label{ineq:swap-diff}
    | \Delta_1( \pmbx{} ) - \Delta_1( \pmbx{}_{i, \ell} ) | \le 2 \sqrt{2}, 
\end{equation}
This proves that 
$$
    \E[ \Delta_1 \mid \mbfS{}_{i - 1} = \pmbx{}_{i - 1}, X_i = x', X_\ell  = x ] 
    - \E[ \Delta_1 \mid \mbfS{}_{i - 1} = \pmbx{}_{i - 1}, X_i = x, X_\ell  = x' ]  \le 2 \sqrt{2},\,
    \quad
    \text{and}\quad
    \gamma_{x', x} \le 2 \sqrt{2}.
$$ 

To prove Inequality~(\ref{ineq:swap-diff}), recall that $\Delta_1 = \sum_{j \in [k] } \left\Vert f_\mcalU{}_j - \frac{ f_\mcalU{} }{ k }  \right\Vert_2$. 

If $v^{ (x_i) } = v^{ ( x_\ell )}$ or there exists some $j \in [k]$, s.t. both $x_i, x_\ell \in \mcalU{}_j$, then the swap does not change~$\Delta_1$, and $| \Delta_1( \pmbx{} ) - \Delta_1( \pmbx{}_{i, \ell} ) | = 0$.
    
Otherwise, $v^{ (x_i) } \neq v^{ (x_\ell )}$ and $x_i \in \mcalU{}_j$, $x_\ell \in \mcalU{}_{j'}$ for different $j, j' \in [k]$. The swap affects only $\left\Vert f_\mcalU{}_j - \frac{ f_\mcalU{} }{ k }  \right\Vert_2$ and $\left\Vert f_\mcalU{}_{j'} - \frac{ f_\mcalU{} }{ k }  \right\Vert_2$. 
    Let $f_{\mcalU{}_j, \pmbx{} }$ and $f_{\mcalU{}_j, \pmbx{}_{i, \ell} }$ be the frequency vectors when $\mbfX{} = \pmbx{}$ and $\mbfX{} = \pmbx{}_{i, \ell}$, respectively. 
    They differ in both the $( v^{ (x_i ) } )^\text{th}$ and $( v^{ (x_\ell ) } )^\text{th}$ coordinates, each by 1. 
    
    If we view $f_{\mcalU{}_j, \pmbx{} } - \frac{ f_\mcalU{} }{ k }$ and $f_{\mcalU{}_j, \pmbx{}_{i, \ell} } - \frac{ f_\mcalU{} }{ k }$ as $|\mcalD{}|$-dimensional vectors, it holds that 
    $$
        \left( f_{\mcalU{}_j, \pmbx{} } - \frac{ f_\mcalU{} }{ k } \right) - \left( f_{\mcalU{}_j, \pmbx{}_{i, \ell} } - \frac{ f_\mcalU{} }{ k } \right) = - \pmbe{}_{ v^{ (x_{i} ) } } + \pmbe{}_{ v^{ (x_\ell ) } },  
    $$
    where $\pmbe{}_{ v^{ (x_{i} ) } }$ and $\pmbe{}_{ v^{ (x_\ell ) } }$ are the $v^{ (x_{i} ) }$-th and the $v^{ (x_\ell ) }$-th standard basis vectors in $\R^{ |\mcalD{} | }$ respectively. By the triangle inequality, 
    $$
        \left\Vert  f_{\mcalU{}_j, \pmbx{} } - \frac{ f_\mcalU{} }{ k } \right\Vert_2 - \left\Vert f_{\mcalU{}_j, \pmbx{}_{i, \ell} } - \frac{ f_\mcalU{} }{ k } \right\Vert_2 \le \Vert - \pmbe{}_{ v^{ (x_{i} ) } } + \pmbe{}_{ v^{ (x_\ell ) } } \Vert_2 = \sqrt{2}. 
    $$

    Similarly, we can prove that the change of $\left\Vert f_\mcalU{}_{j'} - \frac{ f_\mcalU{} }{ k }  \right\Vert_2$ is bounded by $\sqrt{2}$. Therefore, $| \Delta_1( \pmbx{} ) - \Delta_1( \pmbx{}_{i, \ell} ) | \le 2 \sqrt{2}$.

\noindent \textbf{Step 2: Bounding $\E[ \Delta_1]$.} 

By linearity of expectation, 
$$
    Y_0 = \E[ \Delta_1 ] = \sum_{j \in [k] } 
    \E\left[ \left\Vert f_\mcalU{}_j - \frac{ f_\mcalU{} }{ k }  \right\Vert_2 \right].
$$
For a fixed $j \in [k]$, by Jensen's inequality, it holds that 
\begin{align*}
    \E\left[ \left\Vert f_\mcalU{}_j - \frac{ f_\mcalU{} }{ k }  \right\Vert_2 \right]
        =  \E\left[ \sqrt{ \sum_{ v \in \mcalD{} } \left( f_\mcalU{}_j [ v ] -  \frac{ f_\mcalU{} [ v ] }{ k} \right)^2 } \right]
        \le \sqrt{ \sum_{ v \in \mcalD{} } \E\left[  \left( f_\mcalU{}_j [ v ] -  \frac{ f_\mcalU{} [ v ] }{ k} \right)^2 \right] }\,.
\end{align*}

Consider a fixed $v \in \mcalD{}$, define $\mcalU{} [ v ] \doteq \{ u \in \mcalU{} : v^{ (u) } = v \}$ as the set of users holding element $v$. It holds that $| \mcalU{}[ v ] | = f_\mcalU [ v ]$.
For each $u \in \mcalU{} [v]$, define the indicator random variable~$Z^{ (u) }$ for the event $u \in { \mcalU{}_j }$. 
Then $\Pr[ Z^{ (u) } = 1 ] = 1 / k$ and $\Pr[ Z^{ (u) } = 0 ] = 1 - 1 / k$. 
Then 
$$
    f_{ \mcalU{}_j } [ v ] = \sum_{ u \in \mcalU{} [ v ] } Z^{ (u) },
$$
is a sum of~$f_\mcalU[ v ]$ dependent random variables with expectation~$f_\mcalU[ v ] / k$. For a pair of users $u, u' \in \mcalU{} [ v ], u \neq u'$, due the permutation, if~$u$ belongs to~$\mcalU{}_j$, it is less likely that~$u'$ belongs to~$\mcalU{}_j$. In particular, 
\begin{align*}
   \Cov{} \left[ Z^{ (u) }, Z^{ (u') } \right] 
    &= \E\left[ Z^{ (u) } \cdot Z^{ (u') } \right] - \E\left[ Z^{ (u) } \right] \E \left[ Z^{ (u') } \right] \\
    &= \binom{ | \mcalU{} | - 2}{ | \mcalU{} | / k - 2} / \binom{ | \mcalU{} | }{ | \mcalU{} | / k} - \left( \frac{1}{k} \right)^2 \\
    &= \frac{ | \mcalU{} | / k \cdot ( | \mcalU{} | / k - 1)}{ | \mcalU{} | ( | \mcalU{} | - 1)} - \left( \frac{1}{k} \right)^2 \\
    &\le 0\,.
\end{align*}
Hence, 
\begin{align*}
    \E\left[  \left( f_\mcalU{}_j [ v ] -  \frac{ f_\mcalU{} [ v ] }{ k} \right)^2 \right]
        &= \Var{} \left[ f_\mcalU{}_j [ v ] \right] 
        =  \sum_{ u \in \mcalU{} [ v ] } \Var{} \left[ Z^{ (u) } \right] +
            \sum_{ u \neq u' \in \mcalU{} [ v ] } \Cov{} \left[ Z^{ (u) }, Z^{ (u') } \right]
        \\
        &\le \sum_{ u \in \mcalU{} [ v ] } \Var{} \left[ Z^{ (u) } \right]
        = f_\mcalU [ v ] \cdot \frac{1}{k} \cdot \left( 1 - \frac{1}{k} \right) \le \frac{ f_\mcalU [ v ] }{ k }\,.
\end{align*}
Therefore, 
$$
    \E\left[ \left\Vert f_\mcalU{}_j - \frac{ f_\mcalU{} }{ k }  \right\Vert_2 \right] \le \sqrt{ \sum_{ v \in \mcalD{} } \E\left[  \left( f_\mcalU{}_j [ v ] -  \frac{ f_\mcalU{} [ v ] }{ k} \right)^2 \right] } \le \sqrt{ \sum_{ v \in \mcalD{} } \frac{ f_\mcalU{} [ v ] }{ k } } = \sqrt{ \frac{ | \mcalU{} |}{k} }\,,
$$
and 
$$
    Y_0 = \E[ \Delta_1 ] = \sum_{j \in [k] } 
    \E\left[ \left\Vert f_\mcalU{}_j - \frac{ f_\mcalU{} }{ k }  \right\Vert_2 \right] \le \sqrt{  | \mcalU{} | k }\,. 
$$

\newpage
\section{Proofs For Section~\ref{sec: succinct histogram}} \label{appendix: proof for sec: succinct histogram}

This section is organized as follows: 
\begin{enumerate}
    \item In Section~\ref{appendix: proof for thm: hadaheavy estiamtion error of single element}, we provide the detailed proof for Theorem~\ref{thm: hadaheavy estiamtion error of single element}. 
    \item In Section~\ref{appendix: proof for thm: elements in P_tau}, we provide the detailed proof for Theorem~\ref{thm: elements in P_tau}.
\end{enumerate}

\subsection{Theorem~\ref{thm: hadaheavy estiamtion error of single element}} \label{appendix: proof for thm: hadaheavy estiamtion error of single element}

\textbf{Theorem~\ref{thm: hadaheavy estiamtion error of single element}.}
    For each $\tau \in [L]$, fix some query string~$\pmbs{} \in \Lambda^\tau$ for the frequency estimate. 
    It holds that, with probability~$1 - \beta'$,
    $$
        | \hat f_\mcalU{} [ \pmbs{} ] - f_\mcalU{} [ \pmbs{} ] | \in O( ( 1 / \eps) \sqrt{ n \cdot (\log d) \cdot (\ln (1 / \beta') ) / \ln n } )\,.
    $$

\begin{proof}[{\bf Proof of Theorem~\ref{thm: hadaheavy estiamtion error of single element}}]

    Recall that for each $\tau \in [L]$ and each $\pmbs{} \in \Lambda^\tau$, $f_{\U_\tau} [ \pmbs{} ] \doteq | \{ u \in \U_\tau : v^{ (u) }[1 : \tau] = \pmbs{} \} |$ is the frequency of~$\pmbs{}$ in $\U_\tau$, and $\hat f_{\U_\tau} [ \pmbs{} ]$ is its estimate by \oracle. 
    
    For each $\tau \in [L]$, define frequency vector~$f_{\U_\tau } \doteq \big( f_{\U_\tau} [ \pmbs{} ] : \pmbs{} \in \Lambda^\tau \big)$.
    Denote the~$\ell_2$ distance between the frequency vector~$f_{\U_\tau }$ and its expectation~$f_\mcalU{} / L$ as 
    $$
        \left\Vert f_{\U_\tau } - \frac{ f_\mcalU{} }{ L }  \right\Vert_2 \doteq \sqrt{ \sum_{ \pmbs{} \in \Lambda^\tau } \left( f_{\U_\tau } [ \pmbs{} ] -  \frac{ f_\mcalU{} [ \pmbs{} ] }{ L } \right)^2 }\,. 
    $$

    To prove Theorem~\ref{thm: hadaheavy estiamtion error of single element}, we need the following lemmas. 
    
    \begin{lemma} \label{lem: concentration of U_tau}
        For each $\tau \in [L]$, with probability $1 - \beta'$, it holds that $| \U_\tau | \in O \left( n / L \right)$. 
    \end{lemma}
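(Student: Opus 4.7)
The plan is to handle the two partitioning schemes separately, since \ours allows either. Under \emph{permutation partitioning}, the claim is immediate because $|\U_\tau| = n/L$ holds deterministically (assuming $L \mid n$; otherwise rounding changes the size by at most one). Under \emph{independent partitioning}, each user is assigned to $\U_\tau$ independently with probability $1/L$, so
\[
    |\U_\tau| = \sum_{u \in \mcalU{}} \dsone{}[u \in \U_\tau]
\]
is a sum of $n$ independent $\{0,1\}$-valued random variables with mean $\mu \doteq n/L$ and variance at most $\mu$.

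For the independent case I would invoke a standard concentration inequality. The cleanest route is the multiplicative Chernoff bound (Fact~\ref{fact: prototype chernoff bound}) with $\delta = 1$, giving
\[
    \Pr\bigl[\,|\U_\tau| \ge 2\mu\,\bigr] \le e^{-\mu/3} = \exp\bigl(-n/(3L)\bigr),
\]
which is at most $\beta'$ whenever $n/L \ge 3\ln(1/\beta')$. Alternatively, Bernstein's inequality (Fact~\ref{fact: bernstein}) with $c = 1$ yields the additive form $\bigl||\U_\tau| - n/L\bigr| \le \sqrt{2(n/L)\ln(2/\beta')} + \tfrac{2}{3}\ln(2/\beta')$ with the same failure probability; either suffices.

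In the parameter regime of interest to the paper (where the heavy-hitter bound is non-trivial, which forces $n/L \gtrsim \ln(n/\beta)$), the deviation terms are $O(\mu)$, so $|\U_\tau| \in O(n/L)$ as claimed. The argument is routine and poses no real obstacle; the only point requiring mild care is to verify that the working regime guarantees $n/L$ dominates $\ln(1/\beta')$, which is precisely the condition under which concentration produces a $\Theta(n/L)$ upper bound rather than the trivial bound $|\U_\tau| \le n$. This condition is inherited from the overall parameter setting of \ours and the standing assumption, already used in the proofs for \oracle, that the subset sizes are large compared to $\ln(1/\beta')$.
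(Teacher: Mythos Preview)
Your proposal is correct and follows essentially the same approach as the paper: the permutation case is trivial, and the independent case is handled by a multiplicative Chernoff bound together with the parameter assumption $n/L \gtrsim \ln(1/\beta')$ (the paper states this as $n \ge (1/\eps^2)\cdot L \cdot \ln(1/\beta')$, needed for the error bound in Theorem~\ref{thm: hadaheavy estiamtion error of single element} to be nontrivial). The only cosmetic difference is constants: the paper takes $\delta = e-1$ to get $\Pr[\,|\U_\tau| > e\,n/L\,] \le e^{-n/L}$, whereas you take $\delta = 1$.
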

    
    \begin{lemma} \label{lem: concentration of f_U_tau}
        For each $\tau \in [L]$, with probability $1 - \beta'$, it holds that 
        $$
                \left\Vert f_{\U_\tau } - \frac{ f_\mcalU{} }{ L }  \right\Vert_2 \in O \left( \sqrt{ \frac{ n }{L} \ln \frac{1}{\beta'} } \right)\,.
        $$
    \end{lemma}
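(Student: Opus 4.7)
The plan is to follow the two-step structure used in the proof of Lemma~\ref{lem: bounds of Delta}, specialised to a single subset $\U_\tau$ rather than the sum over all $L$ subsets: I will bound $\E[\Vert f_{\U_\tau} - f_\mcalU / L\Vert_2]$ and the deviation $\Vert f_{\U_\tau} - f_\mcalU / L\Vert_2 - \E[\Vert f_{\U_\tau} - f_\mcalU / L\Vert_2]$ separately, for both \emph{independent} and \emph{permutation} partitioning.

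For the expectation, Jensen's inequality on the concave square-root gives
$$
    \E\!\left[\Vert f_{\U_\tau} - f_\mcalU / L \Vert_2\right] \;\le\; \sqrt{ \sum_{v \in \mcalD} \Var\!\left[ f_{\U_\tau}[v] \right] },
$$
and Step~2 of the Lemma~\ref{lem: bounds of Delta} proofs already supplies the bound $\Var[f_{\U_\tau}[v]] \le f_\mcalU[v]/L$ (for permutation partitioning via non-positive pairwise covariances). Summing over $v \in \mcalD$ and using $\sum_v f_\mcalU[v] = n$ yields $\E[\Vert f_{\U_\tau} - f_\mcalU/L\Vert_2] \le \sqrt{n/L}$.

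For the deviation, let $\Psi \doteq \Vert f_{\U_\tau} - f_\mcalU/L\Vert_2$; a single-coordinate change perturbs $\Psi$ by at most $1$ (independent partitioning) or at most $\sqrt 2$ (swap under permutation partitioning). A direct McDiarmid/Azuma application therefore gives deviation only $O(\sqrt{n \ln(1/\beta')})$, losing a $\sqrt L$ factor relative to the claimed bound, so I will treat the two schemes differently. For \emph{permutation partitioning}, $\Psi$ is invariant to reordering within $\U_\tau$ and within $\mcalU \setminus \U_\tau$, so it is an $(n/L, \, n(1 - 1/L))$-symmetric function in the sense of Definition~\ref{def: symmetric function} with swap constant $\sqrt 2$; Fact~\ref{fact: McDiarmid Inequality Permutation} then yields
$$
    \Pr\!\left[\, \Psi - \E \Psi \ge \eta \,\right] \;\le\; \exp\!\left( - \tfrac{ \eta^2 }{ (n/L) } \cdot \Theta(1) \right),
$$
and setting $\eta = \Theta(\sqrt{(n/L) \ln(1/\beta')})$ gives the required tail bound. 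For \emph{independent partitioning}, I will use the Bernstein-type martingale inequality (Fact~\ref{fact: martingale-bernstein}). With $I_u \doteq \dsone[X_u = \tau]$, the $\{I_u\}$ are i.i.d.\ Bernoulli$(1/L)$ and $\Psi$ is $1$-Lipschitz in each $I_u$. The Doob martingale $Y_i = \E[\Psi \mid I_1, \ldots, I_i]$ satisfies $|Y_i - Y_{i-1}| \le 1$ and, writing $h(b; \mathbf{I}_{<i}) \doteq \E[\Psi \mid \mathbf{I}_{<i}, I_i = b]$,
$$
\Var[Y_i \mid \mathbf{I}_{<i}] \;=\; \tfrac1L\!\left(1 - \tfrac1L\right)\!\bigl(h(1; \mathbf{I}_{<i}) - h(0; \mathbf{I}_{<i})\bigr)^2 \;\le\; \tfrac1L,
$$
because the $1$-Lipschitz property of $\Psi$ yields $|h(1;\cdot) - h(0;\cdot)| \le 1$. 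Summing the $\sigma_i^2$ gives $n/L$, and Fact~\ref{fact: martingale-bernstein} produces deviation $O(\sqrt{(n/L)\ln(1/\beta')} + \ln(1/\beta'))$, which combined with the expectation bound simplifies to $O(\sqrt{(n/L)\ln(1/\beta')})$ in the regime $\ln(1/\beta') \le n/L$.

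The main obstacle is extracting the $1/L$ factor that separates the claimed bound from the looser $O(\sqrt{n \ln(1/\beta')})$ that comes from a direct Azuma/McDiarmid application. For independent partitioning this requires the Bernstein-style martingale with careful bookkeeping of the conditional variance; for permutation partitioning it requires correctly identifying the symmetry group so as to invoke Fact~\ref{fact: McDiarmid Inequality Permutation} with $n_1 = n/L$ rather than $n_1 = n$, and verifying that the two-sided tail derived from the one-sided inequality still delivers the stated high-probability bound on $\Psi$.
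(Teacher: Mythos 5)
Your proposal is correct and follows essentially the same route as the paper: Jensen plus the per-coordinate variance bound $\Var[f_{\U_\tau}[v]] \le f_\mcalU[v]/L$ for the expectation, the permutation McDiarmid inequality (Fact~\ref{fact: McDiarmid Inequality Permutation}) with swap constant $\sqrt{2}$ and $n_1 = n/L$ for permutation partitioning, and the Bernstein-type martingale inequality (Fact~\ref{fact: martingale-bernstein}) with $|Y_i - Y_{i-1}| \le 1$ and $\Var[Y_i \mid \cdot] \le 1/L$ for independent partitioning. Your observation that a direct McDiarmid/Azuma application loses the $\sqrt{L}$ factor, and your explicit conditional-variance computation $\Var[Y_i \mid \mathbf{I}_{<i}] = \tfrac1L(1-\tfrac1L)(h(1)-h(0))^2$, match the paper's reasoning exactly, down to the final regime restriction $\ln(1/\beta') \lesssim n/L$.
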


    We need to prove the lemmas for both {\it independent partitioning} and {\it permutation partitioning}.
    The proofs are technical, so we defer them to the end of the proof.
    For now, we show how to put them together to complete the proof of Theorem~\ref{thm: hadaheavy estiamtion error of single element}. 
    
    For each $\tau \in [L]$, each $\pmbs{} \in \Lambda^\tau$, we regard $\hat f_\mcalU{} [ \pmbs{} ] = L \cdot \hat f_{\U_\tau} [ \pmbs{} ]$ as an estimate of $f_\mcalU{} [ \pmbs{} ]$. By triangle inequality, 
        $$
            \left| L \cdot \hat f_{\U_\tau} [ \pmbs{} ] - f_\mcalU{} [ \pmbs{} ] \right| 
            \le \left| L \cdot \hat f_{\U_\tau} [ \pmbs{} ] - L \cdot f_{\U_\tau} [ \pmbs{} ] \right| 
                + \left| L \cdot f_{\U_\tau} [ \pmbs{} ] - f_\mcalU{} [ \pmbs{} ] \right|,
        $$
    where $\hat f_{\U_\tau} [ \pmbs{} ]$ is the estimate of $f_{\U_\tau} [ \pmbs{} ]$ returned by \oracle. 
    By Corollary~\ref{thm: our oracle} and Lemma~\ref{lem: concentration of U_tau}, it holds with probability at least $1 - \beta' / 2$, 
        $$
            \left| \hat f_{\U_\tau} [ \pmbs{} ] - f_{\U_\tau } [ \pmbs{} ] \right| \in O \left( \frac{1}{\eps} \sqrt{ | \U_\tau| \ln \frac{ 1 }{\beta'} } \right) \subseteq O \left( \frac{1}{\eps} \sqrt{ \frac{ n }{ L } \ln \frac{ 1 }{\beta'} } \right). 
        $$
    
    By Lemma~\ref{lem: concentration of f_U_tau}, with probability at least $1 - \beta' / 2$, 
        $$
            \left| f_{\U_\tau} [ \pmbs{} ] - f_\mcalU{} [ \pmbs{} ] / L \right| \le \left\Vert f_{\U_\tau } - f_\mcalU{} / L \right\Vert_2 
            \in O \left( \sqrt{ \frac{ n }{ L } \cdot \ln \frac{L}{\beta'} } \right)\,.
        $$
    By a union bound, with probability at least $1 -  \beta'$, it holds that 
    $$
        \left| L \cdot \hat f_{\U_\tau} [ \pmbs{} ] - f_\mcalU{} [ \pmbs{} ] \right| \in O \left( \frac{1}{\eps} \sqrt{ n \cdot L \cdot \ln \frac{ 1 }{\beta'} } \right)\,.
    $$
    
    We finish the proof by substituting $L = 2 \cdot (\log d) / \log n$. 

\end{proof}

In what follows, we need to prove Lemma~\ref{lem: concentration of U_tau} and Lemma~\ref{lem: concentration of f_U_tau}.

First, we prove Lemma~\ref{lem: concentration of U_tau}.
The lemma holds trivially for {\bf permutation partitioning}, as in such case it holds that~$|\U_\tau| = |\mcalU{}| / L$ for all $\tau \in [L]$.  
We need to prove the lemma for \textbf{independent partitioning}.

\subsubsection{Proof of Lemma~\ref{lem: concentration of U_tau} for {\bf Independent Partitioning}}

    Let us fix a $\tau \in [L]$, for each $u \in [n]$, define the indicator random variable $X_u$ that equals $1$ if $u \in \U_\tau$ and $0$ otherwise. Then $| \U_\tau | = \sum_{u \in [n] } X_u$, $\E[ | \U_\tau | ] = n / L$. By Chernoff bound (Fact~\ref{fact: prototype chernoff bound}), it holds that 
    $$
        \Pr[ | \U_\tau | > e n / L ] \le \left( \frac{ e^{e - 1} }{ e^e } \right)^{n / L} = \frac{1}{ e^{n / L} }\,.
    $$
    
    Recall that $L = 2 \cdot (\log d) / \log n$. Further, in order that the bound $O \left( \frac{1}{\eps} \sqrt{ n \cdot \frac{\log d}{\log n} \cdot \ln \frac{1}{\beta'}} \right)$ in Theorem~\ref{thm: hadaheavy estiamtion error of single element} to be meaningful, we need the assumption that $n \ge \frac{1}{\eps^2} \cdot L \cdot \ln \frac{1}{\beta'}$. Therefore, 
    $$
        \frac{1}{ e^{n / L} } \le e^{ - \ln (1 / \beta') } = \beta'\,.
    $$
    It concludes that  $|\U_\tau| \in O( n / L)$ with probability at least~$1 - \beta'$.

\noindent $\blacksquare$

\vspace{8mm}

This finishes the proof of Lemma~\ref{lem: concentration of U_tau}. 
Next, we prove Lemma~\ref{lem: concentration of f_U_tau}.
We need to prove the lemma for both {\bf permutation partitioning} and {\bf independent partitioning}.

\subsubsection{Proof of Lemma~\ref{lem: concentration of f_U_tau} for \textbf{permutation partitioning}}

    Consider a fixed $\tau \in [L]$. 
    We have
    $$
        \left\Vert f_{\U_\tau } - \frac{ f_\mcalU }{ L } \right\Vert_2 - \E \left[ \left\Vert f_{\U_\tau } - \frac{ f_\mcalU }{ L } \right\Vert_2 \right] + \E \left[ \left\Vert f_{\U_\tau } - \frac{ f_\mcalU }{ L } \right\Vert_2 \right].
    $$
    We will bound each term separately. 
    
    \noindent \textbf{Step 1: Bounding $\E \left[ \left\Vert f_{\U_\tau } - { f_\mcalU } / { L } \right\Vert_2 \right]$.} 
    
    By Jensen's inequality, 
    \begin{align*}
        \E \left[ \left\Vert f_{\U_\tau } - \frac{ f_\mcalU }{ L }  \right\Vert_2 \right] \le \sqrt{ \sum_{ \pmbs{} \in \Lambda^\tau } \E \left[ \left( f_{\U_\tau } [ \pmbs{} ] - \frac{ 1 }{ L } f_\mcalU [ \pmbs{} ]  \right)_2^2 \right] } = \sqrt{ \sum_{ \pmbs{} \in \Lambda^\tau } \Var \left[  f_{\U_\tau } [ \pmbs{} ] \right] }\,.
    \end{align*}
    
    For each user $u \in \mcalU{}$, define $\pmbs{}^{ (u) } \doteq v^{ (u) } [ 1 : \tau ]$, the prefix of $v^{ (u) }$ with length $\tau$. 
    Consider a fixed $\pmbs{} \in \Lambda^\tau$, define $\U[\pmbs{} ] \doteq \{ u \in \mcalU{} : \pmbs{}^{ (u) } = \pmbs{} \}$ as the set of users holding element $\pmbs{}$. 
    It holds that $| \U{}[\pmbs{} ] | = f_\mcalU [ \pmbs{} ]$.
    For all $u \in \U [\pmbs{}]$, define the indicator random variable $Z^{ (u) }$ to represent the event $u \in  \U_\tau $.
    Then $\Pr[ Z^{ (u) } = 1 ] = 1 / L$ and $\Pr[ Z^{ (u) } = 0 ] = 1 - 1 / L$. 
    Then 
    $$
        f_{ \U_\tau } [ \pmbs{} ] = \sum_{ u \in \U [ \pmbs{} ] } Z^{ (u) }\,,
    $$
    is a sum of~$f_\mcalU[ \pmbs{} ]$ dependent random variables with expectation $f_\mcalU[ \pmbs{} ] / L$. For a pair of users $u, u' \in \U{} [ \pmbs{} ], u \neq u'$, due to the permutation, if~$u$ belongs to~$\U_\tau$, it is less likely that~$u'$ belongs to~$\U_\tau$. 
    In particular, 
    \begin{align*}
       \Cov{} \left[ Z^{ (u) }, Z^{ (u') } \right] 
        &= \E\left[ Z^{ (u) } \cdot Z^{ (u') } \right] - \E\left[ Z^{ (u) } \right] \E \left[ Z^{ (u') } \right] \\
        &= \binom{n - 2}{n / L - 2} / \binom{n}{n / L} - \left( \frac{1}{L} \right)^2 
        = \frac{ n / L \cdot (n / L - 1)}{n (n - 1)} - \left( \frac{1}{L} \right)^2 
        \le 0\,.
    \end{align*}
    Hence, 
    \begin{align*}
        \Var{} \left[  f_{ \U_\tau } [ \pmbs{} ] \right] 
            &=  \sum_{ u \in \U{} [ \pmbs{} ] } \Var{} \left[ Z^{ (u) } \right] +
                \sum_{ u \neq u' \in \U{} [ \pmbs{} ] } \Cov{} \left[ Z^{ (u) }, Z^{ (u') } \right] 
            &\le \sum_{ u \in \U{} [ \pmbs{} ] } \Var{} \left[ Z^{ (u) } \right]
            = f_\mcalU [ \pmbs{} ] \cdot \frac{1}{L} \cdot \left( 1 - \frac{1}{L} \right) \le \frac{ f_\mcalU [ \pmbs{} ] }{ L }\,.
    \end{align*}
    Therefore, 
    $$
        \E\left[ \left\Vert f_{ \U_\tau } - \frac{ f_\mcalU }{ L }  \right\Vert_2 \right] = \sqrt{ \sum_{ \pmbs{} \in \Lambda^\tau } \Var \left[  f_{\U_\tau } [ \pmbs{} ] \right] } \le \sqrt{ \sum_{ \pmbs{} \in \Lambda^\tau  } \frac{ f_\mcalU [ \pmbs{} ] }{ L } } = \sqrt{ \frac{n}{L} }\,.
    $$

    \vspace{5mm}
    \noindent \textbf{Step 2: Bounding $\left\Vert f_{\U_\tau } - { f_\mcalU } / { L } \right\Vert_2 - \E \left[ \left\Vert f_{\U_\tau } - { f_\mcalU } / { L } \right\Vert_2 \right]$. }
    
    By symmetry, we prove just the case when $\tau = 1$.
    
    Without loss of generality, assume that the~$n$ users in~$\mcalU{}$ are indexed by~$[n] = \{1, 2, \ldots, n \}$.
    Let $\mbfX{} = \{ X_1, \ldots, X_n \}$ be a random permutation of~$[n]$.
    As $\left\Vert f_{\U_1 } - { f_\mcalU } / { L }  \right\Vert_2$ is a function that depends on $\mbfX{}$, we write it explicitly as $\left\Vert f_{\U_1 } - { f_\mcalU } / { L }  \right\Vert_2(X_1, \ldots, X_n)$ or $\left\Vert f_{\U_1 } - { f_\mcalU } / { L }  \right\Vert_2(\mbfX{})$ when necessary. 
    For a possible permutation $\pmbx{} = \{ x_1, \ldots, x_n \}$ of $[n]$, we use $\left\Vert f_{\U_1 } - { f_\mcalU } / { L }  \right\Vert_2(x_1, \ldots, x_n)$ or $\left\Vert f_{\U_1 } - { f_\mcalU } / { L }  \right\Vert_2(\pmbx{})$ to denote the value of $\left\Vert f_{\U_1 } - { f_\mcalU } / { L }  \right\Vert_2$ when $\mbfX{} = \pmbx{}$.
    Let $\gsize = n / L$. For each possible permutation $\pmbx{}$, $\left\Vert f_{\U_1 } - { f_\mcalU } / { L }  \right\Vert_2 ( \pmbx{} )$ does not change its value under the change of order of the first~$\gsize$ and/or last $n - \gsize$ coordinates~$\pmbx{}$. Hence, $\left\Vert f_{\U_1 } - { f_\mcalU } / { L }  \right\Vert_2$ is $(\gsize, n - \gsize)$-symmetric (Definition~\ref{def: symmetric function}). 

    We prove that for each possible permutation~$\pmbx{}$, for all $i \in \{1, \ldots, \gsize \}, j \in \{ \gsize + 1, \ldots, n \}$, it holds that 
    \begin{equation} \label{ineq: f tau 1 permutaion swap change}
        \left| \left\Vert f_{\U_1 } - \frac{ f_\mcalU }{ L }  \right\Vert_2 ( \pmbx{} ) - \left\Vert f_{\U_1 } - \frac{ f_\mcalU }{ L }  \right\Vert_2 ( \pmbx{}_{i, j} ) \right| \le \sqrt{2}\,,
    \end{equation}
    where the permutation $\pmbx{}_{i,j}$ is obtained from $\pmbx{}$ by transposition of its $i^{\text{th}}$ and $j^{\text{th}}$ coordinates. 
    Then by the McDiarmid Inequality with respect to permutation (Fact~\ref{fact: McDiarmid Inequality Permutation}), we have that for all~$\eta > 0$, 
    $$
        \Pr \left[ \left\Vert f_{\U_\tau } - \frac{ f_\mcalU }{ L } \right\Vert_2 - \E \left[ \left\Vert f_{\U_\tau } - \frac{ f_\mcalU }{ L } \right\Vert_2 \right] \ge \eta \right] \le \exp \left( - \frac{2 \eta^2 }{ ( {n} / {L} ) \cdot 2 } \left( \frac{n - 1 / 2 }{n - ( {n} / {L} ) } \right) \left( 1 - \frac{1}{2 \max \{ ( {n} / {L} ) , n - ( {n} / {L} ) \} } \right) \right)\,. 
    $$
    
    As $L = 2 (\log d ) / \log n \ge 2$, it holds that $n / L \le n - n / L$. It follows that 
    \begin{align*}
        \left( \frac{n - 1 / 2 }{n - ( {n} / {L} ) } \right) \left( 1 - \frac{1}{2 \max \{ ( {n} / {L} ) , n - ( {n} / {L} ) \} } \right) 
        \ge \left( 1 - \frac{ 1 / 2 }{ ( n - ( {n} / {L} ) ) } \right) 
        \ge \frac{1}{2}\,.
    \end{align*}
    Substituting~$\eta$ with $\sqrt{ 2 \cdot ( {n} / {L} ) \cdot  \ln ( {1} / {\beta'} ) }$, we get
    $$
        \Pr \left[ \left\Vert f_{\U_\tau } - \frac{ f_\mcalU }{ L } \right\Vert_2 - \E \left[ \left\Vert f_{\U_\tau } - \frac{ f_\mcalU }{ L } \right\Vert_2 \right] \ge \eta \right] 
        \le \exp \left( - \frac{2 \eta^2 }{ ( {n} / {L} ) \cdot 2 } \cdot \frac{1}{2} \right)
        = \beta'\,.
    $$
    
    \vspace{2mm}
    \textbf{Proof of Inequality~(\ref{ineq: f tau 1 permutaion swap change}).}
    
    For each~$\pmbs{} \in  \Lambda$, define $f_{\U{}_1, \pmbx{} } [ \pmbs{} ] \doteq | \{ u \in \U{}_1 : v^{ (u) } [1 : 1] = \pmbs{} \} |$ to be the frequency of $\pmbs{}$ in the set $\{ v^{ (u) } [1:1]  : u \in \U{}_1 \}$, when the random permutation~$\mbfX{} = \pmbx{}$.
    Let $f_{\U{}_1, \pmbx{} } \doteq \big( f_{\U{}_1, \pmbx{} } [ \pmbs{} ] : \pmbs{} \in  \Lambda \big)$ be the corresponding frequency vector when~$\mbfX{} = \pmbx{}$.
    Similarly, define~$f_{\U{}_1, \pmbx{}_{i, j} }$ to be the frequency vector when~$\mbfX{} = \pmbx{}_{i, j}$.

    Let~$\pmbs{}^{ (x_{i} ) } = v^{ (x_i) }[1:1]$ be the prefix of user~$x_i$, and~$\pmbs{}^{ (x_j ) } = v^{ (x_j) } [1:1]$ be the prefix of user~$x_j$.
    When the permutation of~$\mbfX{}$ changes from~$\pmbx{}$ to~$\pmbx{}_{i, j}$, user~$x_i$ is removed from~$\U_1$ and user~$x_j$ is added into~$\U_1$.
    The frequency of~$\pmbs{}^{ (x_{i} ) }$ in~$\U_1$ decreases by~$1$, and the frequency of~$\pmbs{}^{ (x_j ) }$ increases by~$1$.
    
    It holds that 
    \begin{align*}
        f_{\U{}_1, \pmbx{}_{i, j} } [ \pmbs{} ]
            = \begin{cases}
                f_{\U{}_1, \pmbx{} } [ \pmbs{} ], &
                \forall \pmbs{} \in  \Lambda \setminus \{ \pmbs{}^{ (x_i) },  \pmbs{}^{ (x_j ) } \}\,, \\
                f_{\U{}_1, \pmbx{} } [ \pmbs{} ] - 1, & \pmbs{} =  \pmbs{}^{ (x_{i} ) }\,,   \\
                f_{\U{}_1, \pmbx{} } [ \pmbs{} ] + 1, &
                \pmbs{} = \pmbs{}^{ (x_j ) }\,.
            \end{cases}
    \end{align*}
    
     If we view $f_{\U_1, \pmbx{} } - { f_\mcalU} / { L }$ and $f_{\U_1, \pmbx{}_{i, j} } - { f_\mcalU } / { L }$ as $|\Lambda|$-dimensional vectors, it holds that 
    $$
        \left( f_{\U_1, \pmbx{} } - \frac{ f_\mcalU }{ L } \right) - \left( f_{\U_1, \pmbx{}_{i, j} } - \frac{ f_\mcalU }{ L } \right) = - \pmbe{}_{ \pmbs{}^{ (x_{i} ) } } + \pmbe{}_{ \pmbs{}^{ (x_j ) } }\,,
    $$
    where $\pmbe{}_{ \pmbs{}^{ (x_{i} ) } }$ and $\pmbe{}_{ \pmbs{}^{ (x_j ) } }$ are the $\pmbs{}^{ (x_{i} ) }$-th and the $\pmbs{}^{ (x_j ) }$-th standard basis vectors in $\R^{ |\Lambda | }$ respectively. By the triangle inequality, 
    $$
        \left\Vert  f_{\U_1, \pmbx{} } - \frac{ f_\mcalU }{ L } \right\Vert_2 - \left\Vert f_{\U_1, \pmbx{}_{i, j} } - \frac{ f_\mcalU }{ L } \right\Vert_2 \le \Vert - \pmbe{}_{ \pmbs{}^{ (x_{i} ) } } + \pmbe{}_{ \pmbs{}^{ (x_j ) } } \Vert_2 = \sqrt{2}\,.
    $$
    
\noindent $\blacksquare$

\vspace{10mm}

This finishes the proof of Lemma~\ref{lem: concentration of f_U_tau} for {\bf permutation partitioning}.
Next, we prove Lemma~\ref{lem: concentration of f_U_tau} for {\bf independent partitioning}.

\subsubsection{Proof of Lemma~\ref{lem: concentration of f_U_tau} for \textbf{independent partitioning}}

    Without loss of generality, we prove this lemma for a fixed $\tau \in [L]$. 
    To simplify the notation, for each user $u \in \mcalU{}$, we write $\pmbs{}^{ (u) } \doteq v^{ (u) } [ 1 : \tau ]$ as the prefix of $v^{ (u) }$ with length $\tau$. 
    For each $u \in \mcalU{}$, define the indicator random variable~$X_u$ for the event~$u \in \U_\tau$.
    Let~$\mbfX{}$ be shorthand for $\{ X_1, \ldots, X_n \}$.
    As $\left\Vert f_{\U_\tau } - { f_\mcalU{} } / { L }  \right\Vert_2$ is a function that depends on $\mbfX{}$, we write it explicitly as $\left\Vert f_{\U_\tau } - { f_\mcalU } / { L }  \right\Vert_2(X_1, \ldots, X_n)$ or $\left\Vert f_{\U_\tau } - { f_\mcalU } / { L }  \right\Vert_2(\mbfX{})$ when necessary. 
    For a sequence of values $\pmbx{} = \{ x_1, \ldots, x_n \} \in \{0, 1\}^n$, we use $\left\Vert f_{\U_\tau } - { f_\mcalU } / { L }  \right\Vert_2(x_1, \ldots, x_n)$ or $\left\Vert f_{\U_\tau } - { f_\mcalU } / { L }  \right\Vert_2(\pmbx{})$ to denote the value of $\left\Vert f_{\U_\tau } - { f_\mcalU } / { L }  \right\Vert_2 ( \mbfX{} )$, when $\mbfX{} = \pmbx{}$. Since
    $$
        \left\Vert f_{\U_\tau } - \frac{ f_\mcalU{} }{ L } \right\Vert_2 
        = \left\Vert f_{\U_\tau } - \frac{ f_\mcalU{} }{ L }  \right\Vert_2 
        - \E \left[ \left\Vert f_{\U_\tau } - \frac{ f_\mcalU{} }{ L }  \right\Vert_2 \right] 
        + \E \left[ \left\Vert f_{\U_\tau } - \frac{ f_\mcalU{} }{ L }  \right\Vert_2 \right]\,,
    $$
    we can bound $\left\Vert f_{\U_\tau } - { f_\mcalU } / { L } \right\Vert_2 - \E \left[ \left\Vert f_{\U_\tau } - { f_\mcalU } / { L }  \right\Vert_2 \right]$ and $\E \left[ \left\Vert f_{\U_\tau } - { f_\mcalU } / { L }  \right\Vert_2 \right]$ separately. 
    
    \noindent \textbf{Step 1: Bounding $\E \left[ \left\Vert f_{\U_\tau } - { f_\mcalU } / { L } \right\Vert_2 \right]$.} 
   
    First, similar to the proof for Lemma~\ref{lem: bounds of Delta 1}, we have that 
    $$
        \E \left[ \left\Vert f_{\U_\tau } - \frac{ f_\mcalU{} }{ L }  \right\Vert_2^2 \right] 
        = \sum_{ \pmbs{} \in \Lambda^\tau } \E \left[ \left( f_{\U_\tau } [ \pmbs{} ] - \frac{ 1 }{ L } f_\mcalU{} [ \pmbs{} ]  \right)^2 \right] 
        = \sum_{ \pmbs{} \in \Lambda^\tau } \Var \left[  f_{\U_\tau } [ \pmbs{} ] \right] \le \sum_{ \pmbs{} \in \Lambda^\tau } \frac{1}{L} f_\mcalU{} [ \pmbs{} ] = \frac{n}{L}\,.
    $$
    
    Hence, by Jensen's inequality, $\E \left[ \left\Vert f_{\U_\tau } - { f_\mcalU } / { L }  \right\Vert_2 \right] \le \sqrt{ \E \left[ \left\Vert f_{\U_\tau } - { f_\mcalU } / { L }  \right\Vert_2^2 \right]  } \in O \left( \sqrt{ {n} / {L} } \right)$\,.

    \vspace{5mm}
    \noindent \textbf{Step 2: Bounding $\left\Vert f_{\U_\tau } - { f_\mcalU } / { L } \right\Vert_2 - \E \left[ \left\Vert f_{\U_\tau } - { f_\mcalU } / { L } \right\Vert_2 \right]$. }

    Bounding this is more nuanced than the equivalent term in the proof of Lemma~\ref{lem: bounds of Delta 1}. 
    We could show that $\left\Vert f_{\U_\tau } - { f_\mcalU{} } / { L } \right\Vert_2$ satisfies the Lipschitz condition (Definition~\ref{def:Lipschitz-Condition}) with bound~$1$ and then apply McDiarmid’s Inequality (Fact~\ref{fact:McDiarmid-Inequality}). 
    But this will give us an inferior bound of $O \left( \sqrt{ n \ln ( {1} / {\beta'} ) } \right)$.
    The Lipschitz condition states that for each $u \in \mcalU{}$, if the value of $X_u$ changes, it affects $\left\Vert f_{\U_\tau } - { f_\mcalU } / { L } \right\Vert_2$ by at most~$1$. 
    This completely ignores the variance of $\left\Vert f_{\U_\tau } - { f_\mcalU } / { L } \right\Vert_2$. 
    
    We apply a martingale concentration inequality that incorporates variances (Fact~\ref{fact: martingale-bernstein}). 
    First, we introduce a dummy variable~$X_0 \equiv 0$. For each $0 \le i \le n$, let~$\mbfS{}_i$ be shorthand for $(X_0, \ldots, X_i)$. Define
    $$
        Y_i \doteq \E \left[ \left\Vert f_{\U_\tau } - \frac{ f_\mcalU{} }{ L } \right\Vert_2 \,\middle\vert\, \mbfS{}_i \right]. 
    $$
    Clearly $Y_i$ is a function of~$X_0, \ldots, X_i$ and $\E[ |Y_i| ] \le \infty$. Moreover, 
    $$
        \begin{aligned}
            \E[ Y_{i + 1} \mid \mbfS{}_i ] 
            &= \E \left[ \E \left[ \left\Vert f_{\U_\tau } - \frac{ f_\mcalU{} }{ L } \right\Vert_2 \,\middle\vert\, \mbfS{}_i, X_{i + 1} \right] \,\middle\vert\, \mbfS{}_i \right] 
            &= \E \left[ \left\Vert f_{\U_\tau } - \frac{ f_\mcalU{} }{ L } \right\Vert_2 \,\middle\vert\, \mbfS{}_i \right] 
            &= Y_i. 
        \end{aligned}
    $$    
    The sequence $Y_0, \ldots, Y_n$ satisfies all conditions specified in Definition~\ref{def: martingale} and is a martingale. By definition, $Y_n = \E \left[ \left\Vert f_{\U_\tau } - { f_\mcalU } / { L } \right\Vert_2 \,\middle\vert\, \mbfS{}_n \right] = \left\Vert f_{\U_\tau } - { f_\mcalU } / { L } \right\Vert_2$ as, once the values of $X_1, \ldots, X_n$ are determined, so is $\left\Vert f_{\U_\tau } - { f_\mcalU } / { L } \right\Vert_2$. And we have $Y_0 = \E \left[ \left\Vert f_{\U_\tau } - { f_\mcalU } / { L } \right\Vert_2 \,\middle\vert\, X_0 \right] = \E \left[ \left\Vert f_{\U_\tau } - { f_\mcalU } / { L } \right\Vert_2 \right]$, as $X_0 \equiv 0$.
    
    Next, we show that for all~$i \in [n]$,
    \begin{enumerate}
        \item $| Y_i - Y_{i - 1} | \le 1$; and
        \item $\Var{}[ Y_i \mid X_0, \ldots, X_{i - 1} ] \le {1} / {L}$.
    \end{enumerate}
    Then by the concentration inequality~(Fact~\ref{fact: martingale-bernstein}), it holds that for all $\eta > 0$, 
    $$
        \Pr[ Y_n - Y_0 \ge \eta ] \le \exp \left( - \frac{\eta^2}{ 2 \left( n / L + \eta / 3 \right) } \right)\,.
    $$
    We want to find an~$\eta$, s.t., $\exp \left( - \frac{\eta^2}{ 2 \left( n / L + \eta / 3 \right) } \right) = \beta'$. 
    This leads to an equation 
    $$
        \eta^2 - \frac{2 \cdot \ln ( 1  /  \beta' ) }{ 3 } \eta - \frac{2 \cdot n \cdot \ln ( 1  /  \beta' ) }{L} = 0\,,
    $$ 
    whose solution gives 
    $$
        \eta = \frac{ \ln ( 1  /  \beta' ) }{ 3 } + \frac{1}{2} \sqrt{ \frac{ 2^2 \cdot \ln^2 ( 1  /  \beta' ) }{ 3^2 } + \frac{8 \cdot n \cdot \ln ( 1  /  \beta' ) }{L}} \le \sqrt{ \frac{2 \cdot n \cdot \ln ( 1  /  \beta' ) }{L} } + \frac{ 2 \cdot \ln ( 1  /  \beta' ) }{ 3 }\,.
    $$ 
    We conclude that, with probability at least $1 - \beta'$, it holds that 
    $$
        \left\Vert f_{\U_\tau } - \frac{ f_\mcalU{} }{ L } \right\Vert_2 - \E \left[ \left\Vert f_{\U_\tau } - \frac{ f_\mcalU{} }{ L }  \right\Vert_2 \right] \le \sqrt{ 2 \cdot \frac{n}{L} \cdot \ln \frac{1}{\beta'}  } + \frac{ 2 \cdot \ln ( 1  /  \beta' ) }{ 3 } \le \left( \sqrt{2} + \frac{2}{3} \right) \cdot \sqrt{ \frac{n}{L} \cdot \ln \frac{1}{\beta'}  }\,. 
    $$
    
    \noindent \textbf{Proving that for all $i \in [n], | Y_i - Y_{i - 1} | \le 1$.}  
    
    For all sequences $\pmbx{}_{i - 1} = (0, x_1, \ldots, x_{i - 1} ) \in \{0, 1\}^i$ and for all $x_i \in \{0, 1\}$, we prove that 
    \begin{equation*} 
         \left| \E \left[ \left\Vert f_{\U_\tau } - \frac{ f_\mcalU{} }{ L } \right\Vert_2 \,\middle\vert\, \mbfS{}_{i - 1} = \pmbx{}_{i - 1} , X_i = x_i \right] - \E \left[ \left\Vert f_{\U_\tau } - \frac{ f_\mcalU{} }{ L } \right\Vert_2 \,\middle\vert\, \mbfS{}_{i - 1} = \pmbx{}_{i - 1} \right] \right| \le 1\,. 
    \end{equation*}
    Note that
    $$
        \E \left[ \left\Vert f_{\U_\tau } - \frac{ f_\mcalU{} }{ L } \right\Vert_2 \,\middle\vert\, \mbfS{}_{i - 1} = \pmbx{}_{i - 1} \right] = \underset{ x \in \{0, 1\} }{\E } \left[ \E \left[ \left\Vert f_{\U_\tau } - \frac{ f_\mcalU{} }{ L } \right\Vert_2 \,\middle\vert\, \mbfS{}_{i - 1} = \pmbx{}_{i - 1} , X_i = x \right] \right].
    $$
    Define 
    \begin{equation*} 
         \gamma \doteq \left| \E \left[ \left\Vert f_{\U_\tau } - \frac{ f_\mcalU{} }{ L } \right\Vert_2 \,\middle\vert\, \mbfS{}_{i - 1} = \pmbx{}_{i - 1} , X_i = 0 \right] - \E \left[ \left\Vert f_{\U_\tau } - \frac{ f_\mcalU{} }{ L } \right\Vert_2 \,\middle\vert\, \mbfS{}_{i - 1} = \pmbx{}_{i - 1} , X_i = 1  \right] \right|\,. 
    \end{equation*}
    
    It suffices to prove that $\gamma \le 1$. 
    Let $\pmbx{}_i^+ = (x_{i + 1}, \ldots, x_n )$ denote a possible sequence in $\{0, 1 \}^{ n - i}$. 
    For $x_i \in \{0, 1 \}$, we write 
    $$
        \left\Vert f_{\U_\tau } - \frac{ f_\mcalU{} }{ L } \right\Vert_2 ( \pmbx{}_{i - 1} , x_i, \pmbx{}_i^+ ) 
    \qquad
    \text{for}
    \qquad
        \left\Vert f_{\U_\tau } - \frac{ f_\mcalU{} }{ L } \right\Vert_2 ( x_0, \ldots, x_{i - 1} , x_i, x_{i + 1}, \ldots, x_n ). 
    $$
    Let $\mbfS{}_i^+$ be shorthand for $\{ X_{i + 1}, \ldots, X_n\}$.
    Since~$\mbfS{}_i^+$ is independent of~$\mbfS{}_i$, we have 
    \begin{align*}
        \gamma 
            &= \left| \sum_{ \pmbx{}_i^+ \in \{0, 1\}^{n - i} } \Pr[ \mbfS{}_i^+ = \pmbx{}_i^+ ] \cdot \left( \left\Vert f_{\U_\tau } - \frac{ f_\mcalU{} }{ L } \right\Vert_2 ( \pmbx{}_{i - 1} , 0, \pmbx{}_i^+ ) - \left\Vert f_{\U_\tau } - \frac{ f_\mcalU{} }{ L } \right\Vert_2 ( \pmbx{}_{i - 1} , 1, \pmbx{}_i^+ ) \right) \right|.
    \end{align*}
    
    Consider a fixed $\pmbx{}_i^+ \in \{0, 1\}^{n - i}$. 
    Let $\pmbx{} = (\pmbx{}_{i - 1}, 0, \pmbx{}_i^+ )$ and~$\pmbx{}' = (\pmbx{}_{i - 1}, 1, \pmbx{}_i^+ )$. Let~$f_{\U_\tau, \pmbx{} }$ and~$f_{\U_\tau, \pmbx{}' }$ be the frequency vectors defined on $\U_\tau$, when~$\mbfX{} = \pmbx{}$ and~$\mbfX{} = \pmbx{}'$, respectively. 
    Let $\pmbs{}^{ (i) } = v^{ (i) } [ 1 : \tau]$ be the prefix with length $\tau$ of user~$i$. 
    For $\pmbs{} \in \Lambda^\tau \setminus \{ \pmbs{}^{ (i) } \}$, it holds that $f_{\U_\tau, \pmbx{} } [ \pmbs{} ] = f_{\U_\tau, \pmbx{}' } [ \pmbs{} ]$. Further, $f_{\U_\tau, \pmbx{} } [ \pmbs{}^{ (i) } ] + 1 = f_{\U_\tau, \pmbx{} '} [ \pmbs{}^{ (i) } ]$. 
    
    If we view $f_{\U_\tau, \pmbx{} } - { f_\mcalU{} } / { L }$ and $f_{\U_\tau, \pmbx{} ' } - { f_\mcalU{} } / { L }$ as $|\Lambda^\tau|$-dimensional vectors, it holds that 
    $$
        \left( f_{\U_\tau, \pmbx{} } - \frac{ f_\mcalU{} }{ L } \right) - \left( f_{\U_\tau, \pmbx{} ' } - \frac{ f_\mcalU{} }{ L } \right) = - \pmbe{}_{ \pmbs{}^{ (i) } }\,, 
    $$
    where $\pmbe{}_{ \pmbs{}^{ (i) } }$ is the $\pmbs{}^{ (i) }$-th standard basis vector. By the triangle inequality, 
    $$
        \left\Vert  f_{\U_\tau, \pmbx{} } - \frac{ f_\mcalU{} }{ L } \right\Vert_2 - \left\Vert f_{\U_\tau, \pmbx{} ' } - \frac{ f_\mcalU{} }{ L } \right\Vert_2 \le \Vert \pmbe{}_{ \pmbs{}^{ (i) } } \Vert_2 = 1\,.
    $$
    
    \vspace{3mm}
    \noindent \textbf{Proving that for all~$i \in [n], \Var{}[ Y_i \mid X_0, \ldots, X_{i - 1} ] \le 1 / L$.} 
    
    For each sequence $\pmbx{}_{i - 1}$, 
    \begin{align*}
        \E \left[ \left\Vert f_{\U_\tau } - \frac{ f_\mcalU{} }{ L } \right\Vert_2 \,\middle\vert\, \mbfX{}_{i - 1} = \pmbx{}_{i - 1} \right]
        &= 
        \begin{cases} 
            \E \left[ \left\Vert f_{\U_\tau } - \frac{ f_\mcalU{} }{ L } \right\Vert_2 \,\middle\vert\, \mbfX{}_{i - 1} = \pmbx{}_{i - 1} , X_i = 1 \right]\,, 
            & \text{w.p.}\, 1 / L\,, \\
            \E \left[ \left\Vert f_{\U_\tau } - \frac{ f_\mcalU{} }{ L } \right\Vert_2 \,\middle\vert\, \mbfX{}_{i - 1} = \pmbx{}_{i - 1} , X_i = 0 \right]\,,
            & \text{w.p.}\, 1 - 1 / L \,. \\
        \end{cases}
    \end{align*}
    
    We have also proven that 
    $$
        \left| \E \left[ \left\Vert f_{\U_\tau } - \frac{ f_\mcalU{} }{ L } \right\Vert_2 \,\middle\vert\, \mbfX{}_{i - 1} = \pmbx{}_{i - 1} , X_i = 0 \right] - \E \left[ \left\Vert f_{\U_\tau } - \frac{ f_\mcalU{} }{ L } \right\Vert_2 \,\middle\vert\, \mbfX{}_{i - 1} = \pmbx{}_{i - 1} , X_i = 1  \right] \right| \le 1.
    $$
    It follows that $\Var{} \left[ Y_i \mid \mbfX{}_{i - 1} = \pmbx{}_{i - 1} \right] \le 1 / L \cdot (1 - 1 / L) \le 1 / L$.

\noindent $\blacksquare$

\vspace{10mm}

This finishes the proof of Theorem~\ref{thm: hadaheavy estiamtion error of single element}.
Next, we prove Theorem~\ref{thm: elements in P_tau}.

\subsection{Theorem~\ref{thm: elements in P_tau}} \label{appendix: proof for thm: elements in P_tau}

\textbf{Theorem~\ref{thm: elements in P_tau}.}
{\it 
    Let $\lambda \doteq 3 \cdot \lambda'$.
    With probability at least~$1 - \beta$, it is guaranteed that for each $\tau \in [L]$, and each $ \pmb{s} \in \Lambda^\tau$:\, (1) if $f_\mcalU{} [ \pmb{s} ] \ge \lambda$, then $\pmbs{} \in \mcalP{}_\tau$;\, (2) and for each $\pmbs{} \in \mcalP{}_\tau$, the frequency estimate~$\hat f_\mcalU{} [ \pmb{s} ]$ satisfies $|\hat f_\mcalU{} [ \pmb{s} ] - f_\mcalU{} [ \pmb{s} ] | \le \lambda'$.
    Constructing the $\mcalP{}_\tau$ for all $\tau \in [L]$ has~$\tilde{O}(n)$ running time and~$\tilde{O}(\sqrt{n})$ memory usage.
}

\begin{proof}
    We focus on the estimation errors of prefixes from a fixed set. 
    
    {\bf Definition~\ref{def: candidate set}}
        Define $\Gamma_0 \doteq \{ \bot \}$ to be the set of the empty string, and for $\tau \in [L]$, $\Gamma_\tau \doteq \{ \pmbs{} \in \Lambda^\tau : f_\mcalU{} [ \pmbs{} ] \ge \lambda' \}$, the set of prefixes of length~$\tau$ whose frequency is at least~$\lambda'$. 
        For $\tau < L$, the child set of~$\Gamma_\tau$ is defined as $ \Gamma{}_\tau \times \Lambda  \doteq \{ \pmbs{} = \pmbs{}_1 \circ \pmbs{}_2 : \pmbs{}_1 \in \Gamma{}_\tau, \pmbs{}_2 \in \Lambda \}$, where $\pmbs{}_1 \circ \pmbs{}_2$ is the concatenation of $\pmbs{}_1$ and $\pmbs{}_2$. 
        The \emph{candidate} set is defined as $\Gamma \doteq \cup_{0 \le \tau < L } \left( \Gamma{}_\tau \times \Lambda \right)$. 
    \vspace{3mm}
    
    Note that for each $\tau \in [L]$, we have $|\Gamma_\tau | \le n / \lambda' \le \sqrt{n}$. 
    Hence, $|\Gamma| = \sum_{0 \le \tau < L } \left| \Gamma{}_\tau \times \Lambda \right| \le L\sqrt{n} \cdot \sqrt{n} \in \tilde O (n)$. 
    By applying Theorem~\ref{thm: hadaheavy estiamtion error of single element} with $\beta' = \beta / (nL)$ and  the union bound over all~$\pmbs{} \in \Gamma$, we have:
    
    \vspace{1mm}
    {\bf Corollary~\ref{cor: error of candidate set}}
    {\it
        There exists some constant $C_\lambda$, such that with probability at least $1 - \beta$, it holds that
        {\small
            $
                \max_{ \pmbs{} \in \Gamma } | \hat f_\mcalU{} [ \pmbs{} ] - f_\mcalU{} [ \pmbs{} ] | 
                \le \lambda'
            $
        }, where 
        \vspace{-2mm}
        $$
            \lambda' = ( C_{\lambda} / \eps) \sqrt{ n \cdot (\log d) \cdot (\ln (n / \beta) ) / \ln n }\,.
        $$
    }
    \vspace{-2mm}
    
    \begin{lemma} \label{lem: elements in P_tau}
        Suppose that all strings in~$\Gamma$ having estimation error~$\lambda'$, i.e., for each $\pmbs{} \in \Gamma$, it holds that $| \hat f_\mcalU{} [ \pmbs{} ] - f_\mcalU{} [ \pmbs{} ] | \le \lambda'$. 
        It is follows that for each $\tau \in [L]$, and for each $\pmb{s} \in \Lambda^\tau$:\, i) if $f_\mcalU{} [ \pmb{s} ] \ge 3 \cdot \lambda'$, then $\pmbs{} \in \mcalP{}_\tau$;\,
        ii) if $f_\mcalU{} [ \pmb{s} ] < \lambda'$, then $\pmbs{} \notin \mcalP{}_\tau$.
    \end{lemma}
    
    The proof of the lemma is by induction, and is rather technical.
    We defer it to the end of the proof.
    By now, we finish the proof of the theorem based on this lemma. 
    
    Conditioned all strings in~$\Gamma$ having estimation error~$\lambda'$, 
    via the lemma, we see that for each $\tau \in [L]$, and each $ \pmb{s} \in \Lambda^\tau$, if $f_\mcalU{} [ \pmb{s} ] \ge \lambda = 3 \lambda'$, then $\pmbs{}$ is guaranteed to be added to~$\mcalP{}_\tau$.
    
    Moreover, the second condition of the lemma implies that for each~$\tau \in [L]$, and each~$\pmbs{} \in \mcalP{}_\tau$, it holds that~$f_\mcalU{} [ \pmb{s} ] \ge \lambda'$.
    By the definition of~$\Gamma_\tau$, we have~$\pmbs{} \in \Gamma_\tau$. 
    Hence~$\mcalP{}_\tau \subset \Gamma_\tau$.
    By the assumption that all string in~$\Gamma$ have estimation error bounded by~$\lambda'$, we the see that strings in $\mcalP{}_\tau$ have estimation errors bounded by~$\lambda'$.
    Lastly, note that for $0 \le \tau < L$, the {\it modified search strategy} constructs~$\mcalP{}_{\tau + 1}$ based on~$\mcalP{}_\tau$, by checking strings in~$\mcalP{}_\tau \times \Lambda$.
    It follows that all such strings belong to~$\Gamma_\tau \times \Lambda \subset \Gamma$.
    Therefore, the {\it modified search strategy} only inspects the frequencies of the strings from~$\Gamma$, in order to construct the $\mcalP{}_\tau, \tau \in [L]$.
    
    To analyze the running time and memory usage, observe that the {\it modified search strategy} invokes~$L$ frequency oracles.
    By Theorem~\ref{thm: our oracle}, they have total construction time~$\tilde{O}(n)$ and memory usage~$\tilde{O}( \sqrt{n} )$.
    Since each frequency query takes~$\tilde O(1)$ time, and all strings queried belong to~$\Gamma$, the total query time is bounded by $|\Gamma | \in \tilde O(n)$, which finishes the proof.
\end{proof}

\begin{proof}[Proof for Lemma~\ref{lem: elements in P_tau}]
    Suppose that all strings in~$\Gamma$ having estimation error~$\lambda'$, i.e., for each $\pmbs{} \in \Gamma$, it holds that $| \hat f_\mcalU{} [ \pmbs{} ] - f_\mcalU{} [ \pmbs{} ] | \le \lambda'$. 
    We prove the claims by induction.
    
    When $\tau = 1$, $\mcalP{}_{\tau - 1} = \Gamma_{\tau - 1} = \{ \bot \}$. 
    Then $\mcalP{}_0 \times \Lambda = \Gamma_0 \times \Lambda = \Lambda$. 
    For all $\pmbs{} \in \Lambda$, if $f_\mcalU{} [ \pmb{s} ] \ge 3 \cdot \lambda'$, then it holds that 
    $$
        \hat f_\mcalU{} [ \pmbs{} ] \ge f_\mcalU{} [ \pmbs{} ] - \lambda' \ge 2 \lambda'\,. 
    $$
    According to the definition of~$\mcalP{}_1$, the element~$\pmbs{}$ belongs to~$\mcalP{}_1$. 
    On the other hand, if $f_\mcalU{} [ \pmb{s} ] < \lambda'$, then
    $$
        \hat f_\mcalU{} [ \pmbs{} ] \le f_\mcalU{} [ \pmbs{} ] + \lambda' < 2 \lambda'\,.
    $$
    It is guaranteed that $\pmbs{} \notin \mcalP{}_1$. Consequently, for all $\pmbs{} \in \mcalP{}_1$, we have $f_\mcalU{} [ \pmbs{} ] \ge \lambda'$, which implies that $\mcalP{}_1 \subseteq \Gamma_1$.
    
    Let $\tau > 1$ and assume that the claims holds for $\tau -1$.
    For all $\pmbs{} \in \Lambda^\tau$, let $\pmbs{}[1 : \tau - 1] \in \Lambda^{\tau - 1}$ be the prefix of $\pmbs{}$ of length $\tau - 1$. If $f_\mcalU{} [ \pmb{s} ] \ge 3 \cdot \lambda'$, then it holds that 
    $$
            f_\mcalU{} [ \pmbs{} [1: \tau - 1] ] \ge f_\mcalU{} [ \pmbs{} ] \ge 3 \lambda'\,.
    $$
    By the induction hypothesis, $\pmbs{} [1 : \tau - 1] \in \mcalP{}_{\tau - 1} \subseteq \Gamma_{\tau - 1}$. 
    Hence, 
    $\pmbs{} \in \mcalP{}_{\tau - 1} \times \Lambda \subseteq \Gamma_{\tau - 1} \times \Lambda$, and 
    $$
            \hat f_\mcalU{} [ \pmbs{} ] \ge f_\mcalU{} [ \pmbs{} ] - \lambda' \ge 2 \lambda'\,.
    $$
    According to the definition of $\mcalP{}_\tau$, the element $\pmbs{}$ belongs to $\mcalP{}_\tau$. 
    
    On the other hand, if $f_\mcalU{} [ \pmb{s} ] < \lambda'$, there are two possible cases. First, if $\pmbs{} [ 1 : \tau - 1] \notin \mcalP{}_{\tau - 1}$, then by the definition of $\mcalP{}_\tau$, it is guaranteed that $\pmbs{} \notin \mcalP{}_\tau$. Second, if $\pmbs{} [ 1 : \tau - 1] \in \mcalP{}_{\tau - 1}$, by the inductive hypothesis that $\mcalP{}_{\tau - 1} \subset \Gamma_{\tau - 1}$, we have $\pmbs{} \subset \mcalP{}_{\tau - 1} \times \Lambda \subset \Gamma_{\tau - 1} \times \Lambda$. 
    Then
    $$
        \hat f_\mcalU{} [ \pmbs{} ] \le f_\mcalU{} [ \pmbs{} ] + \lambda' < 2 \lambda'\,.
    $$
    It is guaranteed that $\pmbs{} \notin \mcalP{}_\tau$. Consequently, for all $\pmbs{} \in \mcalP{}_\tau$, we have $f_\mcalU{} [ \pmbs{} ] \ge \lambda'$, which implies that $\mcalP{}_\tau \subseteq \Gamma_\tau$.  
     
\end{proof}

\end{document}